\newcommand{\one}{\mathbbm{1}}
\definecolor{purple}{rgb}{0.55,0.2,0.90}
\definecolor{burntorange}{rgb}{0.8, 0.33, 0.0}
\newcommand{\mn}[2]{\widehat{m}_{#1,(#2)}}
\newcommand{\phie}{f_\epsilon}
\newcommand{\bs}{\boldsymbol}
\newcommand{\mc}{\mathcal}
\newcommand{\wh}{\widehat}
\newcommand{\wt}{\widetilde}
\newcommand{\iid}{\stackrel{iid}{\sim}}
\newcommand{\R}{\mathbb{R}}
\renewcommand{\hat}{\widehat}
\newcommand{\nx}{n_x}
\newcommand{\ny}{n_y}
\newcommand{\Ne}{N} %% \ne taken
\DeclareMathOperator*{\argmin}{argmin} %%"*" means subscript underneath in display
\newtheorem*{assumption*}{\assumptionnumber}
\providecommand{\assumptionnumber}{}
\newenvironment{assumption}[1]
 {%
  \renewcommand{\assumptionnumber}{Assumption #1}%
  \begin{assumption*}%
  \protected@edef\@currentlabel{#1}%
 }
 {%
  \end{assumption*}
 }
\newtheorem{theo}{Theorem}
 \newtheorem{prop}{Proposition}
\newcommand{\lp}{\left(} %% )
  \newcommand{\rp}{\right)}
\newcommand{\lb}{\left\{} %% )
  \newcommand{\rb}{\right\}}
\newcommand{\RR}{\mathbb{R}}
\newcommand{\FF}{{\mathbb F}}
\newcommand{\HH}{{\mathbb H}}
\newcommand{\NN}{\mathbb{N}}
\newcommand{\MM}{\mathbb{M}}
\global\long\def\inv#1{\frac{1}{#1}}
\DeclareMathOperator*{\Var}{Var}
\begin{document}

\title{Unlinked Monotone  Regression}
\author{\name Fadoua Balabdaoui \email fadoua.balabdaoui@stat.math.ethz.ch \\
  \addr Seminar f\"ur Statistik\\
          ETH, Zurich \\
          R\"amistrasse 101 \\
          8092, Zurich, Switzerland 
\AND
  \name Charles R.\ Doss \email cdoss@umn.edu \\
  \addr School of Statistics\\
  University of Minnesota\\
  Minneapolis, MN 55455, USA 
  \AND
 \name C\'ecile Durot \email cecile.durot@gmail.com \\
  \addr Modal'x \\
          Universit\'e Paris Nanterre \\
          F92000, Nanterre, France
  }

\editor{Gabor Lugosi}

% % \printindex

% %% charles notes list 
% Temporary notes from Charles on modifications (june 18 2020); delete after reading/checking. 

% \begin{enumerate}

% \item Added acknowledgements section at end of paper per JMLR style guidelines.

% \item Modified wording in abstract.

% \item Added new short first paragraph of paper to try to introduce problem first from a data/scientific perspective -- please look over.

% \item Added a (highlighted) paragraph on motivations on page 5.

% \item Realized we needed to reorganize what were previously Section 4 (Unknown noise case) and Section 5 (Gradient and simulations and real data)  again.  Now: in Section 4 we do the gradient descent algorithm.  This needs to come before what is now section 5, the extension to the case of unknown noise, since we discuss estimating the gradient there.  Then comes section 6 where we do simulations and real data analysis, which makes use of both section 4 and 5. 
  
% \end{enumerate}

% Further notes:
% \begin{enumerate}
% \item Page count: I do not think JMLR has an upper limit on page count.  I have seen long papers published there recently.  (e.g. saw one that was 67 pages.)  So I think we are OK on page count.
% \end{enumerate}

% \newpage

\maketitle

\begin{abstract}%
  We consider so-called univariate unlinked (sometimes ``decoupled,'' or ``shuffled'') regression when the unknown regression curve is monotone.  In standard monotone regression, one observes a pair $(X,Y)$ where a response $Y$ is linked to a  covariate $X$ through the model $Y= m_0(X) + \epsilon$, with $m_0$ the (unknown) monotone regression function and $\epsilon$ the unobserved error (assumed to be independent of $X$).  In the unlinked regression setting one gets only to observe a vector of realizations from both the response $Y$ and from the covariate $X$ where now $Y \stackrel{d}{=} m_0(X) + \epsilon$.  There is no (observed) pairing of $X$ and $Y$.  Despite this, it is actually still possible to derive a consistent non-parametric estimator of $m_0$ under the assumption of monotonicity of $m_0$ and knowledge of the distribution of the noise $\epsilon$.  In this paper, we establish an upper bound on the rate of convergence of such an estimator under minimal assumption on the distribution of the covariate $X$.  We discuss extensions to the case in which the distribution of the noise is unknown.  We develop a second order algorithm for its computation, and we demonstrate its use on synthetic data.  Finally, we apply our method (in a fully data driven way, without knowledge of the error distribution) on longitudinal data from the US Consumer Expenditure Survey.
\end{abstract}

\begin{keywords}
 deconvolution, quantile, monotone regression, rates, shuffled, uncoupled, unlinked 
\end{keywords}

\tableofcontents

\section{Introduction}
\label{sec:introduction}

An important part of data science is the construction of a data set;
%% An important step in the data science pipeline is the construction of a data set;
nowadays,
because there are so many different entities collecting increasing amounts of
data, data sets are often constructed by combining separate sub-data sets or
datastreams.  Also, data sets sometimes undergo some form of anonymization:
this can be due to the increasing prevalence of privacy concerns, or in some cases due to
concerns about having limited data-transmission bandwidth  when many separate sensors are streaming data to a central server  \citep{Pananjady:eo}.  Thus,
it is increasingly common for data scientists/analysts to want to relate variables in one
data set to variables in another data set when the two data sets are {\it unlinked}.  In this paper, we consider the problem of unlinked regression, specifically when the regression function is assumed to satisfy a monotonicity constraint. 

In the standard regression setting, we have
\begin{equation}
  \label{eq:standard regression}
   Y_i = m_0(X_i) + \epsilon_i,  \qquad E(\epsilon_i) = 0,  \qquad i=1,\ldots, n,
\end{equation}
%%$  Y_i = m_0(X_i) + \epsilon_i$, $i=1,\ldots, n,$
for a random noise variable $\epsilon_i$ that is independent of $X_i$. The most basic assumption of this model is that for each index $i=1, \ldots, n$, the {\it pair} $(X_i, Y_i)$ is observed.  For now, we assume that the covariates $X_i, \ i=1,\ldots, n $ are univariate random variables.  A more general model than the above standard regression model is the {\it shuffled} regression, in which we do not get to see the pairs $(X_i, Y_i)$, $i=1, \ldots, n$; rather, we only observe $(X_1,\ldots, X_n)$ and $(Y_1, \ldots, Y_n)$, without knowing which $X_i$ is paired with which $Y_i$.  Thus, we have the same model as \eqref{eq:1} except that the first equality is now only an equality in distribution and there is an unknown permutation $\pi$ on $\{1,\dots,n\}$ such that $Y_i=m_0(X_{\pi(i)})+\epsilon_i$ for all $i$. This happens for instance in case of anonymized data. An even more general model is the {\it unlinked} regression model that we consider in this paper, where again, the first equality only holds in distribution but where in addition, the $X_i$'s could be observed on different individuals from the $Y_i$'s so that the two samples are not necessarily connected through a permutation $\pi$. This happens for instance if the two samples have been collected independently (by separate entities). The number of observed $X_i$'s may even differ from the number of observed $Y_i$'s so we observe independent and identically distributed (i.i.d.) variables $Y_1,\dots,Y_{\ny}$ and i.i.d.\ variables $X_1,\dots,X_{\nx}$ such that %% for all $i=1,\dots,\ny$,
\begin{equation}
  \label{eq:2}
  Y \stackrel{d}{=}  m_0(X)  + \epsilon
\end{equation}
where $Y \sim Y_i$, $X \sim X_i$,
%% has the same distribution as the observed covariates $X_1,\dots,X_{\nx}$
and $\epsilon$ is independent of $X$ with distribution function $\Phi_{\epsilon}$.  In shuffled or unlinked regression models, it may seem hopeless to even try to learn the regression function $m_0$ since $m_0$ is in general not even identifiable (see Section \ref{sec:setup-term-notat} below); however, it turns out that if $m_0$ is assumed to be monotonically increasing, as we will do in this paper, and if the error distribution is known a priori, then in fact one can construct consistent estimators of $m_0$. (We will return in Section~\ref{sec: ext} to discuss  the case where the error distribution is unknown.)

Unlinked estimation can be considered in a variety of settings (e.g., \cite{DeGroot:1980ko}).   It appears that unlinked monotone regression was recently introduced by \cite{carp16}. 
 %% As explained by the authors, absence of link between responses and covariates occurs for example when data are not collected by the same entity.
One motivating example discussed in \cite{carp16} is about expenditure on goods or services, such as housing: the price an individual is willing to pay for housing is expected to be monotonically increasing (at least on average, if not individually) as a function of the individual's salary.
 However, estimating the monotonic relationship is hindered by the simple fact that the data on wages and  housing transactions are often gathered by different agents.
There are many other motivating examples for unlinked or shuffled regression, besides the ones already discussed.
 (In some examples, there may be information allowing partial matching; see our discussion point \ref{item:2} in Section~\ref{sec: concl}.)
 In flow cytometry, cells suspended in a fluid flow past a laser, and the response (scattering of light) reveals information about the cell, which may be explained by its features (e.g., gene expression).  However, the order of the cells as they pass the laser is unknown, so we are in a shuffled regression setting \citep{Abid:2017ws}.
 In ``image stitching'' (related to the so-called pose-correspondence problem) one wants to find the unknown correspondence between point clouds constructed from multiple camera angles of the same image
 \citep{Pananjady:2018hd}.
 Several other motivating examples are discussed by
 \cite{Pananjady:2018hd} (in the context of permuted/shuffled linear regression).
 
The method of \cite{carp16} is based on the fact that monotone unlinked
regression can be rephrased as the so-called deconvolution problem, as is
apparent from (\ref{eq:2}).  Below, we will discuss links to this problem in
more detail.
%  . Indeed in model (\ref{eq:2}),
% \begin{eqnarray}\label{eq: deconvolution}
% \mathcal{L}(Y)  = \mathcal{L}(m_0(X) \star \epsilon)
% \end{eqnarray}
% where
%% \comC{$Y$ denotes a generic random variable with the same distribution as the observed $Y_i$'s and}
%% for some random variable $Z$, $\mathcal{L}(Z)$ denotes the law of $Z$.
Note that, as is also true in the deconvolution setting,  $m_0$ is not identifiable if we do not know the distribution of the noise $\epsilon$ (or have at least an estimate thereof). The following simple example explains why. Suppose that $m_0(x) = 2 x$  and $X \sim \mathcal{N}(0,1)$ is independent of  $ \epsilon  \sim \mathcal{N}(0, 1)$. Then, this model is the same as $m_0(x) = x$ and $X \sim \mathcal{N}(0,1)$ independent of $\epsilon   \sim \mathcal{N}(0,4)$.
%% Here $X \perp Z$ means ``$X$ is independent of $Z$.''
%, \cite{carp16} assume that  distribution of $\epsilon$ is fully known.
Let $F_0$ be the cumulative distribution function (CDF) of $X$ and $L_0$ the CDF of $m_0(X)$.  If both $m_0$  and $F_0$ are assumed to be one-to-one, then $L_0$ satisfies
\begin{equation}
  \label{eq:3}
  L_0(w)    = P(m_0(X)  \le w)  =  F_0 \circ m^{-1}_0(w)
\end{equation}
for all $w$ in the range of $m_0$, and we have
\begin{equation}\label{eq: mLF}
  m _0 = L^{-1}_0 \circ F_0.
\end{equation}
Thus, the estimator constructed by \cite{carp16} takes the form
\begin{eqnarray}\label{Estimf}
  \widetilde m(x)  =  \widetilde{L}^{-1}_{\ny} \circ \mathbb{F}_{\nx}(x)
\end{eqnarray}
where $\widetilde{L}_{\ny}$ is an estimator of $L_0$ obtained by deconvolution methods and is based on the sample $(Y_1, \ldots, Y_{\ny})$ and knowledge of the distribution of $\epsilon$, and
where $\mathbb{F}_{\nx}$ can be taken for example to be the empirical distribution function of $F_0$ based on the sample $X_1,\dots,X_{\nx}$.
%\remCRD{I changed $\wh m(x)$ to $\widetilde m(x)$ in \eqref{Estimf}, so the CS estimator notation is different than the notation for our M-estimator}
%Here, $h^{-1}$ denotes the generalized inverse of a given non-decreasing function $h$.
%\remCRD{ I don't think the generalized inverse is the right tool here.}\comf{why?}
Thus, the estimator in (\ref{Estimf}) at a point $x$ is equal to the deconvolution estimator of the quantile of $m_0(X)$ corresponding to the  random level $ \mathbb{F}_{\nx}(x)$.  In the presence of ``contextual variables'' (i.e., covariates
 that are paired with both $X$ and $Y$), \cite{carp16}  gave some consistency and rate of convergence results in their Theorem 3.2 under the assumption that $m_0$  and $F_0^{-1}$ belong to H\"older classes.  \cite{carp16} do not discuss how to choose  the optimal bandwidth, as the main focus in
that paper is to show how their estimation approach can  be easily implemented for real data sets.
 %%%%% %%%%%

Beside \cite{carp16}, the only other work of which we are aware on a similar problem is the very recent article of \cite{Rigollet:2018wia}.  In their setting, the authors consider a shuffled monotone regression model  in a fixed design setting but use the term \lq\lq uncoupled\rq\rq \ to describe the problem.  In fact, the authors do not make any attempt to recover the unknown permutation, and focus entirely, as we do in this paper,  on estimating the unknown regression function.  \cite{Rigollet:2018wia} assume that the known distribution of the noise is sub-exponential and the true monotone function is bounded by some known constant, but they do not make any smoothness assumption on that function. Using the Wasserstein's distance and arguments from optimal transport, they showed that their estimator converges to the truth at a rate no slower than $\log \log n / \log n$ and that this rate is minimax when the distribution of the noise is Gaussian.  Although \cite{Rigollet:2018wia} describe in their Section 2.2 an algorithm for computing their monotone estimator, the authors do not present simulation results.
%It comes as a surprise that the rate of convergence they obtain does not depend on the smoothness of the distribution of the noise, whereas optimal rates of convergence in the deconvolution problem are closely connected to this smoothness; see e.g. \cite{Fan91}. For instance, \cite{Rigollet:2018wia}'s rate applies in both cases of a double exponential and a Gaussian error which are of different smoothness: the former is ordinary smooth of order 1 while the latter is supersmooth of order 2 in Fan's terminology; see Section 1 in \cite{Fan91}.

It is worth mentioning that more research seems to have been done in the shuffled (permuted) linear regression  model than in the monotone regression model. We refer here to the work of \cite{Abid:2017ws},  \cite{Pananjady:2017kj}, \cite{Pananjady:2018hd},  and \cite{Unnikrishnan:2018gp}. The main focus in the former three papers is to find conditions on the signal-to-noise ratio that guarantee recovery of the unknown permutation. \cite{Abid:2017ws} show that the least-squares estimator in this model is inconsistent in general, but they construct a method-of-moments type estimator and prove that this estimator is consistent assuming that $E(X_i) \ne 0$. A common feature of these works is to restrict attention to the case of Gaussian noise.   As we show in the present paper, the noise distribution may be of fundamental importance in these problems.

As announced above, we give now some more detail to the existing link between unlinked monotone regression and estimation in deconvolution problems. As noted earlier,  \eqref{eq: mLF} shows clearly that there is a tight connection to quantile estimation in a deconvolution setting.  In fact, consider the deconvolution setting in which one observes $n$ i.i.d.\ copies of $Y$ where  $Y = X + \epsilon$ for independent random variables $X$ and $\epsilon$. In this problem, the goal is learn the distribution of $X$ under the assumption that the distribution of $\epsilon$ is known (such an assumption can be relaxed if this distribution can be estimated).  Estimation of the distribution and quantile functions of $X$  has been considered in \cite{hall2008estimation}, and revisited in \cite{dattner2011deconvolution,dattner2016adaptive} under slightly different assumptions. In particular, contrary to \cite{hall2008estimation}, no moment assumptions are made about the covariate $X$ or $\epsilon$ in
  \cite{dattner2011deconvolution,dattner2016adaptive}. There, the smoothness of the density of $X$ is measured in terms of belonging to  Sobolev or H\"older balls.
% respectively. \remCRD{ 'respectively' refers to which papers, here?  I guess in the two dattner et al papers, both sobolev and holder are considered (respectively) and holder is considered in hall and lahiri?} 
In the case where the error is ordinary smooth of order larger than $1/2$, \cite{dattner2011deconvolution} recover the rates of convergence that \cite{hall2008estimation} obtained for the integrated risk when estimating the distribution function, and moreover  provide new rates of convergence for the case of smoother error distributions; the square-root rate is shown to be achieved for smooth enough distribution of $X$.  The convergence results obtained in these  previous papers do not apply directly in this present paper, as we do not
assume that the covariate $X$ (from \eqref{eq:2}) admits even a Lebesgue density, which also means that $m_0(X)$ is not assumed to have a density.

While  \cite{carp16} and \cite{Rigollet:2018wia} are the only other articles of which we are aware on unlinked  monotone regression besides the present paper,  the classical isotonic  regression model given in \eqref{eq:standard regression} is a very well-known estimation problem with a vast literature.  The most known estimator in this problem is certainly the Grenander-type estimator, obtained by taking the right derivative of the greatest convex minorant of the cumulative sum diagram associated with the data $(X_i, Y_i), i =1, \ldots, n$  \citep{BBBB,Robertson:1988vf,Groeneboom:2014hk}.  The pointwise non-standard rate of convergence of the Grenander estimator is $n^{-1/3}$ if $m_0$ is continuously differentiable with a non-vanishing derivative and $\sqrt n$ in case $m_0$ is locally flat \citep{Groeneboom:1983ud,carolan99,Zhang:2002ed,cator2011,Chatterjee:2015eza}. 
%In defining this estimator, no smoothness assumptions on $m_0$ are needed (but the smoothness level  will affect the rates of convergence and limit distribution).
Asymptotic properties, including the pointwise limit distribution and convergence in the $L_p$-norms for $p \in [1, 5/2)\cup\{\infty\}$  have been fully described in \cite{brunk70},  \cite{durot2002}, \cite{durot2007} and \cite{durot2011}; see also  \cite{groe85,groe89}.  One can also combine kernel estimation with the monotonicity constraint to improve rates of convergence if $m_0$ has higher orders of smoothness \citep{mammen91,durot2014}.

In this paper, we investigate  the unlinked  monotone regression following  the method introduced by \cite{edelman} for estimating the mixing distribution in a mixture problem with Normal noise. Let $(X_1, \theta_1), \ldots, (X_n, \theta_n)$ be independent random vectors such that $X_i | \theta_i \sim \mathcal{N}(\theta_i, 1)$, that is, conditionally on $\theta_1, \ldots, \theta_n$ the random variables  $X_1, X_2, \ldots, X_n$  are generated from the unknown distributions $\Phi(. - \theta_1),  \Phi(. - \theta_2), \ldots, \Phi(\cdot - \theta_n)$, where $\Phi$ denotes the cumulative distribution function of $\mathcal{N}(0,1)$. The approach of \cite{edelman} consists of finding the vector $(\tilde{\theta}_1, \ldots, \tilde{\theta}_n)$ which minimizes the integrated difference between $n^{-1} \sum_{i=1}^n \Phi(\cdot - \theta_i)$ and the empirical distribution $\mathbb F_n$ based on the observations $X_i, i =1, \ldots, n$ among all vectors $(\theta_1, \ldots, \theta_n)$.  As \cite{edelman} already noted, the normal distribution can be replaced by other distributions, which is exactly what we do here. The merits of this approach are the facts that it does not depend on some bandwidth, and that it  is easily implementable.  The link between our problem and the work \cite{edelman} is made clear in Section \ref{sec: estim}. There,  we introduce our estimator, which we call the minimum contrast estimator, and we establish its existence and some of its important features in the case of equal sample sizes for the responses and covariates.  In Section \ref{sec: rate},  we establish rates of convergence under some fairly general conditions on the distribution function  of the covariates. In that section, we assume that the noise distribution is known and distinguish between three cases for its smoothness: (1) ordinary smooth, (2) supersmooth and (3) discrete with finite number of support points.  The convergence rates in cases (1) and (2) are derived using some classical Fourier transform techniques, while in case (3) a very different approach is employed, which makes use of a recent result in \cite{Meis}. In the proofs, we use a conversion device which allows us to link the convergence rate of the estimator to that of its generalized inverse, an interesting result in its own right. Since our method allows for different sample sizes for the responses and covariates, we extend the construction of the estimator to this case and derive the corresponding convergence rate in the three aforementioned smoothness cases. In Section \ref{sec:morerates}, we show that the estimator achieves the parametric rate in estimating the moments of $m_0(X)$ in case (3) and also when the noise is known to be uniformly distributed on a compact.    Although our estimator cannot be shown to be unique,  we prove in Section \ref{sec: fenchel} that any solution has to satisfy a necessary optimizing condition. This condition is used to develop a gradient-descent algorithm to compute the estimator, see Section \ref{sec: comp}.   In Section \ref{sec: ext}, we discuss how our method can be  extended to 
%the case where the two sample sizes $\nx$ and $\ny$  of the vectors of covariates and responses respectively are different from each other, and also to 
the more realistic situation where the noise distribution is unknown.  In Section \ref{sec:demonstr-synth-real}, we illustrate our approach through synthetic and real data. We finish this manuscript with some concluding remarks and future research directions; see Section \ref{sec: concl}. Technical proofs are deferred to an Appendix.

%\remCRD{Do we need to explain Edelman's approach or can we simply say that our approach is inspired by his?}
%\remCRD{Need add/clarify more on the contributions of this paper still.}
 %In order to derive convergence rates, we will assume ordinary smoothness of the noise distribution with .  In the next section we assume that this distribution is known. There,  we define our minimum contrast estimator of $m_0$ and derive its  In Section \ref{sec: minimax}, we derive ...  
 %\comf{The main advantage of the method we propose here is that it does not reply on choosing an optimal  bandwidth.

%% \input{_source-tex/litrev_CS-HL.tex}

\section{The Minimum Contrast Estimator: Existence}\label{sec: estim}

\subsection{Setup, Terminology, and Notation}
\label{sec:setup-term-notat}

Let $m_0$ be the monotone function appearing in the model in (\ref{eq:2}), and in which we are interested. In the model that we consider, the response $Y$ has the same distribution as the convolution of $m_0(X)$ and the noise $\epsilon$ with $m_0$ monotone non-decreasing. Note for the case where  $m_0$ is non-increasing it is enough to consider $-Y$ instead of $Y$ and all our results will still apply.
Denote by $\mc M$ the set of all bounded non-decreasing and right continuous functions defined on $[0,1]$. This class accomodates for the assumption made in the sequel that the covariate $X\in[0,1]$ almost surely.

It is worth mentioning that without the monotonicity assumption, the function $m_0$ is not identifiable in general, even in the simple case where $m_0(X)$ is observed without noise, that is in the case where $Y=m_0(X)$ is observed. Precisely, the distribution of $X$  together with the distribution of $m_0(X)$ does not pin down the function $m_0$. A simple counterexample (that was suggested by a referee)
is where $ X$ has the uniform distribution on $[0,1]$:
\begin{eqnarray*}
m_1(x) = x \ \textrm{and} \ \ m_2(x) =  1- 2 \big \vert x - \frac{1}{2} \big \vert. 
\end{eqnarray*}
Then,  it is easy to see that both $m_1(X)$ and $m_2(X)$ have the uniform distribution on
$[0,1]$. On the other hand,  if $m_0$ is monotone, then it can be determined on the support of $X$ using knowledge about the distribution of $X$ and $m_0(X)$. This  identifiability property is proved in the following proposition.

\begin{prop}\label{prop: identifiability} Let $X$ be a real valued random variable with a continuous distribution on $[0,1]$, and let $m_1$ and $m_2$ be non-decreasing and right-continuous functions on the support of $X$.  If $m_1(X)$ has the same distribution as $m_2(X)$, then $m_1=m_2$ on the support of $X$. 
\end{prop}

Here, we describe the working framework of our estimation approach. We observe two independent samples $(X_1, \ldots, X_n)$ and $(Y_1, \ldots, Y_n)$ such that the following holds.
  \begin{assumption}{A0}
    \label{assm:A0:basic:m0-Xi}
    $X_1,\ldots, X_n \iid F_0$, $0  \le X_i \le 1$ almost surely, and $Y_1, \ldots, Y_n \iid H_0$. %% diff sample sizes allowed?
    The unobserved error $\epsilon$ is independent of $X$, satisfies $E(\epsilon)=0$, and has CDF $\Phi_\epsilon$. Moreover, $m_0\in\mathcal M$.
% such that  
    %% $X_1, \ldots, X_n$ lie in $[0,1]$,
    %$\epsilon_i \iid \Phi_\epsilon$, $\epsilon_i$ is independent of $X_i$, and $E(\epsilon_i) = 0 $.
  \end{assumption}

Note that there are no restrictions on the relationships between the $X_i$'s and the $Y_j$'s, other than the equality in distribution in (\ref{eq:2}). Then,  $F_0$, $H_0$, and $\Phi_\epsilon$ are the true CDF's of $X_i$, $Y_i$, and $\epsilon_i$, respectively.  Let $\FF_n(x) = n^{-1} \sum_{i=1}^n \one_{[X_i,\infty)}(x)$ and 
  $\HH_n(x) =  n^{-1} \sum_{i=1}^n \one_{[Y_i, \infty)}(x)$ where $x \mapsto \one_A(x)$ is the indicator function for the set $A$. Also, let $\| \cdot \|_\infty$ be the supremum norm, i.e.\ $\| m \|_\infty = \sup_{t \in [0,1]} |m(t)|$.   For $K > 0$, let $\mathcal{M}_K$ be the set of functions $m\in{\cal M}$ such that
  $\Vert m \Vert_\infty \le K$.  For $m \in \mc M$, $m^{-1}$ denotes the {\it generalized inverse} of $m$, i.e.\ 
\begin{equation}\label{eq: inverse}
m^{-1}(y) := \inf \lb x \in[0,1] : m(x) \ge y \rb
\end{equation}
  where the infimum of an empty set is defined to be $1$. Hence, we have $m^{-1}(y)=1$ for all $y> m(1)$.

  In this section, we assume that $\Phi_\epsilon$ is known.  This assumption will be relaxed
  %% as we do
  in Section~\ref{sec: ext}.
  Also, although we take the respective sizes of the samples of covariates and responses to be equal, our method can be easily adapted to the case where these sizes are different. In that case, the convergence rate of our estimator is driven by the minimum of the sample sizes;
  see Subsection~\ref{sec:conv-rate-case-diff}.
  % \comf{shown to be...}

% Recall that the setting in shuffled monotone regression is as follows:  we observe i.i.d.\ random variables $X_1, \ldots, X_n \iid F_0$ and i.i.d.\ $Y_1, \ldots,  Y_n \iid H_0$ such that
% $  Y_i \stackrel{d}{=}  m_0(X_i)  + \epsilon_i,$
% where $m_0$ is an unknown monotone function defined on $\RR$,  $\epsilon_1, \ldots, \epsilon_n \iid \Phi_\epsilon$ are i.i.d.\ random variables representing the unobserved noises, and which are assumed to be independent of $X_1, \ldots, X_n$ and have finite expectation equal to $0$.
%\remCRD{Can we allow the $X$ and $Y$ samples to have different sample sizes?}\textbf{Fadoua: I think so but the rate will  be driven, I think, by  the minimum of the two. }

 %In the following, we denote by
% \begin{itemize}
% \item  $F_0$ the true distribution function of $X_1, \ldots, X_n$ and $\FF_n$ its empirical estimator,
% \item  $H_0$ the true distribution function of $Y_1, \ldots, Y_n$ and $\HH_n$ its empirical estimator,  
% \item   $\Phi_\epsilon$ the true distribution function of $\epsilon_1, \ldots, \epsilon_n$,
% \item $\mathcal{F}$ the set of all distribution functions on $\mathbb{R}$,
% \item $\Vert m \Vert_\infty$ the supremum norm of $m$, that is, $\sup_{t \in \RR} \vert m(t) \vert$,
% \item $\mathcal{M}$  the class of monotone bounded non-decreasing  and right continuous defined from $\mathbb{R}$ to $\mathbb{R}$ 
% \item $\mathcal{M}_K$, for some $K > 0$, the set of functions $m\in{\cal M}$ such that
% $\Vert m \Vert_\infty \le K$. 
% \end{itemize} 

 Let $\mathcal{F}$ be the set of all distribution functions on $\RR$. A contrast function $\mathcal{C}$ defined  on the Cartesian product $\mathcal{F} \times \mathcal{F}$ is any non-negative function such that $\mathcal{C}(F_1, F_2) = 0$ if and only if $F_1  = F_2$.  Consider the estimator
\begin{eqnarray}\label{Minim}
\widehat m_n= \textrm{argmin}_{m \in \mathcal{M}}   \ \mathcal{C}\Big(\mathbb{H}_n, n^{-1}  \sum_{i=1}^n \Phi_\epsilon(\cdot  -  m(X_i)) \Big)
\end{eqnarray}
provided that a minimizer exists (in which case it is not necessarily unique).   Since the criterion depends on $m$ only through its values at the observations $X_i, i=1, \ldots, n $,  it follows that  any candidate for the minimization problem in (\ref{Minim}), $m$ in $\mathcal{M}$, can be well replaced by the non-decreasing function $\widetilde{m}$ such that
\begin{eqnarray*}
\widetilde{m}(t) = \left \{
\begin{array}{ll}
 m(X_{(1)}), \  \  \textrm{for $t \in [0, X_{(1)}]$}
\\
m(X_{(n)}) , \ \   \textrm{for $t \in [X_{(n)},1]$}
\end{array}
\right.
\end{eqnarray*}
and $\widetilde{m}$ is constant between the remaining order statistics such that $\tilde{m} $  is right continuous (by definition of $\mathcal{M}$) and coincides with $m$ at every $X_{(i)}$, $i=1,\dots,n$.   Here as is customary, $X_{(1)}\leq\dots \leq X_{(n)}$ denote the order statistics corresponding to $X_1,\dots,X_n$.

%\medskip

  In addition,  note that $\widehat m_n$  does not have to be unique at the data points and thus $\wh m_n$ denotes any solution of the minimization problem.

\subsection{Existence}\label{sec: exists}
In the sequel, we consider the following contrast function
\begin{eqnarray*}
\mathcal{C}(F_1, F_2)  =  \int_{\mathbb{R}}  \Big \{F_1(y)  - F_2(y)\Big \}^2 dy
\end{eqnarray*}
whenever this integral is finite, which is the case when $F_1$ and $F_2$ are distribution functions of random variables admitting finite expectations.  The choice of such contrast function is mainly motivated by application of the Parseval-Plancherel's Theorem.  The estimator we consider here is reminiscent of the one studied in \cite{edelman} for deconvoluting a distribution function from a Gaussian noise.  However, our goal here is  different since the main target in our problem is the monotone transformation $m_0$.

Before starting the analysis of the estimator, we establish first its existence.  Denote
\begin{eqnarray*}
\mathbb{M}_n(m)  = \int_{\mathbb{R}}  \Big \{ \mathbb{H}_n(y)  -  n^{-1}  \sum_{i=1}^n \Phi_\epsilon(y  -  m(X_i)) \Big \}^2   dy
\end{eqnarray*}
and  let
\begin{eqnarray*}
\mathbb{M}(m)  = \int_{\mathbb{R}}  \left \{ H_0(y)  -  \int_{\R}  \Phi_\epsilon(y  -  m(x)) dF_0(x)  \right \}^2  dy
\end{eqnarray*}
be its deterministic counterpart. Then the minimizer $\widehat m_n$ in \eqref{Minim} (if it exists) can also be written as
\begin{equation}
  \label{Minim2}
  \widehat m_n
  %% :=
  \in
  \argmin_{m \in \mathcal{M}}  \mathbb{M}_n(m). 
\end{equation}
The following assumptions will be needed.

\begin{assumption}{A1}
  \label{assm:A1:m0-sup}
  For some $K_0 \in [0, \infty)$, we have $ \Vert m_0 \Vert_\infty = K_0$.
\end{assumption}

\begin{assumption}{A2}
  \label{assm:A2:Phi-cts}
  The distribution function $\Phi_\epsilon$ is continuous on $\RR$. 
\end{assumption}

%\medskip

\begin{prop}\label{prop: exist}
  Let Assumption~\ref{assm:A0:basic:m0-Xi} hold. Then, 
  \begin{enumerate}
  \item $\mathbb{M}_n(m)$ is finite for any  $m \in \mathcal{M}$  for all $n \ge 1$;
  \item If  Assumption~\ref{assm:A1:m0-sup} also holds then $\mathbb{M}(m)$ is finite
    for any $m \in \mathcal{M}$;
  \item \label{item:1} If  Assumptions~\ref{assm:A1:m0-sup} and \ref{assm:A2:Phi-cts} also hold, then there exists at least a solution to \eqref{Minim2} that is piecewise constant and right-continuous, for all $n \ge 1$.

  \item \label{item:12}If Assumptions~\ref{assm:A1:m0-sup} and \ref{assm:A2:Phi-cts}  also hold, and $\epsilon$ has a
    bounded support, then with probability 1, there exists at least a solution to \eqref{Minim2} that is piecewise constant, right-continuous and bounded in the sup-norm by a deterministic constant for all $n \ge 1$.  This deterministic constant can be taken to be equal to $K_0+2$ in case  $\Vert \epsilon \Vert_\infty \le 1$ with probability $1$. 

  \end{enumerate}
\end{prop}

Note that parts~\ref{item:1} and ~\ref{item:12} of the proposition give existence but not necessarily uniqueness of $\wh m_n$. In fact, it can be seen from the proof of Proposition \ref{prop: exist} that if $\wh m_n$ is a solution to \eqref{Minim2}, then any monotone function that coincides with $\wh m_n$ at  the observed covariates $X_1,\dots,X_n$ gives another solution to \eqref{Minim2}.
In what follows, we will consider a solution $\widehat m_n$ that takes constant values between successive covariates and that is right continuous. This choice is
consistent with the way the Grenander-type estimator is defined, that is, the estimator in the classical monotone regression estimation problem.

\section{Convergence Rates of the Minimum Contrast Estimator}\label{sec: rate}

In this section, we will give upper bounds on the convergence rate of the minimum contrast estimator defined above.  Not surprisingly, this rate depends on the smoothness of the noise distribution. More specifically, we will consider the following cases for the smoothness: (1)  ordinary smooth, (2) supersmooth  and (3) discrete with a finite support.    In the whole section, we assume that the Assumptions~\ref{assm:A0:basic:m0-Xi},~\ref{assm:A1:m0-sup} and~\ref{assm:A2:Phi-cts} hold.  By Proposition \ref{prop: exist}, this guarantees that the minimization problem in  (\ref{Minim2}) admits a piecewise constant and right-continuous solution.  We will denote $\wh m_n$ any such solution.  Also, we will require the following assumption about the distribution of the design points.
\begin{assumption}{A3}
  \label{assm:A3:f0-bounded}
  The common distribution function $F_0$ of the covariates $X_1, \ldots, X_n$
 % admits a density $f_0$ which is bounded below and above by constants $c_0 > 0$ and $c_1 > 0$ on $[0,1]$ 
is continuous. %and is equal $0$ outside of $[0,1]$.
\end{assumption}

 For the smoothness cases (1) and (2), we will need the following notation:
\begin{eqnarray}\label{PsiF}
\psi_F(x)  =    \int_{\R}  e^{it x}  dF(t)
\end{eqnarray}
for any distribution function $F$ on $\RR$, and 
\begin{eqnarray}\label{phig}
\phi_g(x) =  \int_{\R}  e^{it x} g(t) dt
\end{eqnarray}
denotes the Fourier transform of $g$, whenever $g$ is integrable.   Note that when $F$ is absolutely continuous with density $f$, then it follows immediately that $\psi_F =  \phi_f$.  . %Since the criterion to be minimized in (\ref{Minim2}) only depends on $m$ evaluated at the design points $X_1,\dots,X_n$, the solution is a priori only specified at those design points.  

\subsection{Convergence Rate Under Ordinary Smooth Noise}

We start with the ordinary smooth case for the noise distribution described in the next assumption.

\begin{assumption}{A4}
  \label{assm:A4:Phi-smoothness}
  The distribution function $\Phi_\epsilon$ is absolutely continuous with a $0$-mean ordinary smooth density $f_\epsilon$ in the sense that 
  \begin{eqnarray}\label{OrdSmooth}
    \frac{d_0}{\vert t \vert^{\beta}}  \le \vert \phi_{f_\epsilon}(t) \vert   \le \frac{d_1}{\vert t \vert^{\beta}} 
  \end{eqnarray}
  as $ \vert t \vert \to \infty$,
  % \remCRD{ Can we change ``as $|t| \to \infty$'' to ``$|t| $ large enough'' or ``for a sequence of $|t| \to \infty$''? }
  for some $\beta > 0$ and constants $d_0 > 0, d_1 > 0$. 
\end{assumption}

Some comments are in order. Assumption \ref{assm:A4:Phi-smoothness}  is common in deconvolution problems, see for instance \cite{dattner2011deconvolution}. %The symmetry assumption ensures that the function $\phi_{f_\epsilon}$ is real valued.
The positive real number $\beta$  is usually referred to as the order of smoothness.   Known examples include the Exponential distribution with any scale parameter for which $\beta=1$,  Gamma distribution with shape parameter $\alpha > 0$ and scale $\gamma > 0$  for which $\beta=\alpha$, the Laplace distribution with $\beta =2$,  and more generally symmetric Gamma distributions (the distribution of $X - X'$ where $X$ and $X'$ are i.i.d. $\sim$ Gamma$(\alpha, \gamma)$) in which case we have $\beta = 2 \alpha$.  %\remCRD{This formula for symmetric gamma is correct, right?}
See for example the examples given in \cite{Fan91} after (1.4).
We plot in Figure~\ref{fig:gamma-density} in the appendix several gamma densities with a variety of shape parameters, to show the ``decreasing smoothness:'' as the shape parameter goes to $0$, the density has an increasing spike at $0$.

% \begin{figure}[b]
% %%\begin{figure}
%   \centering
%   %% \includegraphics[width=\maxwidth]{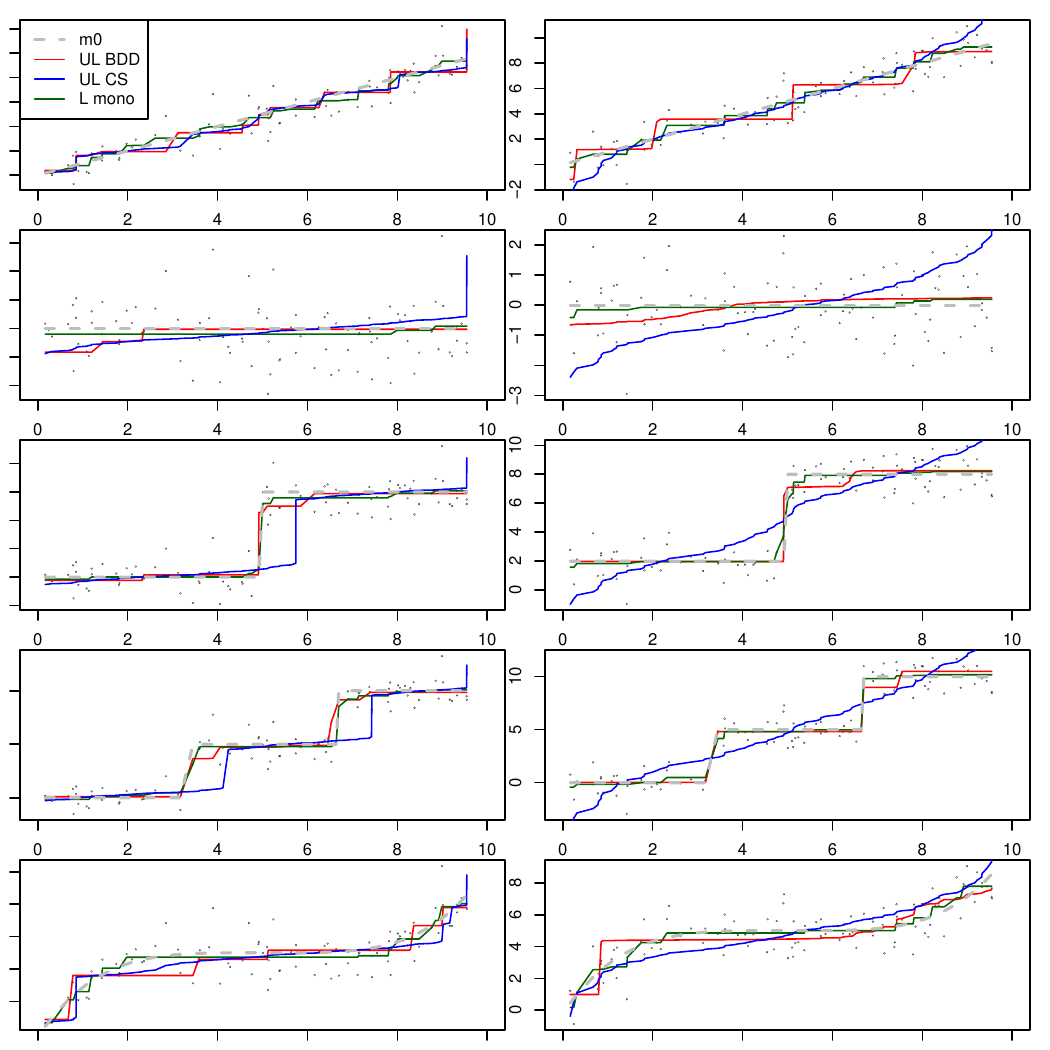}
%   \includegraphics[scale=.5]{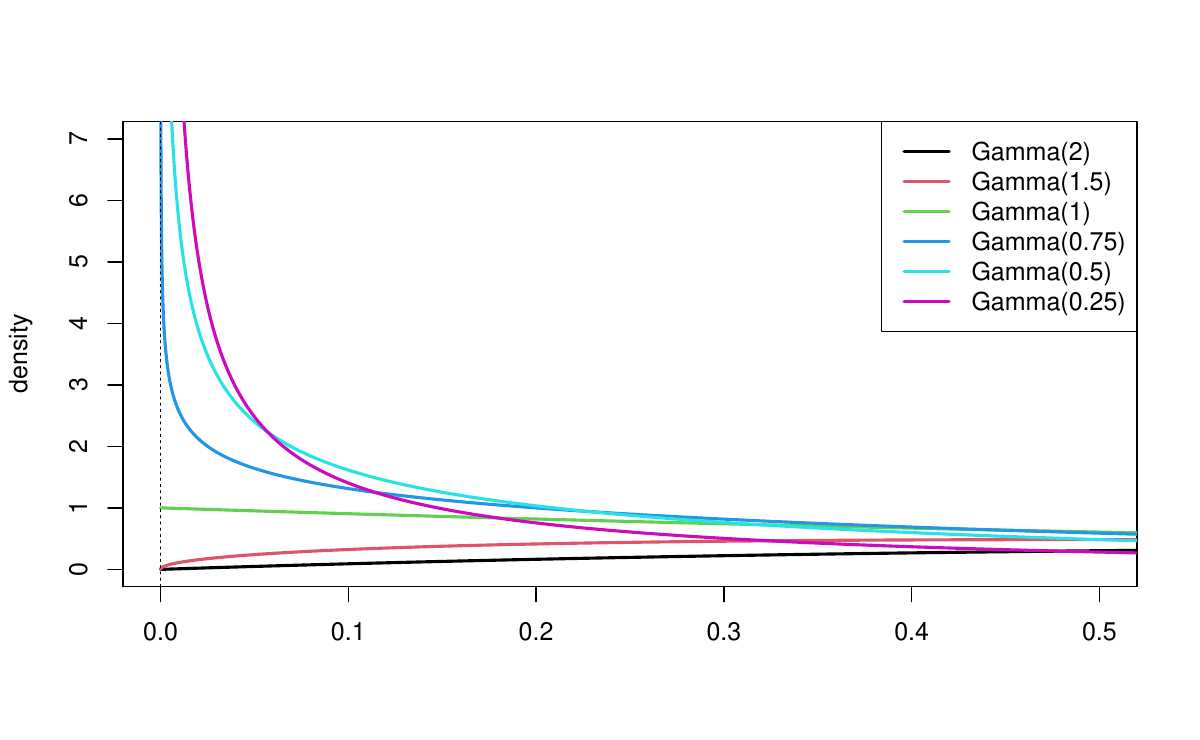}   
%   \caption[Gamma density plots.]{Plots of gamma densities with varying shape parameter.}
%   \label{fig:gamma-density}
% \end{figure}

Our main theorem below provides the rate of convergence of the $L_1$-error with respect to the distribution of the design points, $F_0$,  over a given interval $[a,b]\subset  [0,1]$  provided that the estimator is stochastically bounded on that interval. Sufficient conditions for this boundedness are given below.

The following assumption is crucial for deriving the convergence rate of $\wh  m_n$. It will be also needed below in the supersmooth case. 

\begin{assumption}{A5}
  \label{assm:A5}
Assume that there exists $T^*$ such that $\vert \phi_{f_\epsilon}(t) \vert   \ge\vert \phi_{f_\epsilon}(T) \vert>0$ for all $T>T^*$ and $\vert t\vert \le T$.
\end{assumption}

Now, we we are able to state the main result of this subsection.

\medskip

\begin{theo}
\label{theo: rate} \medskip

\begin{enumerate}

\item Suppose that Assumptions \ref{assm:A0:basic:m0-Xi} to %\ref{assm:A3:f0-bounded}, \ref{assm:A4:Phi-smoothness} and 
\ref{assm:A5}  hold.   Let $[a,b]\subset[0,1]$ be a fixed interval such that
\begin{equation}\label{eq: AnBn}
\widehat m_n(a)=O_P(1)\mbox{ and }\widehat m_n(b)=O_P(1).
\end{equation}
Then, it holds that
\begin{eqnarray}\label{eq:rateab}
\int_a^b \left \vert \widehat m_n(x)  -  m_0(x) \right \vert dF_0(x) = O_P(n^{-1/(2(2\beta +1))} ).
\end{eqnarray}

\item If the Assumptions \ref{assm:A0:basic:m0-Xi} to 
\ref{assm:A5}  hold, then the claims in \eqref{eq: AnBn} hold for all $a$ and $b$ such that $0<F_0(a)\leq F_0(b)<1$. 

\end{enumerate}
\end{theo}

\bigskip

We give below the main steps of the proof and conclude this subsection with some comments about the rates in Theorem \ref{theo: rate}. Details of the proof are postponed to Section \ref{sec: proofs}. Let us define  
\begin{eqnarray}\label{DefLnhatLn}
  \widehat{\mathbb{L}}_n(w) : =  \frac{1}{n} \sum_{i=1}^n
   \one_{ \{\widehat m_n(X_i)  \le w \}}
  \ \textrm{and} \ \
  \mathbb L_n(w)  :=   \frac{1}{n} \sum_{i=1}^n
  \one_{\{m_0(X_i)  \le w\}}  
\end{eqnarray}
for all $w \in \R$.  Using deconvolution arguments we show closeness of the two processes in the following proposition.

\medskip

\begin{prop}\label{prop: DistanceG}
Assume that Assumptions \ref{assm:A4:Phi-smoothness} and \ref{assm:A5} hold.   Then,
\begin{eqnarray*}
\ E \left[\int_{\R}  \left(\widehat{\mathbb{L}}_n(w)   -  \mathbb L_n(w)  \right)^2  dw \right]  = O(n^{-1/(2\beta +1)} ).
\end{eqnarray*}
\end{prop}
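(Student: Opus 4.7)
The plan is to combine the minimization property of $\widehat m_n$ with a Fourier/Plancherel deconvolution argument. Set
\[
g(y):=n^{-1}\sum_{i=1}^n\bigl\{\Phi_\epsilon(y-\widehat m_n(X_i))-\Phi_\epsilon(y-m_0(X_i))\bigr\}
\]
and note that $\mathbb{M}_n(\widehat m_n)\le \mathbb{M}_n(m_0)$. Applying $(a-b)^2\le 2(a-c)^2+2(c-b)^2$ with $c=\mathbb{H}_n(y)$ yields
\[
\int_\RR g(y)^2\,dy\le 4\int_\RR\Bigl(\mathbb{H}_n(y)-n^{-1}\sum_{i=1}^n\Phi_\epsilon(y-m_0(X_i))\Bigr)^{\!2}\,dy.
\]
Taking expectation and using the independence of the $X$- and $Y$-samples (so both processes inside have mean $H_0(y)$ and pointwise variance at most $H_0(y)(1-H_0(y))/n$), together with the identity $\int H_0(1-H_0)\,dy=\tfrac12 E|Y-Y'|\le E|Y|<\infty$ (finite under~\ref{assm:A1:m0-sup} and $E|\epsilon|<\infty$), gives $E\int g^2\,dy=O(n^{-1})$.

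Next I translate this $L^2$-bound on $g$ into one on $A:=\widehat{\mathbb{L}}_n-\mathbb{L}_n$ via Fourier inversion. Because $A$ is bounded and supported on the (random but finite) interval spanned by $\{\widehat m_n(X_i),m_0(X_i)\}_{i\le n}$, it lies in $L^1\cap L^2$. A Fubini exchange gives $g=f_\epsilon\ast A$, and consequently $\hat g(t)=\phi_{f_\epsilon}(t)\,\phi_A(t)$. By Plancherel,
\[
\int_\RR A(w)^2\,dw=\frac{1}{2\pi}\int_\RR|\phi_A(t)|^2\,dt,
\]
which I split at a threshold $T>T^*$ to be chosen. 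On $\{|t|\le T\}$, Assumption~\ref{assm:A5} combined with the lower bound in Assumption~\ref{assm:A4:Phi-smoothness} yields $|\phi_{f_\epsilon}(t)|^{-2}\le d_0^{-2}T^{2\beta}$, so this piece is bounded by $d_0^{-2}T^{2\beta}\int g^2\,dy$. On $\{|t|>T\}$, integration by parts gives $\phi_A(t)=(i/t)\psi_{dA}(t)$ with $\psi_{dA}(t):=n^{-1}\sum_i(e^{it\widehat m_n(X_i)}-e^{itm_0(X_i)})$, and since $|\psi_{dA}(t)|\le 2$ we obtain the key deterministic bound $|\phi_A(t)|\le 2/|t|$, giving a bound of order $1/T$ for this piece.

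Assembling both pieces and taking expectation produces $E\int A^2\,dw\le C_1 T^{2\beta}/n+C_2/T$; balancing by choosing $T:=n^{1/(2\beta+1)}$ delivers the rate $O(n^{-1/(2\beta+1)})$. The main technical obstacle is making the Fourier interchange fully rigorous: verifying the integrability properties of $A$ on its random support uniformly well enough to apply Plancherel, justifying the convolution identity $g=f_\epsilon\ast A$, and handling the boundary terms in the integration by parts that yields the key deterministic bound $|\phi_A(t)|\le 2/|t|$.
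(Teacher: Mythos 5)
Your argument is correct and follows essentially the same route as the paper: the same use of $\mathbb{M}_n(\widehat m_n)\le \mathbb{M}_n(m_0)$ plus the triangle inequality to get $E\int(\widehat{\mathbb{H}}_n-\mathbb{H}^0_n)^2\,dy=O(n^{-1})$, the same Plancherel/convolution identity $\widehat{\mathbb{H}}_n-\mathbb{H}^0_n=f_\epsilon\star(\widehat{\mathbb{L}}_n-\mathbb{L}_n)$, the same frequency split at $T$ with Assumptions \ref{assm:A4:Phi-smoothness}--\ref{assm:A5} on $|t|\le T$ and the bound $|\phi_{\widehat{\mathbb{L}}_n-\mathbb{L}_n}(t)|\le 2/|t|$ on the tail, and the same choice $T\sim n^{1/(2\beta+1)}$. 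The technical points you flag at the end (Plancherel applies because $\widehat{\mathbb{L}}_n-\mathbb{L}_n$ is a difference of CDFs of finitely supported laws, hence in $L_1\cap L_2$; the integration by parts is the paper's identity \eqref{phipsi}) are exactly how the paper disposes of them, and your Gini-identity shortcut for $\int H_0(1-H_0)\,dy<\infty$ is a harmless variant of the paper's Appendix \ref{appendix I1} computation.
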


Next, using entropy arguments from empirical process theory, we show in the following proposition that on intervals $[A,B]$  that are possibly random, the empirical processes $\widehat{\mathbb{L}}_n$ and $\mathbb L_n$ are close to their population counterparts $\widehat L^0_n$ and $L_0$ respectively, where  for all $w \in \RR$,
\begin{eqnarray}\label{eq: Ln}
\widehat L^0_n(w)  = \int \one_{\{\widehat m_n(x) \le w\}}dF_0(x) 
\mbox{ and } L_0(w)  =  \int\one_{\{m_0(x)\leq w\}} dF_0(x).
\end{eqnarray}  
More precisely, we  derive the convergence rate of the associated $L_2$-error integrated on such interval $[A, B]$. Note that the first two claims in the proposition hold under the only assumptions that $X_1,\dots,X_n$ are i.i.d. and $\widehat m_n$ is taken from
%% \eqref{Minim}.
\eqref{Minim2}.
In fact, it can be seen from the proof that these two claims continue to hold with $\widehat m_n$ replaced by any monotone estimator.

\medskip

\begin{prop}\label{prop: E1andE3andE2}
If $X_1,\dots,X_n$ are i.i.d., then for all random variables $A < B $ (that may depend on $n$) it holds that 
\begin{eqnarray*}
\int_{A}^B  \left(\widehat{\mathbb{L}}_n(w)    -\widehat{L}^0_n(w)   \right)^2  dw   \leq (B-A)  O_P( 1/n),
\end{eqnarray*}
\begin{eqnarray*}
 \int_{A}^B  \Big(\mathbb{L}_n(w) - L_0(w)\Big)^2  dw  \leq (B-A) O_P(1/ n),
\end{eqnarray*}
where $O_P( 1/n)$ is uniform in $A$ and $B$. Moreover, if $B-A=O_P(n^{2\beta/(2\beta+1)})$ and Assumptions \ref{assm:A4:Phi-smoothness} and  \ref{assm:A5} holds, then
\begin{eqnarray}\label{E2}
\int_{A}^B  \left(\widehat L^0_n(w)  -  L_0(w)  \right)^2  dw=  O_P(n^{-1/(2\beta+1)}).
\end{eqnarray}
\end{prop}

\medskip

The following proposition makes the connection between the above error and a squared distance between the inverse functions of $\widehat m_n$ and $m_0$ composed with the distribution function $F_0$ of the $X_i$'s;  a rate of convergence of that squared distance is derived. We recall that for $m \in \mc M$, the inverse of $m$ is defined by \eqref{eq: inverse},   where the infimum of an empty set is defined to be $1$. 

\medskip

\begin{prop}\label{prop: DistanceG2}
%Assume that there exists $T^*$ such that $\vert \phi_{f_\epsilon}(t) \vert   \ge\vert \phi_{f_\epsilon}(T) \vert>0$ for all $T>T^*$ and $\vert t\vert \le T$.
 Under Assumptions \ref{assm:A3:f0-bounded}, \ref{assm:A4:Phi-smoothness}, and \ref{assm:A5}
for all random variables $A < B $ such that $B-A=O_P(1)$ it holds that
\begin{eqnarray*}
\int_{A}^B  \left(  F_0 \circ \widehat m^{-1}_n(w)  -   F_0 \circ m^{-1}_0(w)   \right)^2  dw & =&  
\int_{A}^B   \left(\widehat{L}^0_n(w)  -   L_0(w) \right)^2  dw\\
&= &  O_P(n^{-1/(2\beta +1)} ).
\end{eqnarray*}

\end{prop}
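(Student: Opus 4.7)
The plan is to split the proof into two essentially independent pieces: first, a deterministic, measure-theoretic identification of $\widehat L_n^0(w)$ with $F_0\circ\widehat m_n^{-1}(w)$ and of $L_0(w)$ with $F_0\circ m_0^{-1}(w)$ for Lebesgue-almost every $w$; and second, a direct appeal to the bound \eqref{E2} of Proposition~\ref{prop: E1andE3andE2} to obtain the rate.

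For the first step, I would fix an arbitrary $m\in\mathcal{M}$ and note that by right-continuity of $m$ together with the definition $m^{-1}(w)=\inf\{x\in[0,1]:m(x)\ge w\}$ (where $\inf\emptyset=1$), the sets $\{x\in[0,1]:m(x)<w\}$ and $[0,m^{-1}(w))$ differ by at most the point $\{1\}$. By Assumption~\ref{assm:A3:f0-bounded} (continuity of $F_0$), one has $F_0(\{1\})=0$ and $F_0(m^{-1}(w)-)=F_0(m^{-1}(w))$, so
\[
P(m(X)<w)=F_0(m^{-1}(w)).
\]
Consequently $P(m(X)\le w)-F_0(m^{-1}(w))=P(m(X)=w)$, and the set of $w$ for which this difference is nonzero is contained in the set of "flat heights" of $m$, i.e.\ values $w$ for which $m^{-1}(\{w\})$ has nonempty interior. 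Since the associated maximal flat intervals are pairwise disjoint nondegenerate subintervals of $[0,1]$, this set is at most countable and hence Lebesgue-null. Applying the identity with $m=m_0$ yields $L_0(w)=F_0\circ m_0^{-1}(w)$ off a null set; applying it pathwise with $m=\widehat m_n$ (which is piecewise constant with at most $n$ flat heights) yields $\widehat L_n^0(w)=F_0\circ\widehat m_n^{-1}(w)$ off a pathwise null set. Integrating these Lebesgue-a.e.\ equalities over $[A,B]$ gives the first equality of the proposition.

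For the rate, I would invoke equation \eqref{E2} of Proposition~\ref{prop: E1andE3andE2}: Assumptions~\ref{assm:A3:f0-bounded}, \ref{assm:A4:Phi-smoothness}, and~\ref{assm:A5} are in force, and since the exponent $2\beta/(2\beta+1)>0$ we have $n^{2\beta/(2\beta+1)}\to\infty$, so the hypothesis $B-A=O_p(1)$ trivially implies $B-A=O_p(n^{2\beta/(2\beta+1)})$. Equation~\eqref{E2} then delivers
\[
\int_A^B\bigl(\widehat L_n^0(w)-L_0(w)\bigr)^2\,dw = O_p(n^{-1/(2\beta+1)}),
\]
which combined with the first equality completes the proof. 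The main obstacle is the measure-theoretic bookkeeping in the first step—making sure the "flat heights" exceptional set is null for each realization of $\widehat m_n$ and that the continuity of $F_0$ is genuinely used to replace strict inequality by weak inequality—but after this identification the result reduces immediately to an already-proved statement.
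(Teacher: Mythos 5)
Your proposal is correct and follows essentially the same route as the paper: the first equality is the paper's Lemma~\ref{lem: inv} (proved there via the same switch relation $m(x)\ge w \Leftrightarrow x\ge m^{-1}(w)$, continuity of $F_0$, and countability of the flat heights of $m$), and the rate is obtained, as you do, by feeding $B-A=O_p(1)=O_p(n^{2\beta/(2\beta+1)})$ into \eqref{E2} of Proposition~\ref{prop: E1andE3andE2}.
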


The last step in the proof makes the connection between the above squared distance and the $L_1$-error of $\widehat m_n$. 

\medskip

\begin{prop}\label{prop: Conversion}
Suppose that Assumption  \ref{assm:A3:f0-bounded} holds, and let $[a,b]\subset[0,1]$ be a fixed interval. Then  it holds that
\begin{eqnarray*}
\int_a^b \left \vert \widehat m_n(x)  -  m_0(x) \right \vert dF_0(x)  \le 
\left((B_n-A_n)\int_{A_n}^{B_n}( F_0\circ \widehat m_n^{-1}(x)-F_0\circ m_0^{-1}(x)) ^2dx\right)^{1/2}
\end{eqnarray*}
where 
$A_n=m_0(a)\wedge \widehat m_n(a)\mbox{ and }B_n=m_0(b)\vee \widehat m_n(b).$ 
\end{prop}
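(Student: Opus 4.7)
The plan is to rewrite the $L_1$-error as an integral over $w$ via the layer-cake identity $|u-v|=\int_\R|\one\{u\le w\}-\one\{v\le w\}|\,dw$, swap the order of integration with Fubini, identify the inner $F_0$-integral with a difference of $L_0=F_0\circ m_0^{-1}$ and $\wh L^0_n=F_0\circ\wh m_n^{-1}$, and conclude with the Cauchy--Schwarz inequality on $[A_n,B_n]$.

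For each $x\in[a,b]$, monotonicity of $m_0$ and $\wh m_n$ combined with the definitions of $A_n$ and $B_n$ guarantees $\wh m_n(x),m_0(x)\in[A_n,B_n]$, so the layer-cake identity reads $|\wh m_n(x)-m_0(x)|=\int_{A_n}^{B_n}|\one\{\wh m_n(x)\le w\}-\one\{m_0(x)\le w\}|\,dw$. Integrating against $dF_0(x)$ on $[a,b]$ and swapping the order of integration via Fubini (nonnegative integrand) yields $\int_a^b|\wh m_n(x)-m_0(x)|\,dF_0(x)=\int_{A_n}^{B_n}G_n(w)\,dw$, with $G_n(w):=\int_a^b|\one\{\wh m_n(x)\le w\}-\one\{m_0(x)\le w\}|\,dF_0(x)$.

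For each fixed $w$, the two sub-level sets $\{x\in[a,b]:m_0(x)\le w\}$ and $\{x\in[a,b]:\wh m_n(x)\le w\}$ are both initial sub-intervals of $[a,b]$ by monotonicity, so one is contained in the other and $G_n(w)$ equals the absolute value of the difference of their $F_0$-masses. Using continuity of $F_0$ (Assumption~\ref{assm:A3:f0-bounded}), a short case analysis yields $F_0(\{x\in[a,b]:m_0(x)\le w\})=(L_0(w)\wedge F_0(b))-(L_0(w)\wedge F_0(a))$ for Lebesgue-a.e.\ $w$, and the analogous identity with $\wh m_n$ and $\wh L^0_n$. Since the clipping map $y\mapsto(y\wedge F_0(b))-(y\wedge F_0(a))$ is $1$-Lipschitz, this produces the pointwise bound $G_n(w)\le|L_0(w)-\wh L^0_n(w)|=|F_0\circ m_0^{-1}(w)-F_0\circ\wh m_n^{-1}(w)|$. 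A final Cauchy--Schwarz inequality applied to $\int_{A_n}^{B_n}G_n(w)\,dw$ gives the claimed bound. The main point demanding care is the identification of $L_0(w)$ with $F_0\circ m_0^{-1}(w)$ when $m_0$ has flat pieces or jumps, but since the two coincide outside a countable set in $w$, the integral is unaffected.
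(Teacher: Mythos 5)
Your proof is correct and follows essentially the same route as the paper's: the paper proves this via Lemma~\ref{L1inv}, whose proof is exactly your layer-cake representation of $\vert f-g\vert$ followed by Fubini and the switch relation $t\ge f^{-1}(x)\Leftrightarrow f(t)\ge x$, with the truncation $\bigl(H(g^{-1}(x)\wedge b)-H(f^{-1}(x)\vee a)\bigr)_+$ playing the role of your $1$-Lipschitz clipping map, and the a.e.\ identification of $L_0$ with $F_0\circ m_0^{-1}$ handled by Lemma~\ref{lem: inv}; both arguments conclude with Cauchy--Schwarz on $[A_n,B_n]$.
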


We conclude the subsection with some comments about the convergence rate obtained in Theorem \ref{theo: rate}.

\begin{itemize}
\item When $\beta = 1/4$, then the $L_1$-rate of convergence obtained for our estimator matches the well-known $n^{1/3}$-rate of the Grenander estimator in the classical isotonic regression (where the link between the responses and covariates is known).  If  $\beta < 1/4$, then the rate is strictly better than the cubic rate,  which may appear as a contradiction. However,  there is  one major difference between the regular isotonic regression model and the one we consider here: we assume the error distribution to be known, which is not the case in regular isotonic regression. Thus,  it seems that a transitional regime occurs in the rate of convergence in case the noise distribution is known. In the unlinked regression setting which we study in this paper, if the error distribution is known to  be a centered (or symmetric)  Gamma with shape and scale parameters $\alpha \in (0,1/4)$  (or $\alpha \in (0,1/8)$)  and $\lambda > 0$  then the smoothness parameter is $\beta = \alpha$ (or $\beta = 2\alpha$)  belongs to $(0,1/4)$ and the rate of convergence of our estimator will be faster than $n^{1/3}$.
 
\item   Note that if we let $\beta \to 0$, then the centered or symmetric Gamma will converge to a Dirac at $0$ and the rate of convergence will approach the parametric rate $\sqrt n$. In fact, it can be shown that the rate of convergence is precisely the parametric rate if the error distribution is a Dirac at $0$, i.e. if $\epsilon=0$ with probablity one. Indeed, when $\epsilon =0$ with probability $1$, we have $\Phi_\epsilon(y- \widehat m_n(X_i)) =  \mathds{1}_{ \widehat m_n(X_i) \le y}$.  If $\widehat m_n$ is taken to be a minimum contrast estimator that is bounded in the sup-norm by $K_0+2$ (such a solution is known to exist with probability $1$ by Proposition  \ref{prop: exist}), then we have  by definition of $\wh m_n$
\begin{eqnarray*}
\int_{\mathbb R} \big\{ \mathbb H_n(y)  -   n^{-1} \sum_{i=1}^n  \mathds{1}_{ \widehat m_n(X_i) \le y} \big \}^2 dy  \le \int_{\mathbb R} \big \{ \mathbb H_n(y)  -   n^{-1} \sum_{i=1}^n  \mathds{1}_{m_0(X_i) \le y} \big \}^2 dy
\end{eqnarray*}
implying that 
\begin{eqnarray*}
&& \int_{\mathbb R} \bigg\{ n^{-1} \sum_{i=1}^n  \mathds{1}_{ \widehat m_n(X_i) \le y}  -   n^{-1} \sum_{i=1}^n  \mathds{1}_{ m_0(X_i) \le y} \bigg \}^2 dy  \nonumber \\
&& \le  4  \int_{\mathbb R} \big \{ \mathbb H_n(y)  -   n^{-1} \sum_{i=1}^n  \mathds{1}_{ m_0(X_i) \le y} \big \}^2 dy  \nonumber \\
&& \le  8 \int_{\mathbb R} \big \{ \mathbb H_n(y)  -H_0(y) \big\}^2 dy  +  8   \int_{\mathbb R} \big \{  H_0(y)  -  n^{-1} \sum_{i=1}^n  \mathds{1}_{ m_0(X_i) \le y} \big \}^2 dy.  \nonumber
\end{eqnarray*}
The  first term was already shown to be $O_P(n^{-1})$; see the proof of Proposition \ref{prop: DistanceG}. Moreover, the true distribution $H_0$ of the $Y_i$'s is that of $m_0(X)$, which implies that the second term is also $O_P(n^{-1})$ since its expectation is equal to $n^{-1}  \int Var(\mathds{1}_{m_0(X) \le y }) dy = n^{-1}  \int H_0(y) (1-H_0(y)) dy $ where the integral $\int H_0(y) (1- H_0(y)) dy$ is known to be finite; see Appendix A.  Hence,
\begin{eqnarray}\label{compdirac}
 \int_{\mathbb R} \bigg\{ n^{-1} \sum_{i=1}^n  \mathds{1}_{ \widehat m_n(X_i) \le y}  -   n^{-1} \sum_{i=1}^n  \mathds{1}_{ m_0(X_i) \le y} \bigg \}^2 dy 
 =   O_P(n^{-1}).
\end{eqnarray}
If follows from the rate obtained in (\ref{compdirac}) and the empirical process arguments as in the proof of Proposition \ref{prop: E1andE3andE2}  that 
\begin{eqnarray*}
\int_{-K_0-2}^{K_0+2} \bigg (\widehat L^0_n(w)  -  L_0(w) \bigg )^2 dw  & = &   \int_{-K_0-2}^{K_0+2} \bigg (F_0 \circ \wh m^{-1}_n(w)  -  F_0 \circ m^{-1}_0(w) \bigg )^2 dw   \\
& = & O_P(n^{-1}).
\end{eqnarray*}
This in turn implies by Proposition \ref{prop: Conversion} (with $a=0$, $b=1$,  and $[A_n, B_n] \subseteq [-K_0-2, K_0+2]$)   that 
\begin{eqnarray*}
\int_0^1 \vert \widehat m_n(x) - m_0(x)  \vert dF_0(x)  =  O_P(n^{-1/2}).  
\end{eqnarray*}

\item The parametric rate obtained for a noise that is Dirac at $0$ can be generalized to the case where $\epsilon$ is is supported on a finite set of points.  This is proved in  Theorem \ref{theo: nonoise}.

\item It may be considered as unsatisfactory that the conclusions were stated with $O_P$ notation, so we provide now more precise bounds. The bounds can be obtained by closely inspecting the proof of Theorem \ref{theo: rate}; details are omited. It can be seen from the proofs that, with $A_n$ and $B_n$ the stochastically bounded random variables taken from   Proposition \ref{prop: Conversion}, we have
\begin{eqnarray}\label{eq:biasvar} \notag
&&\int_a^b \left \vert \widehat m_n(x)  -  m_0(x) \right \vert dF_0(x) \\ \notag
&&\qquad \le 
\left((B_n-A_n)\left[3\int_{A_n}^{B_n}   \left(\widehat{L}_n(w)  -   L_n(w) \right)^2  dw+ \frac{6(B_n-A_n)}{n}    \Vert \mathbb{G}_n \Vert_{\mathcal{I}}^2\right]\right)^{1/2}\\ \notag
&&\qquad \le 
\left(3(B_n-A_n)\int_{\RR}   \left(\widehat{L}_n(w)  -   L_n(w) \right)^2  dw\right)^{1/2}\\
&&\qquad\qquad+(B_n-A_n)\left( \frac{6\Vert \mathbb{G}_n \Vert_{\mathcal{I}}^2}{n}\right)^{1/2}
\end{eqnarray}
where $   \Vert \mathbb{G}_n \Vert_{\mathcal{I}}^2$ is a random variable with finite expectation. The expectation is bounded above by an unknown absolute constant that is connected to the entropy measure of the set of monotone functions on $\RR\to[0,1]$, as well as  to the absolute constants that emerge from the empirical process theory. Hence, the second summand in the right-hand term is of the parametric order $n^{-1/2}$, and can be seen as a systematic error that is typically negligable as compared to the first summand. This means that the rate of convergence of the estimator is driven by  the integral in the first summand. For this integral, we have
$$E\left[\int_{\RR}   \left(\widehat{L}_n(w)  -   L_n(w) \right)^2  dw\right]\leq Kn^{-1/(2\beta+1)}, $$
where $K$ depends on the parameters of the model. One can choose for instance
$$K:= (2\beta+1)\left(\frac{24}{d_0}(2/(\beta\pi))^{2\beta}(2K_0+E|\epsilon|)\right)^{1/(2\beta+1)}.$$
It is worth mentioning that the decomposition in \eqref{eq:biasvar} still holds with $\widehat m_n$ replaced (in the definitions of $\widehat L_n$, $A_n$ and $B_n$) by any estimator in $\mathcal M$.
\end{itemize}

\subsection{Convergence Rate Under Supersmooth Noise}
The main arguments used above for an ordinary smooth noise continue to apply to the supersmooth case. In this setting,  Assumption \ref{assm:A4:Phi-smoothness} should be replaced by the following one.
\begin{assumption}{A4'}
  \label{assm:A4'}
  The distribution function $\Phi_\epsilon$ is absolutely continuous with a $0$-mean supersmooth density $f_\epsilon$ in the sense
  that
\begin{eqnarray*}
d_0 \vert t \vert^{\alpha} \exp(- \vert t\vert^{\beta}/\gamma)  \le \vert  \phi_{f_\epsilon}(t)  \vert \le d_1 \vert t \vert^{\alpha} \exp(- \vert t\vert^{\beta}/\gamma)
\end{eqnarray*}
as $\vert t \vert \to \infty$, for some $\alpha > 0, \beta  > 0$ and constants $d_0 > 0, d_1 > 0$.  
\end{assumption}

\noindent
In the above definition we provide for supersmoothness we deviate from the one given by \cite{Fan91} in (1.3) by taking the same exponent $\alpha$ in the lower and upper bound, for simplicity. 
\begin{theo}
  \label{rem:1}
  Under the  Assumptions  \ref{assm:A0:basic:m0-Xi}--\ref{assm:A3:f0-bounded}, \ref{assm:A4'}, and \ref{assm:A5}, we  have for any $[a,b] \subset (0,1)$ such that $0 < F_0(a) \le F_0(b) < 1$ that
  \begin{equation*}
    \int_a^b \left \vert \widehat m_n(x)  -  m_0(x) \right \vert dF_0(x)
    = O_P( (\log n)^{-1/(2\beta)} ).
  \end{equation*}
\end{theo}

\bigskip
\noindent 
For instance,
in the case of a Gaussian noise, in which case $\beta=2$, the rate is $1/(\log n)^{1/4}$, which matches the conclusion of \cite{edelman} right after the proof of his Theorem 1. On the other hand, this rate is slower than the minimax rate, $\log \log n/\log n$, obtained by \cite{Rigollet:2018wia} %(also under the assumption that the true monotone regression function is bounded). 
in the shuffled regression problem. Note however that  the setting studied by \cite{Rigollet:2018wia} is the shuffled regression problem, whereas we consider the unlinked regression problem (recall we explain the difference in  our introduction).  It is not known whether the two problems share the same minimax rates; or, rather, it is not known if the minimax lower bound  derived in \cite{Rigollet:2018wia} applies to the unlinked problem, since the unlinked problem is statistically more difficult than the shuffled problem.  Therefore, it is currently unknown  if the rate we derived is in fact minimax suboptimal or not.  
 % One surprising feature is that the rate $\log \log n/\log n$ found by the authors does not depend on the smoothness of the noise distribution. Such a dependence is seen in our results, whether the noise is ordinary smooth or supersmooth.  For a  proof of the rate given in Remark  \ref{rem:1} we refer to Appendix~\ref{sec:proofs}.

\subsection{Convergence Rate Under a Discrete Noise Distribution}

In this section, we consider the case where  $\epsilon$ is supported on a finite set of points. We prove that in that case, the minimum contrast estimator achieves the parametric rate in the $L_1$-loss.

\begin{theo}\label{theo: nonoise}
 % Let Assumptions~\ref{assm:A0:basic:m0-Xi} and~\ref{assm:A1:m0-sup} hold, and assume that $\epsilon\equiv0$. Then, there exists at least a solution to \eqref{Minim2} that is piecewise constant, right-continuous and bounded is sup-norm by $K_0$, for all $n \ge 1$. Moreover, denoting by $\widehat m_n$ any such solution, we have 
%\begin{eqnarray*}
  %  \int_0^1 \left \vert \widehat m_n(x)  -  m_0(x) \right \vert dF_0(x)
   %= O_p( n^{-1/2} ).
%\end{eqnarray*}
Suppose that  Assumptions \ref{assm:A0:basic:m0-Xi} to 
\ref{assm:A2:Phi-cts}  hold, and that $\epsilon$ is supported on a finite set of points.  If $\widehat{m}_n$ is a solution to \eqref{Minim2} that is bounded in the sup-norm by a deterministic constant (which exists with probability $1$ in view of Proposition \ref{prop: exist}), then 
\begin{eqnarray*}
    \int_0^1 \left \vert \widehat m_n(x)  -  m_0(x) \right \vert dF_0(x)
    = O_P( n^{-1/2} ).
\end{eqnarray*}
\end{theo}

\subsection{Convergence Rate in the Case of Different Sample Sizes}
\label{sec:conv-rate-case-diff}

In this section we briefly consider the case where one observes $Y_1, \ldots, Y_{\ny}$ and  $X_1, \ldots, X_{\nx}$
with possibly different sample sizes $\nx \neq \ny$. In that case, the estimator is defined as
\begin{equation*}
  \label{eq:9}
  {\widehat m_{n_x, n_y}}= \textrm{argmin}_{m \in \mathcal{M}} \int_{\mathbb{R}}  \Big \{ \mathbb{H}_{\ny}(y)  -  \nx^{-1}  \sum_{i=1}^{\nx} \Phi_\epsilon(y  -  m(X_i)) \Big \}^2   dy   
\end{equation*}
where $\mathbb{H}_{\ny}$ denotes the empirical distribution function corresponding to the sample $Y_1, \ldots, Y_{\ny}$. The asymptotic here has to be understood in the sense that both sample sizes $\nx$ and $\ny$ go to infinity. This means that $\nx\wedge\ny\to\infty$, where $\nx\wedge\ny$ denotes the infimum between $\nx$ and $\ny$.   In the following theorem, we give the upper bound on the convergence rate of our minimum contrast estimator in the three regimes (1) ordinary smooth, (2) supersmooth and (3) discrete with finite number of support points.

\begin{theo}\label{theo: rateunequalsamplesizes}
\begin{enumerate}
\item
Suppose that the Assumptions~\ref{assm:A0:basic:m0-Xi},~\ref{assm:A1:m0-sup},~\ref{assm:A2:Phi-cts}, \ref{assm:A3:f0-bounded},    and \ref{assm:A5}  hold true.   Let $[a,b]\subset (0,1)$ be any fixed interval such that $0<F_0(a)\leq F_0(b)<1$.
\begin{itemize}
\item If  Assumption \ref{assm:A4:Phi-smoothness} holds true, then  as $n_x, n_y \to \infty$, we have
\begin{equation}\label{eq: nxy}
\widehat m_{n_x, n_y}(a)=O_P(1)\mbox{ and }\widehat m_{n_x, n_y}(b)=O_P(1)
\end{equation}
 and
\begin{eqnarray*}
\int_a^b \left \vert \widehat m_{n_x, n_y}(x)  -  m_0(x) \right \vert dF_0(x) = O_P((\nx\wedge\ny)^{-1/(2(2\beta +1))} ).
\end{eqnarray*}
\item If Assumption \ref{assm:A4'}  holds true, then as $n_x, n_y \to \infty$, \eqref{eq: nxy} holds, and
\begin{eqnarray*}
\int_a^b \left \vert \widehat m_{n_x, n_y}(x)  -  m_0(x) \right \vert dF_0(x) = O_P( \log (\nx\wedge\ny)^{-1/(2\beta)} ).
\end{eqnarray*}
\end{itemize}
\item Suppose that the Assumptions~\ref{assm:A0:basic:m0-Xi},~\ref{assm:A1:m0-sup},~\ref{assm:A2:Phi-cts} hold true. If $\epsilon $ is supported on a finite number of points, then with probability $1$ there exists a solution $\widehat m_{n_x, n_y}$ which is  bounded  by a deterministic constant. Furthermore,  
\begin{eqnarray*}
\int_0^ 1 \left \vert \widehat m_{n_x, n_y}(x)  -  m_0(x) \right \vert dF_0(x) = O_P((\nx\wedge\ny)^{-1/2} )
\end{eqnarray*}
as $n_x,  n_y \to \infty$. 
\end{enumerate}

\end{theo}

\subsection{Uniform Consistency}
The convergence rates for the minimum contrast estimator we derived above are obtained for the $L_1$-norm. Thus, a natural question is whether these global rates also hold pointwise. Although the exact answer to this question is still unknown  we can provide at least an intermediate result which shows that the estimator is pointwise consistent, even uniformly provided that the true monotone regression function, $m_0$, is continuous.   For simplicity, we assume that $n_x = n_y = n$. 

\medskip

\begin{theo}\label{theo: unifcons}
  Suppose that the assumptions of Theorem \ref{theo: rate}, or \ref{rem:1} or \ref{theo: nonoise} hold, and let $\mathcal{S}_0$ denote the support of $F_0$.  Let $a,b \in [0,1]$ be any points such that $\widehat m_n(a)$ and $\wh m_n(b)$ are each $O_P(1)$.  Let $C$ be any compact set in the interior of $[a,b] \cap \mc S_0$.  Assume $m_0$ is continuous on $(0,1)$.
    Then it holds that
\begin{eqnarray*}
\sup_{x \in C}  \vert \widehat m_n(x)  - m_0(x) \vert =  o_P(1).
\end{eqnarray*}

\end{theo}

\noindent Note that in the case where  $\epsilon$ is compactly supported, the result above implies that if $m_0$ is continuous on $[0, 1]$ 
\begin{eqnarray*}
\sup_{x \in [0, 1] \cap \mathcal{S}_0}  \vert \widehat m_n(x)  - m_0(x) \vert =  o_P(1)
\end{eqnarray*}
for any estimator $\widehat m_n$ that is bounded in the sup-norm by a deterministic constant.  Recall such an estimator exists with probability $1$  (see Claim 4 of  Proposition \ref{prop: exist}).   Finally, note that we cannot hope to extend the convergence outside $\mathcal{S}_0$ since $m_0$ is only identifiable on this set; see Proposition \ref{prop: identifiability}.

\section{Estimation of Moments of  $m_0(X)$} \label{sec:morerates}
In this section, we  showcase that our minimum contrast estimator can achieve the $\sqrt n$-rate for estimating certain smooth functionals of $m_0$.  In doing so, we restrict attention to estimating moments of $m_0(X)$ and the cases where either $\epsilon$ is discrete with a finite number of points in the support or when it is uniformly distributed over a compact. Modulo some scaling, we can assume without loss of generality that the support is a subset of $[-1,1]$ in both cases.  For simplicity, we assume that the sample sizes of the covariates and responses are equal.

Note that boundedness of $m_0$ implies that $m_0(X)$ admits finite moments of any order. Furthermore, in case the noise distribution is compactly supported, as assumed here in this section, all moments of  the response $Y$ are finite and we have
\begin{eqnarray*}
E(Y^k) = \sum_{j=0}^k \binom{k}{j} E[m^j_0(X)] E(\epsilon^{k-j}).
\end{eqnarray*}
Since the distribution of $\epsilon$ is known, the moments of the error can be exactly computed. Then, replacing the moments of $Y$ by the corresponding empirical estimators in the above formula yields a $\sqrt n$-consistent estimator of $E[m_0^k(X)].$  
%given by
%\begin{eqnarray*}
%E(Y^k) = \sum_{j=0}^k \binom{k}{j} E[m^j_0(X)] E(\epsilon^{k-j}).
%\end{eqnarray*}
%If $k =1$, then a simple estimator of $E[m_0(X)]$ is the empirical mean $\bar{Y}_n$ which converges at the parametric rate as a consequence of the Central Limit Theorem.  For higher moments, one needs to resort to solving a linear system after replacing $E(Y^k)$ by $\comC{\sum_{i=1}^n Y^k_i/n}$. Alternatively, one can also compute the estimators of $E[m^k_0(X)]$ in a recursive manner starting with $k=1$.  All these estimators converge at the $\sqrt n$-rate although it is expected that their performance deteriorates for large orders.   
The minimum contrast estimator offers an alternative (and also a direct) way for estimating the moments since one can simply take the natural choice $ \int  \widehat m^k_n d\mathbb F_n$, that is
$$\frac{1}{n} \sum_{i=1}^n \widehat m^k_n(X_i).$$
 Our result below shows that the latter estimator is converging at the parametric rate. 
%For the first scenario, we consider the case where $\epsilon$ has a density that is bounded away from zero on a compact interval, and we aim to estimate integrals of $m_0$. For simplicity, we assume that $\epsilon$ is uniformly distributed on a symmetric interval about zero (the symmetry assumption ensures that $\epsilon$ has mean zero) and without loss of generality, possibly changing scale in $\epsilon$, $X$ and $Y$, we assume that the support of $\epsilon$ is $[-1,1]$. 

%If the Assumptions \ref{assm:A0:basic:m0-Xi} to \ref{assm:A2:Phi-cts}  hold, and $\epsilon$ has a bounded support, then there exists at least a solution to \eqref{Minim2} that is bounded in sup-norm by a deterministic constant. If Assumptions \ref{assm:A3:f0-bounded} to \ref{assm:A5}  also hold, then  for such a solution,  the claim in  \eqref{eq:rateab}  holds with $a=0$ and $b=1$.   WHERE 

\medskip

\begin{theo}\label{theo: parametricrate}
 Assume that Assumptions~\ref{assm:A0:basic:m0-Xi} to 
\ref{assm:A2:Phi-cts}  hold, and that either $\epsilon$ is supported on a finite set of points  or uniformly distributed on some compact.  Let  $\wh m_n$ denote a solution to \eqref{Minim2} which is piecewise constant, right-continuous and bounded in the sup-norm by a deterministic constant, for all $n \ge 1$ (which exists with probability $1$), and $\mathbb F_n$ the empirical CDF based on $X_1,..., X_n$. The following holds true.
\begin{itemize}
\item If $\epsilon$ is supported on a finite set of points,  then for all integers $k \ge 1$
\begin{eqnarray}\label{eq:moments}
\left \vert \int_0^1  \widehat m^k_n(x) d\mathbb F_n(x) - \int_0^1 m^k_0(x) dF_0(x)  \right \vert= O_P(n^{-1/2}), 
\end{eqnarray}
\item  If $\epsilon$ is uniformly distributed on a compact,  then 
\begin{eqnarray*}
\left \vert \int_0^1  \widehat m_n(x) d\mathbb F_n(x) - \int_0^1 m_0(x) dF_0(x)  \right \vert= O_P(n^{-1/2}).
\end{eqnarray*}

\end{itemize}

\end{theo} 

\medskip

Note that when $\epsilon$ is a uniformly distributed noise (the second case in the above theorem),  the convergence rate of $\wh m_n$ cannot be obtained from the results obtained in Section \ref{sec: rate}. Indeed, considering without loss of generality the case where the support is $[-1, 1]$, a uniform distribution does not belong to the ordinary smooth nor to supersmooth categories since in this case
\begin{eqnarray*}
\vert \phi_\epsilon(t)  \vert =  \left \vert \frac{\sin(t)}{ t} \right \vert,  \  \ \textrm{for $t \ne 0$}
\end{eqnarray*}
and hence   $\vert \phi_\epsilon(t)  \vert $ cannot be bounded below by $d_0/\vert t \vert^\beta$ for some $d_0, \beta > 0$  nor by  $d_0 \vert t \vert^\alpha \exp(- \vert t \vert^\beta/\gamma)$ for some $d_0, \alpha, \beta, \gamma > 0$.  Thus, the convergence rate of $\wh m_n$ for such a noise distribution is still an open problem. Nevertheless, Theorem  \ref{theo: parametricrate} shows clearly that our estimator behaves reasonably well when the goal is estimation of the first moment of $m_0(X)$.

\section{Fenchel Optimality Conditions}\label{sec: fenchel}

In view of the computational section below, we derive in this section the optimality conditions related to the optimization problem defining the estimator $\widehat{m}_n$.  Recall that by Assumption  A0, the covariates $X_1, \ldots, X_n$ are assumed to belong to $[0,1]$.  In the following,  we denote by $\widehat{m}_n$ a piecewise constant and right-continuous solution to \eqref{Minim2}, see Proposition \ref{prop: exist}. For some $1 \le p \le n$, we write
$ \mn{n}{1}< \ldots  < \mn{n}{p}$  for the distinct values taken by $\widehat{m}_n$  on $[0,1]$. %% \\
We will use the following assumption on the density $\phie$. 

\medskip
\medskip

%% \par \noindent   \textbf{Assumption $A6$}:
\begin{assumption}{A6}
  \label{assm:a6}
  The density $f_\epsilon$ is continuously differentiable such that 
  $$\sup_{t \in \RR} \vert f'_\epsilon(t) \vert \le D,$$
  for some constant $D > 0$, and
  $\int_{\RR} \vert f'_\epsilon(t) \vert dt < \infty$.
\end{assumption}

\medskip
\medskip

\begin{prop}\label{Fenchel}
  Let
  %% $\widehat m_1 < \ldots < \widehat m_p$
  $\mn{n}{1} < \ldots < \mn{n}{p}$
  be the distinct values of the estimator $\widehat m_n$.
  Let Assumption~\ref{assm:a6} hold.
  Then,  for any  $k \in \{1, \ldots, p \}$  
%\begin{eqnarray}\label{Cond1}
%\int_{\RR} \Big ( \mathbb{H}_n(y)  -  n^{-1}  \sum_{i=1}^n \Phi_\epsilon(y  -  \widehat m_n(X_i)) \Big )  \sum_{i=1}^j f_\epsilon(y- \widehat m_n(X_{(i))})  dy \le 0, 
%\end{eqnarray}
%for all $j =1, \ldots, n $, and  
\begin{eqnarray}\label{Cond1}
  \int_{\RR} \Big ( \mathbb{H}_n(y)  -  n^{-1}  \sum_{i=1}^n \Phi_\epsilon(y  -  \widehat m_n(X_i))  \Big )  f_\epsilon(y- 
  %% \widehat m_k
  \mn{n}{k}
  )   dy = 0. 
\end{eqnarray}
Furthermore, this condition can be equivalently re-written as 
\begin{eqnarray}\label{AltCond2}
  1-\frac{1}{n} \sum_{i=1} ^n \Phi_\epsilon(Y_i-
  \mn{n}{k})   = \frac{1}{n} \sum_{i=1}^n \int_{\RR} \Phi_\epsilon(y- \widehat m_n(X_i)) f_\epsilon(y- \widehat m_k)   dy 
\end{eqnarray}
for $k = 1, \ldots, p$.
\end{prop}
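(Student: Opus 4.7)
The plan is to exploit the strict ordering $\widehat m_1 < \ldots < \widehat m_p$ of the distinct values of $\widehat m_n$: each value can be perturbed by a small amount without violating monotonicity, and a first-order optimality argument along the resulting one-parameter family yields (\ref{Cond1}). For each $k \in \{1, \ldots, p\}$, let $I_k = \{i : \widehat m_n(X_i) = \widehat m_k\}$, $n_k = |I_k|$, and define $m_t \in \mathcal M$ to be the function obtained from $\widehat m_n$ by replacing the constant piece at level $\widehat m_k$ (which is an interval by monotonicity) with $\widehat m_k + t$, leaving the other levels untouched. Because the $\widehat m_j$'s are strictly ordered, there exists $\delta_k > 0$ such that $m_t \in \mathcal M$ for all $|t| < \delta_k$, so $g(t) := \mathbb M_n(m_t)$ attains a local minimum at $t = 0$.

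Writing $A_t(y) := \mathbb H_n(y) - n^{-1}\sum_{i=1}^n \Phi_\epsilon(y - m_t(X_i))$, the key decomposition
\begin{equation*}
A_t(y) = A_0(y) + \frac{n_k}{n}\bigl[\Phi_\epsilon(y - \widehat m_k) - \Phi_\epsilon(y - \widehat m_k - t)\bigr]
\end{equation*}
leads formally to $g'(0) = (2n_k/n)\int_\RR A_0(y) f_\epsilon(y - \widehat m_k)\, dy$, and imposing $g'(0) = 0$ gives (\ref{Cond1}) (since $n_k \geq 1$). The main technical step is to justify the interchange of derivative and integral. Assumption~\ref{assm:a6} forces $f_\epsilon$ itself to be bounded (for instance via $|f_\epsilon(y)| \leq |f_\epsilon(a)| + \int|f_\epsilon'|$), say by $M$, so by the mean-value theorem $|(A_t - A_0)/t| \leq (n_k/n)M$; meanwhile $|A_t + A_0| \leq 2|A_0| + |t|M$. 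Since $|A_0|$ is integrable (bounded by $1$ globally and, in the tails, by a translated $\Phi_\epsilon$ or $1 - \Phi_\epsilon$ whose integrals are finite under the first-moment condition $E|\epsilon|<\infty$ implied by Assumption~\ref{assm:A0:basic:m0-Xi}), dominated convergence delivers the derivative formula.

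To obtain the equivalent form (\ref{AltCond2}), I would evaluate $\int_\RR \mathbb H_n(y) f_\epsilon(y - \widehat m_k)\, dy$ by substituting $\mathbb H_n(y) = n^{-1}\sum_{j=1}^n \one_{\{Y_j \leq y\}}$ and applying Tonelli's theorem to the non-negative integrand, which gives
\begin{equation*}
\int_\RR \mathbb H_n(y) f_\epsilon(y - \widehat m_k)\, dy = \frac{1}{n}\sum_{j=1}^n \int_{Y_j}^\infty f_\epsilon(y - \widehat m_k)\, dy = 1 - \frac{1}{n}\sum_{j=1}^n \Phi_\epsilon(Y_j - \widehat m_k),
\end{equation*}
and substituting into (\ref{Cond1}) yields (\ref{AltCond2}).
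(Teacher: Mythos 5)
Your overall strategy is the same as the paper's: perturb the single level $\widehat m_k$ by $t$, use that $t\mapsto \mathbb M_n(m_t)$ has a local minimum at $t=0$ (the strict ordering of the levels guaranteeing admissibility of $m_t$ for small $|t|$), and identify the derivative at $0$ as $(2n_k/n)\int_{\RR} A_0(y)f_\epsilon(y-\widehat m_k)\,dy$. The decomposition of $A_t$ and the Tonelli computation for \eqref{AltCond2} are exactly what the paper does. Where you differ is in how you justify the derivative: the paper expands $\Phi_\epsilon(y-\widehat m_k-t)$ to second order with an integral remainder and bounds every resulting term using $\sup|f'_\epsilon|\le D$ and $\int|f'_\epsilon|<\infty$, whereas you propose a mean-value bound plus dominated convergence, which (once correct) is lighter and only needs $\sup f_\epsilon<\infty$.

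However, the domination as you state it does not work. You bound the difference quotient of the integrand by
\begin{equation*}
\Bigl|\frac{A_t-A_0}{t}\Bigr|\cdot|A_t+A_0|\;\le\;\frac{n_k}{n}M\bigl(2|A_0|+|t|M\bigr),
\end{equation*}
and the term $(n_k/n)M^2|t|$ is a nonzero constant in $y$ for each fixed $t$, hence not integrable over $\RR$; there is no $t$-independent integrable dominating function here, so the dominated convergence theorem cannot be invoked in this form. The gap is real but easily repaired: split
\begin{equation*}
\frac{g(t)-g(0)}{t}=2\int_{\RR}A_0(y)\,\frac{A_t(y)-A_0(y)}{t}\,dy+\frac{1}{t}\int_{\RR}\bigl(A_t(y)-A_0(y)\bigr)^2dy .
\end{equation*}
In the first integral the integrand is dominated by $2(n_k/n)M|A_0|$, which is integrable by your tail argument for $A_0$, so dominated convergence applies and gives the claimed limit. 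For the second integral, note $0\le\Phi_\epsilon(u)-\Phi_\epsilon(u-t)\le M|t|$ pointwise while $\int_{\RR}\bigl(\Phi_\epsilon(u)-\Phi_\epsilon(u-t)\bigr)du=|t|$, so $\int_{\RR}(A_t-A_0)^2\,dy\le (n_k/n)^2M t^2$ and the second term is $O(|t|)\to0$. With this correction your argument is complete and slightly more economical than the paper's Taylor-expansion proof, which additionally uses $\int_{\RR}|f'_\epsilon|<\infty$.
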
  

% \remCRD{I think notation needs to be fixed right now, or I am misunderstanding something.  Right now it seems to me that \eqref{Cond2} implies that the inequality in \eqref{Cond2} is actually an equality (which I do not think should be true).    I think we need a different sort of indexing...   does the following work?

%   ``Let $i_1 < \ldots < i_p$ denote the indices of the order statistics corresponding to the jumps of $\wh m_n$.  Thus $\wh m_n(X_{i_j}+) \ne \wh m_n(X_{i_j}-)$ where $\wh m_n( \cdot +)$ and
%   $\wh m_n( \cdot -)$ denote the right and left limits of $\wh m_n$, respectively.''

%   Then replace the sum ``$\sum_{i=1}^j f_\epsilon(y - \wh m_n(X_i))$ '' in \eqref{Cond1} by
%   ``$\sum_{i=1}^{i_j} f_\epsilon(y - \wh m_n(X_{(i)}))$ for $j=1, \ldots, p$''? 

% }

%% \bigskip

\begin{remark}
The alternative form in (\ref{AltCond2}) gives a useful way of verifying numerically the second equality condition via numerical integration. Indeed, with  $m=\wh m_n(X_i)$ and $m'=\hat m_k$, the integral on the right side of (\ref{AltCond2}) takes the form
\begin{eqnarray*}
   \int_{\RR} \Phi_\epsilon(y-m) f_\epsilon(y-m') \; dy 
&=&B(m'-m),
\end{eqnarray*}
where for all $m\in\RR$, $B(m)=E\Phi_\epsilon(\epsilon+m)=\int \Phi_\epsilon(y) f_\epsilon(y-m)dy.$ Explicit formulas of $B(m)$, for $m\in\RR$, can be even found for some distributions such as Laplace or Gaussian; see Subsections~\ref{sec:grad-comp-lapl} and \ref{sec: Gaussian}.
\end{remark}

\begin{remark}
  Consider the case where $\wh m_n$ takes one unique value, denoted
   $\mn{1}{1}$.
  %% $\hat m_1$.
  Then
  the right side of   \eqref{AltCond2} equals $\int_{\RR} \Phi_{\epsilon}( y -
  %% \hat m_1
 \mn{1}{1}
  ) \phie(y-
  % \hat m_1
   \mn{1}{1} % 
  )
  dy = \int_{\RR} \Phi_\epsilon (y) \phie(y) dy$ which equals $E( \Phi_\epsilon(\epsilon)) = E(U) = 1/2$ where $U$ is Uniform$(0,1)$.
\end{remark}

\section{Computation}\label{sec: comp}
 Recall that our goal is to minimize $\MM_n$.  Write $\bs m := (m_1, \ldots, m_n)$, so that the objective function can be written as 
%% and consider minimizing 
\begin{equation}\label{eq: Mm}
\MM_n(\bs m) := \int_{\RR} \lb \HH_n - n^{-1} \sum_{i=1}^n \Phi_\epsilon(\cdot - m_i) \rb^2
\end{equation}
 (by a slight abuse of notation).  To minimize $\MM_n$,  we can compute an unconstrained minimizer ${\bs{\tilde m}}$ of $\MM_n$ (i.e., we do not force $\tilde m_i \le \tilde m_{i+1}$ for $i=1, \ldots, n-1$), and then the overall solution would be given by reordering the entries of ${\bs {\tilde m}}$ so that it is nondecreasing; i.e., $\wh m_n(X_{(i)}) := \tilde m_{(i)}$ (where $\tilde m_{(1)} \le \cdots \le \tilde m_{(n)}$). The gradients $\frac{\partial}{\partial m_i} \MM_n(\bs m)$ can be computed using that
\begin{equation}\label{eq: gradient}
\frac{\partial}{\partial m_i} \MM_n(\bs m)=2 n^{-1}-2n^{-2}
\sum_{\alpha=1}^n\left\{ \Phi_\epsilon(Y_\alpha - m_i)  + B(m_i-m_\alpha)\right\}
\end{equation}
where for all $m\in\RR$, $B(m)=E\Phi_{\epsilon}(\epsilon+m)$, see  Appendix \ref{appendixgradient}.
In Appendix~\ref{sec:gradient-computation}
we show how to derive the gradient of $\MM_n$ when $\Phi_\epsilon$ is either a Gaussian or a Laplace distribution.  Thus, we can consider using a (first order) gradient descent algorithm for computation.

However, because the objective function is symmetric in all its $m_i$ components, at any stationary point at which some of the $m_i$'s are equal, no gradient-based method can tell if we could improve the objective function by allowing the equal $m_i$'s to take separate values.  Thus, we will consider a second order method to solve this problem: we will compute a second derivative in a direction related to separating $m_i$ into two distinct values.

Let $\MM_{n,p}$ denote the objective function parameterized such that it takes $2p$ arguments and the regression function $m$ is represented by  $p$ values
$\bs m := (m_1, \ldots, m_p)$ and $p$ weights $\bs \pi := (\pi_1, \ldots, \pi_p)$
($\sum_{\alpha=1}^p \pi_\alpha   =1 $), so that
% \begin{equation*}
 $ \MM_{n,p}(\bs \pi, \bs m) =   
  \int_{\RR}
  \big( \HH_n(y) -  \sum_{\alpha=1}^p \pi_\alpha \Phi_{\epsilon}(y- m_\alpha) \big)^2
  dy.$
%\end{equation*}

The following algorithm is an active set type of algorithm.  
At a point $(\bs {\wt \pi}_p , \bs {\wt m}_p) \in \RR^{2p}$,
define
\begin{equation*}
  \bs{\wt \pi}_{p+1} :=
  (\wt \pi_1, \ldots, \lambda \wt \pi_i, (1-\lambda) \wt \pi_{i},
  \wt \pi_{i+1}, \ldots, \wt \pi_p),
  \text{ and }
  \bs {\wt m}_{p+1} := (\wt m_1, \ldots, \wt m_i, \wt m_i, \ldots, \wt m_p),
\end{equation*}
and consider
$  \MM_{n,p+1}(\bs {\wt \pi}_{p+1}, \bs {\wt m}_{p+1}),$
where $\lambda \in [0,1]$.
Let $\theta := (m_i - m_{i+1})/2$ and,
for $1 \le i \le p$,
let
\begin{equation}
  \label{eq:defn-Cip}
  \mathfrak{C}_{i,p} := \partial^2 / \partial \theta^2 \MM_{n,p+1}(\bs{\wt \pi}_{p+1}, \bs{\wt m}_{p+1}). 
\end{equation}
This is the curvature in the direction in which we separate out the $i$th component into two separate components.
Note that $\mathfrak{C}_{i,p} \equiv \mathfrak{C}_{i,p}(\bs{\wt \pi}_p, \bs{\wt m}_p)$ depends on $(\bs {\wt \pi}_p , \bs {\wt m}_p)$ but we will suppress that dependence below for succinctness.
Also, a priori, $\mathfrak{C}_{i,p}$ depends on $\lambda$.  However we can see from
\eqref{eq:10002} that in fact $\mathfrak{C}_{i,p}$ is minimized by taking $\lambda = 1/2$ always, so we do this from now on. 
%% We will use this when $m_{i}=m_{i+1}$ and we are at a stationary point.
We will use $\mathfrak{C}_{i,p}$ when
$(\bs {\wt \pi}_p , \bs {\wt m}_p)$ is a stationary point.

With the above setup, we describe our active set type of algorithm in Algorithm~\ref{alg:active-set} below.  Algorithm~\ref{alg:active-set} includes a call to a generic subroutine to find the optimum value of $\bs m$ given a fixed length $p$ and a fixed value of $\bs{\pi}$ above (corresponding to ``counts'' in the algorithm).  This is referred to as the ``fixed-$p$-subroutine'' in the algorithm. 
Our suggested implementation is to use a ``trust-region'' second-order method for this generic subroutine, see \cite[Section 5.1]{Fletcher:1987wy} or \cite[Section 4.2]{Nocedal:1999iy} for details of a trust region method.
Algorithm~\ref{alg:active-set}  also includes a call to a subroutine, for collapsing (approximately) unique entries, described in Algorithm~\ref{alg:activate-constraints}.
  %% We use the implementation in the R package "trust".
To optimize a function, the method requires to be able to compute the function, its gradient, and its Hessian. The latter two we have derived closed forms for, for error distributions that are either Gaussian, Laplace, or mixtures of Gaussians.  We use numerical integration to compute the objective function itself at this point.
In the algorithms, we use the notation $\bs m_{i:j}$ to denote the subvector of $\bs m$ given by the indices $\{ i, \ldots, j \}$.

\begin{algorithm} % [H]
  \label{alg:active-set}
  \SetAlgoLined
  %% \KwResult{Inputs:}
  \SetKwInOut{Input}{input}\SetKwInOut{Output}{output}
  \Input{ $p^{(0)} \in \RR$,
    $\bs m^{(0)} := ( m^{(0)}_1, \ldots , m^{(0)}_p )$,
    counts$^{(0)} := ($counts$^{(0)}_1, \ldots, $counts$^{(0)}_p)$
    where $p = p^{(0)}$, and
    Tolerance parameter $eps$,
    %% Number iterations $K$,
    Stepsize $\eta$}

  \Output{$\wh m_n(X_{(1)}),\dots,\wh m_n(X_{(n)})$}

  \While{end criterion not met}{
    Do fixed-$p$-subroutine($\bs m^{(i-1)}$, counts$^{(i-1)}$):
    %% Initialize with 
    Find %% $eps$-optimal
    fixed-$p^{(i)}$ and fixed-counts optimal % 
    $\bs m$, and  assign to $\bs m^{(i)}$\;

    Do activate-constraints-subroutine($\bs m^{(i)}$, counts$^{(i-1)}$, $eps$):
    run Algorithm~\ref{alg:activate-constraints} which collapses the
    non-unique entries in $\bs m^{(i)}$, and store the output in
    $\bs m^{(i)}$ and counts$^{(i)}$, and let $p^{(i)}$ be the new (smaller)
    number of unique entries\;

    Compute $\mathfrak{C}_{j,p}$ (see \eqref{eq:defn-Cip}) for each $j=1,\ldots,p$\;
    \uIf{$\min_j \mathfrak{C}_{j,p} \ge 0$}{End algorithm\;}
    \uElse{
      $k \leftarrow \argmin_i \mathfrak{C}_{i,p}$\;
      $p^{(i)} \leftarrow p^{(i)} + 1$\;
      counts$^{(i)} \leftarrow ($counts$^{(i)}_1, \ldots, $ counts$^{(i)}_k/2, $
      counts$^{(i)}_k/2$, $\ldots$, counts$^{(i)}_p)$\;
      $\bs m^{(i)} \leftarrow (m^{(i)}_1, \ldots, m^{(i)}_k - \eta, m^{(i)}_k + \eta, \ldots, m^{(i)}_p)$\;
    }
    $i \leftarrow i+1$\;
  }

  \tcc{Reconstruct full length solution}
  The solution vector is given by the (unique, sorted) elements  $m^{(K)}_i$, $i=1, \ldots, p_K$,  each repeated counts$^{(K)}_i$ times, respectively, where $K$ is the number of iterations run.
  \caption{Active set algorithm}
\end{algorithm}
\begin{algorithm}% [H]
  \label{alg:activate-constraints}
  \SetAlgoLined
  %% \KwResult{Inputs:}
  \SetKwInOut{Input}{input}\SetKwInOut{Output}{output}
  \Input{$\bs m := ( m_1, \ldots , m_p )$,
    counts$ := ($counts$_1, \ldots, $counts$_p)$,
    $eps$ (tolerance parameter)}

  \Output{$\bs{m}^{\text{new}} \in \RR^{\wt p}$,
  counts$^{\text{new}} \in \RR^{\wt p}$,  where $1 \le \wt p \le p$}
  
%% $n_1 \leftarrow p$\;
  %% counts$^{(0)} \leftarrow (1, \ldots, 1)$ (length $n$)\;
  %% \tcc{Group (approximately) non-unique entries}
  newidx $\leftarrow 1$, begidx $\leftarrow 1$\;
  \For{$j \leftarrow 2$ \KwTo $p+1$}{
    \If{$(j == p+1)$ OR $(m_j - m_{\text{begidx}} > $ eps)}{
      $m^{\text{new}}_{\text{newidx}} \leftarrow  $  mean($m_{\text{begidx}:(j-1)})$\;
      counts$^{\text{new}}_{\text{newidx}} \leftarrow $  sum of counts$_{\text{begidx}:(j-1)}$\;
      begidx   $ \leftarrow j$\;
      newidx $\leftarrow$ newidx $+1$\;
    }
    $\wt p \leftarrow $ length of  $\bs m^{\text{new}}$\;
  }
  \caption{Activate constraints: group (approximately) non-unique entries in
    $\bs m$}
\end{algorithm}

In Appendix~\ref{sec:algorithm-notes} we provide a few comments about practical implementation of the algorithm.

%% Is that true that we should force counts to be integers ......? 

\section{Extension to the Case of Unknown Noise Distribution}\label{sec: ext}

\subsection{Estimation of the Noise Distribution in the Semi-supervised Setting}

In general, full knowledge of the distribution of $\epsilon$  might not be available which means that one needs to estimate it.  In this case, it may be possible to collect a sample of $\epsilon$'s, $\epsilon_1^*, \ldots, \epsilon_{\Ne}^*$, from a separate data source.  These can be used to construct an estimate of $\Phi_{\epsilon}$ which can be then plugged into the objective function.  The sample of $\epsilon$'s does not necessarily need to be independent of the $Y$ or $X$ samples (note, for instance, \citealt{dattner2016adaptive}).
%% Need to check this by simulation. 
There are a variety of ways one may arrive at the sample of $\epsilon$'s.  
In some cases, the main data set may consist of unlinked covariates and responses, but there may be a smaller (or sub) data set of linked/paired covariates and responses, $(X^*_1, Y^*_1), \ldots, (X_{\Ne}^*, Y_{\Ne}^*)$.  In this case, one may run  the traditional monotone regression on this subset to obtain a monotone estimator $\wh m_{\Ne}^*$  from which one can compute the estimated residuals by $\wh \epsilon^*_i := Y^*_i - \wh m_{\Ne}^*(X^*_i)$, $i=1,\ldots, \Ne$.
In general, the previously-described setting might be considered to be one of {\it semi-supervised learning}, where only a  part of the data is unlinked.  It would be useful with such data to learn from all of it simultaneously.  This may be possible using the M-estimation framework we have proposed in this paper, but we leave an investigation of that question for future research.

\subsection{Estimation of the Noise Distribution with Longitudinal Responses}
\label{sec:estim-noise-distr}

Another common framework in which we may want to estimate $\Phi_{\epsilon}$ from data is the one where we have repeated (or longitudinal) observations on the response $Y$.  Assume we observe $X_1, \ldots, X_{\nx}$ as before and also $Y_{1,j}, \ldots, Y_{\ny, j}$, where  we take $j \in \{ 1, 2 \}$ (for simplicity).  We will impose the assumption that the distribution of $\epsilon$ is  symmetric around $0$.  We also assume that $Y_{i,j} = m_0(\tilde{X}_i) + \epsilon_{i,j}$ (for some $\tilde{X}_{i}$ which does not belong to our data set and need not  be observed), where $\epsilon_{i,1}$ is independent of $\epsilon_{i,2}$, and both are independent of all other error terms and all $X$ variables.
Then, as
in \cite{Carroll:2006cg}  and \cite{dattner2016adaptive}, %% cited by dattner et al 
we can let $Y^*_{i} := (Y_{i,1} + Y_{i,2}) / 2 = m_0(\tilde{X}_i) + \epsilon'_i$ where $\epsilon'_i := (\epsilon_{i,1} + \epsilon_{i,2})/2$.   If we let $\epsilon^*_{i} := (Y_{i,1} - Y_{i,2}) / 2 = (\epsilon_{i,1} - \epsilon_{i,2}) / 2$, then it follows from the assumption of $0$-symmetry that $\epsilon^*_i \sim \epsilon'_i$.  Then,  we can use $X_{1}, \ldots, X_{\nx}$ and $Y^*_1, \ldots Y^*_{\ny}$ as our unlinked data and $\epsilon^*_1, \ldots \epsilon_{\ny}^*$ to estimate $\Phi_{\epsilon'}$. 

Note that computing the estimator of $m$ in practice as described in Algorithm \ref{alg:active-set} generally requires computation of
derivatives of $\MM_n$, e.g., 
the gradients
$\frac{\partial}{\partial m_i} \MM_n(\bs m)$, where we use the same slight abuse of notation as in \eqref{eq: Mm}. The gradient depends on $\Phi_\epsilon$ and is given by \eqref{eq: gradient}. Hence, in the case where $\Phi_\epsilon$ is unknown, the gradient cannot be computed directly and has to be replaced by an appropriate estimator. In the setting of longitudinal responses, we have
\begin{equation*}
  \frac{\partial}{\partial m_i} \MM_n(\bs m)
  =2 n^{-1}-2n^{-2}
  \sum_{\alpha=1}^n\left\{ \Phi_{\epsilon'}(Y_\alpha^* - m_i)  + B(m_i-m_\alpha)\right\}
\end{equation*}
where $\Phi_{\epsilon'}$ denotes the common distribution function of $\epsilon_1',\dots,\epsilon_n'$ and where for all $m\in\RR$, $B(m)=E(\Phi_{\epsilon'}(\epsilon'+m))$.
 With $\hat\Phi$ the empirical distribution function based on the sample  $\epsilon^*_1, \ldots \epsilon_{\ny}^*$, the gradient can be estimated by 
$$2 n^{-1}-2n^{-2}
    \sum_{\alpha=1}^n\left\{ \hat\Phi(Y_\alpha^* - m_i)  + \hat B(m_i-m_\alpha)\right\}$$
where $\hat B(m)=\ny^{-1}\sum_{i=1}^{\ny}\hat\Phi(\epsilon_{i}^*+m)$. The advantage of this estimator is that it does not require any choice of tuning parameters.

\section{Demonstrations on Synthetic and Real Data} % Simulations and data}
\label{sec:demonstr-synth-real}

\subsection{Computations on Synthetic Data}

In this subsection we present simulation studies for our method and compare our minimum contrast estimator to the deconvolution method of \cite{carp16}.  We also  compare to classical/linked (oracle) isotonic regression (which uses matching information that the other estimators do not use).
We will use mean-squared errors (MSE's) for comparison: for an estimator $\wh m$, we report $n^{-1} \sum_{i=1}^n (\wh m(X_{(i)}) - m_0(X_{(i)}))^2.$
There are 2 output tables,
Tables~\ref{tab:sims_n100} and \ref{tab:sims_n1000}, containing Monte Carlo estimates of MSE's.
The sample sizes are   taken to be $n=100$ and $n=1000$ for the first and second tables respectively.
In both tables, we used 
1000 Monte Carlo replications.
%% and used $20 \times n$ gradient iterations for implementing our method using Algorithm~\ref{alg:gradient-descent}.
%% with $J=5$ inner ones, so total number of $20 \times n$).
%% First table is means and second is standard deviations.
%% (Note: the CS method had trouble converging so sometimes returned NA's and I haven't looked into fixing this.  I removed the NA's to compute the mean, but I returned NA for the standard deviations since I don't think the latter is reliable to look at.  The former is also not reliable, but I thought we could look at the numbers we have anyway.)
We used 5 different true mean functions $m_0$.  They are (up to translation and additive constants), together with the abbreviations that denote them, gathered in
Table~\ref{m0abbr}.
%the following table: 
\begin{table}[!h]
\begin{center}
\begin{tabular}{c|c}
\hline
$m_0$ & Abbreviation   \\ 
\hline
 $x$   &  \lq\lq lin\rq\rq   \\
$0$  &  \lq\lq const\rq\rq \\
$2 \one_{[0,5)}(x) + 8 \one_{[5,10]}(x)$   &   ``step2'' \\
$ 5 \one_{[10/3,20/3)}(x) + 10 \one_{[20/3,10]}(x)$   &   ``step3'' \\
$(x^4 \one_{(0,5]}(x) - x^4 \one_{[-5,0)}(x)) / 120$   &  ``power''\\
 \hline
 \end{tabular}
\end{center}
\caption{The true monotone regression function used in the simulations and their abbreviations. } 
\label{m0abbr}
\end{table}
In the tables, our estimator is ``UL BDD'' (where ``UL'' stands for ``unlinked''), \cite{carp16}'s is ``UL CS'', and isotonic regression is ``L mono''  (for linked regression, that is, based on the classical case where all covariates and responses are perfectly linked).  Our simulations were performed taking both Laplace and Gaussian errors with standard deviation 1 (and both unlinked methods are well-specified).
In Figure~\ref{fig:sim_one-run} we present the output from a single Monte Carlo run, so that the true functions along with sample data and estimates can be visualized.
The 10 plots in the figure are all based on $n=100$ samples for each of 10 settings: the 5 monotone regression functions $m_0$  of Table \ref{m0abbr} with Laplace distributed errors (left column) and Gaussian errors (right column).  % (Note: this Monte Carlo corresponds to (a single run from) the settings of  Table~\ref{tab:sims_n100}.)

To implement the deconvolution method of \cite{carp16}, we needed to monotonize the estimated CDF so that we could compute its generalized inverse.  \cite{carp16} do not mention a specific method for doing this; we chose to replace
$\hat F(x)$ with $\max( \hat F(y) : y \le x )$.  
 The deconvolution estimator at $X_{(1)}$  or $X_{(n)}$ was occasionally unstable because of the steepness of the deconvolution-estimated quantile function. In that case, we have dropped those values out without a noticeable effect.  Furthermore, the bandwidth for the  deconvolution estimator of the CDF was chosen using the bootstrap method of  \cite{Delaigle:2004fc}.

From the output one can see that it is not always the case that the Gaussian noise is harder for our method than Laplace is. This is an interesting finding because it is known that deconvolution is harder with Gaussian noise than it is with Laplace; see e.g. \cite{Fan91}.  This suggests that, although unlinked regression is tightly connected to deconvolution, considering the problem from this point of view may not be the most efficient approach.  The output also shows that the CS estimator performs poorly especially when $m_0$ has discontinuities.  In general our estimator is  competitive with the CS deconvolution estimator and in some cases is significantly better.

%% \begin{figure}[b]
\begin{figure}[p] %% per style guide jmlr
  \centering
  \includegraphics[scale=.9]{}   
  \caption[Single Monte Carlo output.]{Output from a single Monte Carlo simulation, with $n=100$, and $X_i$ i.i.d.\ uniform on $[0,10]$.  The left column has Laplace errors and the right has Gaussian errors, both with standard deviation $1$. The dotted gray line is the true $m_0$, the red line is our minimum contrast estimator, the blue line is the deconvolution estimator of
    \cite{carp16},
    %% Carpentier and Schluter,
    and the green line is a classical/linked isotonic regression.  }
  \label{fig:sim_one-run}
\end{figure}

%% \begin{table}[b]
\begin{table}
  \centering
  \begin{tabular}{l|r|r|r}
    \hline
    & UL BDD & UL CS &  L mono \\
    \hline
    lin, Laplace & 0.31 & 0.27 & 0.16\\
    \hline
    const, Laplace & 0.18 & 0.25 & 0.05\\
    \hline
    step2, Laplace & 0.33 & 2.84 & 0.09\\
    \hline
    step3, Laplace & 0.43 & 2.74 & 0.12\\
    \hline
    power, Laplace & 0.29 & 0.47 & 0.13\\
    \hline
    lin, Gauss & 0.48 & 0.78 & 0.16\\
    \hline
    const, Gauss & 0.10 & 1.23 & 0.05\\
    \hline
    step2, Gauss & 0.19 & 2.48 & 0.09\\
    \hline
    step3, Gauss & 0.32 & 2.59 & 0.12\\
    \hline
    power, Gauss & 0.43 & 0.59 & 0.14\\
    \hline
  \end{tabular}
  \caption{Monte Carlo'd MSE's, $n=100$.}
  \label{tab:sims_n100}
\end{table}

%% \begin{table}[b]
\begin{table}
  \centering  
  \begin{tabular}{l|r|r|r}
    \hline
    MSE's & UL BDD & UL CS & L mono \\
    \hline
    lin, Laplace & 0.10 & 0.10 & 0.03\\
    \hline
    const, Laplace & 0.06 & 0.13 & 0.01\\
    \hline
    step2, Laplace & 0.12 & 3.58 & 0.01\\
    \hline
    step3, Laplace & 0.15 & 3.34 & 0.02\\
    \hline
    power, Laplace & 0.09 & 0.36 & 0.03\\
    \hline
    lin, Gauss & 0.29 & 0.18 & 0.03\\
    \hline
    const, Gauss & 0.03 & 0.70 & 0.01\\
    \hline
    step2, Gauss & 0.07 & 1.04 & 0.01\\
    \hline
    step3, Gauss & 0.13 & 1.40 & 0.02\\
    \hline
    power, Gauss & 0.25 & 0.25 & 0.03\\
    \hline
  \end{tabular}
  \caption{Monte Carlo'd MSE's, $n=1000$.}
  \label{tab:sims_n1000}
\end{table}

\subsection{Computations on CEX Data}
%\remCRD{ The figure included here is based only on 600 out of the 6000 possible data points; this was because of trouble getting convergence based on the entire sample.  I will eventually update the plot once I have results for the entire data set.}

Figure~\ref{fig:CEX-plots} shows plots based on the United States' Consumer Expenditure Survey (CEX).  The CEX survey has detailed data on both income and the expenditure patterns of so-called ``U.S.\ consumer units'' (roughly,  households), see \cite{Rugglesetal}. The CEX consists of two surveys, the ``Interview'' and the ``Diary'';  the data we use here come from the former. Since the CEX survey has data on both income and expenditures, we can use regular (matched / standard) regression techniques.
We compare a U.S.\ consumer unit's food expenditure to its income by regressing the former on the latter.
Moreover, by simply ignoring the $X$-$Y$ pairing information  we can also use our approach detailed above for the unlinked setting, and then compare the results obtained with both approaches.
Note that we prefer here to use  data that are matched, for the sake of being able to ``validate'' our results, but 
  there are many settings where matching naturally lacks; for instance a firm may be able to gather information on an individual consumer's expenditures on the firm's products, but the firm would not be able to know the individual's income information.  They would be able to access that expenditure information (at least nationally) through the CEX, which creates a data set with unlinked covariates and responses.
% \comf{Note that if the data in the CEX survey are matched, there are many settings where the matching naturally lacks; for instance a firm may be able to gather information on an individual consumer's expenditures on the firm's products, but the firm would not be able to know the individual's income information.}

We consider the interview data only from the second quarter of 2018, for which there are approximately $6000$ respondents.  We narrow this down to $2164$ respondents who provided the relevant information, had income no larger than \$$250,000$, and reported a non-negative response for both income and food expenditure.
The survey actually follows each individual for four consecutive quarters, but we only included those who were surveyed in both quarter 2 and quarter 3. The ``residuals" were computed as described in the previous section: for each individual $i$, we computed $\tilde \epsilon_i := (Y_{i,1} - Y_{i,2})/2$ where $Y_{i,1}$ is the quarter 2 response and $Y_{i,2}$ is the quarter 3 response.  The error distribution is assumed to be Laplace distributed with $\lambda = \sqrt{ \hat \sigma^2 / 2}$ where $\hat \sigma^2$ is the empirical variance of $\tilde \epsilon_1, \ldots, \tilde \epsilon_{2164}$. 
We chose to assume that the errors follow the Laplace distribution rather than to use the full method detailed in Subsection~\ref{sec:estim-noise-distr} because the former is much faster to run.  Choosing the Laplace distribution was based on observing that it fits much better the distribution of the residuals than the Gaussian one.  %It is likely this costs something in accuracy, though.  
% Although using data for the four quarters comes with the benefit of having a much greater sample size, considering the subset from only one quarter in one year means we can avoid dealing with possible changes in sampling methods from year to year.
In Figure~\ref{fig:CEX-plots}, 
the ``UL BDD'' line is the unlinked monotone minimum contrast estimator proposed in this paper.
This estimate is {\it fully} data driven, using no oracle (matching) information.
The ``L mono'' line corresponds  to monotone regression estimator based on the matched data; i.e., the Grenander-type estimator.  Similarly the ``L linear'' line is a linear regression estimator based on the matched data.   The ``UL CS'' line corresponds to the deconvolution estimator 
of     \cite{carp16} (using the same choice of $\lambda$ we used in our method). %% choice of bandwidth chosen by 'boot1'; laplace
We also implemented a type of ``unlinked oracle'' estimator in which we used a Gaussian-mixture as the error distribution, labeled ``UL-oracle BDD'':  For this estimator we used the residuals from the matched monotone regression to estimate $\Phi_\epsilon$ (so this is oracle information which would not be available in a true unmatched problem).  We used a four component Gaussian mixture (with no variance constraint) to approximate the error distribution, which was fit using an EM algorithm to converge to a local maximum (it is well known that the global maximum is infinite).  The estimate has
  weights $( .56, .05, .33, .06)$,
  means/locations $(-345, -416, 326, 1710)$
  and standard deviations $(322, 75, 562, 1286)$.
  The
  mixture distribution is a much better fit to the residual distribution than either a
a single Gaussian or Laplace distribution, since the residual distribution is multimodal and heavy tailed.
  This estimator of $\Phi_\epsilon$ is much more dispersed than the fully data driven one; for instance, the former has standard deviation $743$ instead of $266$ for the latter one.  This leads to differences in these two estimators. 
Finally, the ``UL quantile'' line is based on matching the empirical quantiles of the $Y$  and $X$ samples: it is simply given by (connecting linearly) the points $(X_{(1)}, Y_{(1)}), \ldots, (X_{(n)}, Y_{(n)})$.

Our ``UL BDD'' estimator is somewhat accurate although it does differ noticeably from the oracle isotonic regression as well as the ``UL-oracle BDD'' estimator.  The estimator of \cite{carp16} does very poorly on this data set.  We suspect that inaccuracy in choice of the error distribution causes difficulty for both of the unlinked estimators, which is of course expected.
% An important avenue for future research is to learn how to use the  residuals $\tilde \epsilon_i$ more efficiently.
% This could be done for example by studying faster algorithms so that it is computationally feasible to use an estimate of the gradient at each iteration.

% Of note is the fact that the CEX (interview) survey is longitudinal, having 4
% measurements on the same individual.  This could allow us to estimate
% $\Phi_\epsilon$ as discussed in Section~\ref{sec: ext}.  Here to
% estimate our unmatched monotone M-estimator we assumed $\Phi_\epsilon$ to be
% Laplace with standard deviation $\sqrt{2} \cdot 525 \approx 743$.
% \remCRD{ Cecile and Fadoua, just FYI: This was based on playing around with the residual plots from the matched isotonic regression.  We should include those at some point probably (depending on how far we get in the material on estimating the unknown error distribution from longitudinal data).}

% \begin{figure}[b]
\begin{figure}
  \centering
  \includegraphics[scale=.57]{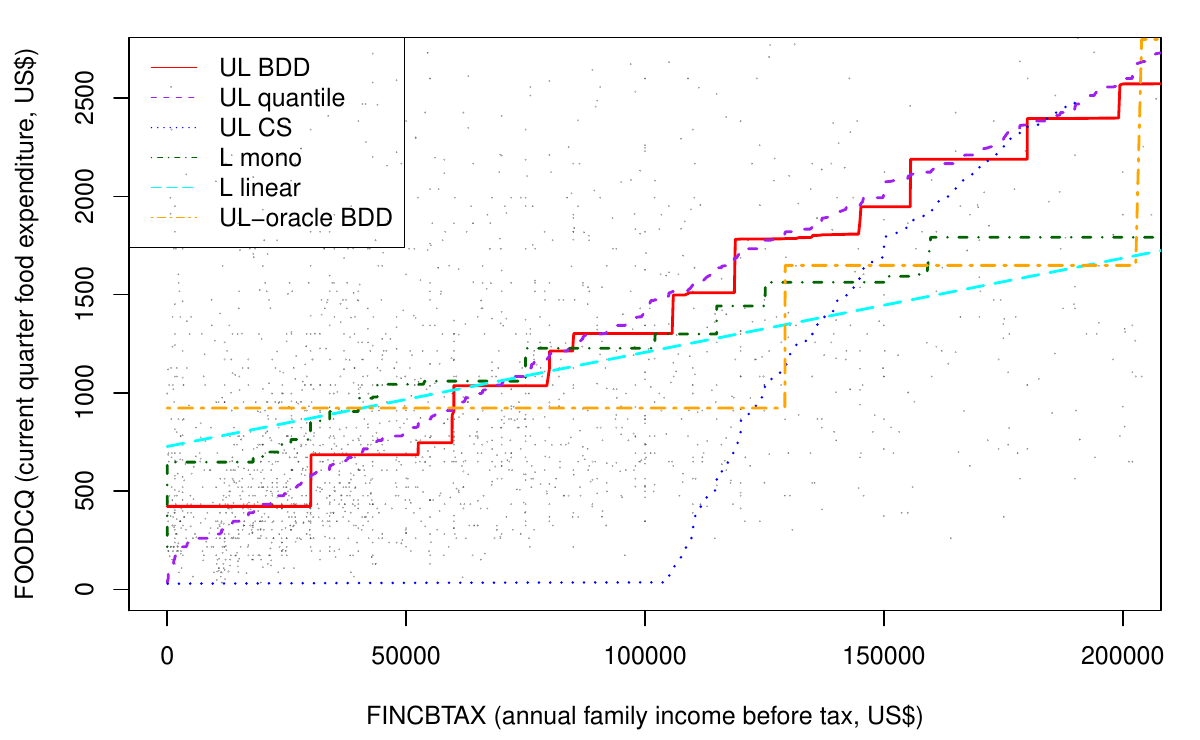} %% has oracle
  \caption{CEX Interview Survey: Family income vs.\ Food expenditure}
  \label{fig:CEX-plots} %% has to be at end e
\end{figure}

%% \subsection{Some examples}
%\subsection{Unlinked regression and linkage methods}

\section{Conclusions and Directions for Future Research}\label{sec: concl}

%% conclusions.tex
In this paper, we have presented a general method for unlinked regression with a monotonic regression function, and developed basic theory for the resulting estimator.
We believe our approach will generalize to other (identifiable) unlinked regression settings.
We have introduced a variant of an active set
algorithm for computing the estimator, and demonstrated its use on a real data example in a fully data driven way in which we estimated the unknown error distribution.
There are many remaining questions about this problem and about our method that future work could answer.
%% need enumitem package? 
\begin{enumerate}[leftmargin=*]

% \item%   We developed a modified gradient-descent  algorithm for computing our estimator.
%   For our method to be useful on data sets with $n$ larger than $5,000$-$10,000$, a faster algorithm will be needed.

% \item A  difficulty related to computation is the practical setting in which the error distribution is not known a priori.  We presented a method for estimating the error distribution based on data.   However, unless $n$ is only a few hundred, this method is not very computationally feasible (and so we used a modified version of it in our  analysis of a real data set).  Methods and algorithms that are more practical for the case of unknown error distribution are necessary for this model to achieve widespread use.
%   One avenue for future work would be to consider studying so-called log- or $s$-concave density estimates of the error distribution (see, e.g., \cite{mizera_09,Doss:2016ck,Samworth:2018hw}); one reason these estimates may be useful is because they are flexible (nonparametric) but also do not require the user to select any tuning parameters and they can be stored very compactly and so using them for every iteration of a gradient-based method could be feasible.  Possibly some kernel density estimators could be implemented such that they could be used for every iteration of a gradient-based algorithm. 

\item Our current study is restricted to the case of a univariate predictor.
  Studying both theory and practice when dimension is larger than $1$ will be an important avenue for future work.  In the case of linear regression, several works (\cite{Abid:2017ws,Pananjady:2017kj,Pananjady:2018hd,Unnikrishnan:2018gp}) have already begun this study, although those works focus mostly on the case of Gaussian noise.

\item Finding (minimax) lower bounds for the rate of convergence seems to be hard to obtain in our setting. Such bounds are needed for a more complete theoretical understanding of  the problem setting. In earlier literature, the closest result we are aware of is the one obtained in 
  \cite{Rigollet:2018wia}, who provide a minimax lower bound in the case of a Gaussian noise in the shuffled monotone regression. However, it is not known if the minimax lower bound  derived by these authors applies to our problem, since the unlinked problem is statistically more difficult than the shuffled problem.
  
\item One of the major differences in unlinked regression from linked regression is that in the former the specification of the error distribution is crucially important.  As shown in the introduction, if the error distribution is unknown then the model is not even identifiable. It would be helpful to understand the general properties of unlinked regression models and of our method in particular when one has partial but incomplete knowledge of the error distribution (e.g., some moment parameters can be estimated well but the full distribution is not known precisely).
  
\item \label{item:2}
  \cite{carp16}  allowed for so-called ``contextual variables''; for instance, if the unit $i$ of observation is an individual, both $Y_i$ and $X_i$ may be paired with a contextual variable $Z_{i,Y}$ and $Z_{i,X}$ such as the individual's age.  One may ``match'' $Y$'s and $X$'s with equal (or similar) ages, and then one may consider unmatched regression on these partially matched data sets.  This is what   \cite{carp16} proposed in the case of discrete, perfectly (noiselessly) observed contextual variables.   More broadly,  one may use so-called linkage methods \citep{herzog2007data} to partially link $Y$ and $X$ (effectively reducing the noise level) when the contextual variables are not as idealized, and then perform linked regression.  This methodology could be broadly useful in the linkage literature and warrants further study.

\end{enumerate}

%%% Local Variables:
%%% mode: latex
%%% TeX-master: "unlinkedReg"
%%% End:
%% \input{_source-tex/conclusions.tex}

\section*{Acknowledgements} %% before appendices at the conclusion of paper

The second author is supported in part by NSF Grant DMS-1712664. The  third author is supported in part by MME-DII (ANR11-LBX-0023-01) and by the FP2M federation (CNRS FR 2036).

\appendix

\section{Bounding the Integrals in (\ref{I1}) and (\ref{I2}).}\label{appendix I1}

\par \noindent  Using Assumptions  $A0$ and \ref{assm:A1:m0-sup} we can write that 
\begin{eqnarray*}
\int_{-\infty}^0 H_0(y) dy  & =  &  \int_{-\infty}^0 \int_{\R} \Phi_\epsilon(y - m_0(x) ) dF_0(x)  dy \\
& \le  &  \int_{-\infty}^0  \Phi_\epsilon(y +K_0 )  dy   < \infty,
\end{eqnarray*}
(note  that  this also follows from $E(\vert Y  \vert ) < \infty$ and integration by parts.)  Similarly it can be shown  that $\int_{0}^\infty (1-H_0(y)) dy  < \infty$.
Hence, $I_1 \le \int_{0}^\infty (1-H_0(y)) dy  +  \int_{-\infty}^0 H_0(y) dy < \infty$.  Also,
\begin{eqnarray*}
\int_{\R} \left(\Phi_\epsilon(y- m_0(x)) - H_0(y)  \right)^2 dy &\le  & 2 \int_{-\infty}^0 \left(\Phi_\epsilon(y- m_0(x))^2  + H_0(y)^2  \right) dy  \\
&& +  \int_{0}^\infty (1-H_0(y))^2 dy\\
& \le & 2 \int_{-\infty}^0 \Phi_\epsilon(y +  K_0) dy   + 2 \int_{-\infty}^0 H_0(y) dy \\
&& +   \int_{0}^\infty (1-H_0(y)) dy\\
& < & \infty,
\end{eqnarray*}
as shown above; this implies, by Fubini's Theorem, that $I_2 < \infty$. \hfill $\Box$
%Noting that $I_2 \le \int_{0}^\infty (1-H_0(y)) dy + \int_{-\infty}^0 H_0(y) dy$ we conclude finiteness of $I_2$. \

%% \section{Some rudiments from empirical processes}
\section{Basic Empirical Process Theory Definitions and a Fundamental Preservation Result}
\label{sec:some-rudiments-EPT}

For a (possibly random) signed measure $Q$ on a (measurable) space $\mc X$ and a measurable function $f$ on $\mc X$, we denote $Q f := \int_{\mc X} f dQ$. For some class of functions $\mathcal{G}$, we can define
\begin{itemize}
\item its $\epsilon$-covering number   $N(\epsilon, \mathcal{G}, \Vert \cdot \Vert)$ with respect to some norm $\Vert \cdot \Vert$ is defined as the smallest integer $N  > 0$ such that there exists $g_1, \ldots, g_N  $ satisfying that for any $g \in \mathcal{G}$, there exists $i \in \{1, \ldots, N\}$ such that 
$$ \Vert g - g_i \Vert < \epsilon ,$$

\item  its $\epsilon$-bracketing number   $N_B(\epsilon, \mathcal{G}, \Vert \cdot \Vert)$ with respect to some norm $\Vert \cdot \Vert$ is defined as the smallest integer $N  > 0$ such that there exist pairs $(h_1,k_1),\dots,(h_N,k_N)$
satisfying that for any $g \in \mathcal{G}$, there exist
$i \in \{1, \ldots, N\}$ such that 
$h_i \le g \le k_i$ and 
$$ \Vert k_i - h_i \Vert < \epsilon.$$
\end{itemize}
From the definition of the covering and bracketing numbers it can be easily shown \citep[pp. 83--84]{aadbook} that if $\|\cdot \|$ is an $L_p$ norm, for some $1 \le p \le \infty$, then for any $\delta> 0$,
\begin{eqnarray}\label{RelNumb}
N(\delta, \mathcal{G}, \Vert \cdot \Vert)  \le N_B(2\delta, \mathcal{G}, \Vert \cdot \Vert).
\end{eqnarray}
Also, if the class $\mathcal{G}$ admits an envelope $F$, then define for $\eta > 0$ the number
\begin{eqnarray}\label{Jdef}
J(\eta, \mathcal{G}) = \sup_{Q} \int_0^\eta \sqrt{1 + \log N(\delta \Vert F\Vert_{Q,2}, \mathcal{G}, L_2(Q))}
 d\delta
\end{eqnarray} 
where the supremum is taken over all discrete probability measures $Q$ such that $\Vert F\Vert _{Q, 2}: = \Big(\int \vert F(x) \vert^2 dQ(x)\Big)^{1/2}  < \infty$.

\medskip

We finish this section by the following preservation result.

\begin{prop}%% \label{prop: preserv}
  \label{prop: preserv-2}
Let $\Vert \cdot\Vert$ be some norm, and  $\mathcal G_1$ and $\mathcal G_2$ two classes of functions.
% such that $N(\epsilon, \mathcal G_1,  \Vert \cdot \Vert) < \infty$ and $N(\epsilon, \mathcal G_2  \Vert \cdot \Vert) < \infty$ for any $\epsilon > 0$.  
For fixed $\lambda_1, \lambda_2 $ such that $(\lambda_1, \lambda_2) \ne (0, 0)$, define the class
\begin{eqnarray*}
\lambda_1 \mathcal G_{1}  + \lambda_2 \mathcal{G}_2   =  \left\{  h = \lambda_1  g_1  + \lambda_2 g_2:  (g_1, g_2) \in \mathcal G_1 \times \mathcal G_2 \right \}.
\end{eqnarray*}
Then, for any $\epsilon > 0$
\begin{eqnarray*}
N(\epsilon, \lambda_1 \mathcal G_{1}  + \lambda_2 \mathcal{G}_2, \Vert \cdot \Vert) \le N( (\vert \lambda_1 \vert  +\vert \lambda_2 \vert)^{-1} \epsilon, \mathcal G_1,  \Vert \cdot \Vert) \times  N( (\vert \lambda_1\vert  + \vert \lambda_2 \vert)^{-1} \epsilon, \mathcal G_2,  \Vert \cdot \Vert). 
\end{eqnarray*}

\end{prop}

\begin{proof}[Proof of Proposition \ref{prop: preserv-2}.]  \  Fix $\epsilon > 0$. Let $h  = \lambda_1 g_1 +  \lambda_2 g_2 \in \lambda_1 \mathcal G_{1}  + \lambda_2 \mathcal{G}_2$,  $N_1 =  N(\epsilon (\vert \lambda_1\vert  + \vert \lambda_2 \vert)^{-1}, \mathcal G_1,  \Vert \cdot \Vert)$ and $N_2 = N(\epsilon (\vert \lambda_1\vert  + \vert \lambda_2 \vert)^{-1}, \mathcal G_2,  \Vert \cdot \Vert)$.  We assume in the sequel that both $N_1$ and $N_2$ are finite since otherwise, the inequality in Proposition \ref{prop: preserv-2} is trivial.  Then, there exists a pair $(i, j) \in \{1, \ldots, N_1\} \times \{1, \ldots, N_2\}$ and $(g_{1,i}, g_{2, j})$ such that $\Vert g_1 -  g_{1, i} \Vert < \epsilon  (\vert \lambda_1\vert  + \vert \lambda_2 \vert)^{-1}$ and $ \Vert g_2 -  g_{2,j} \Vert < \epsilon (\vert \lambda_1\vert  + \vert \lambda_2 \vert)^{-1}$. Then, by the triangle inequality we have that
\begin{eqnarray*}
\Vert h -  \lambda_1 g_{1, i}  - \lambda_2 g_{2, j} \Vert   &=  &   \Vert \lambda_1 (g_1 - g_{1, i})  + \lambda_2 (g_2 - g_{2, j}) \Vert  \\
 &\le &  \vert \lambda_1 \vert \Vert g_1 -  g_{1, i}  \Vert +  \vert \lambda_2 \vert \Vert g_2 -  g_{2, j}  \Vert  <  \epsilon
\end{eqnarray*}
which completes the proof.
\end{proof}

\section{Wasserstein Distance Lemmas}
\label{sec:wass-dist-lemm}

Recall that $W_1(F,G)$ denotes the first Wasserstein distance between two probability distributions with distribution functions $F$ and $G$.
The following is a well-known representation of the Wasserstein-$1$ distance in one dimension (see, e.g., \cite{bobkov2019one}, or e.g., Proposition 2 of \cite{Meis}).
% \noindent
\begin{prop}
  \label{prop:1}
  Let $F,G$ be distribution functions on $\RR$, each having finite first
  moment.  Then
  \begin{equation*}
    W_1(F,G) =
    \int_\RR | F(x) -G(x) | dx.
  \end{equation*}
\end{prop}

\medskip
\noindent The following  is Proposition 3 of \cite{Meis}.

\begin{prop}
  \label{prop:2}
  Let $F, G$ be two distribution functions supported on $[0,V]$ and suppose $H$ is a distribution function supported on a finite set of points.  Then
  \begin{equation*}
    W_1(F,G) \le C(V,H) W_1( F \star H, G \star H)
  \end{equation*}
  where $C(V,H) > 0$ depends only on $V,H$.  
\end{prop}

\section{Gradient, Curvature, and Other Algorithmic Computations}

In this section, we prove  \eqref{eq: gradient} and  give  an explicit formula of $B(m)$ in the case of Laplace, Gaussian, and Gaussian-mixture distributions.
\label{sec:gradient-computation}

\subsection{Proof of \eqref{eq: gradient}}\label{appendixgradient}

%% \input{_source-tex/grad.tex}
%% basic gradient computations

%\subsection*{Gradient computations}
%\label{subsec:gradient}

%\remCRD{Gradient computations perhaps moved to appendix ... ? } 

We have
\begin{equation}
  \label{eq:5}
  \begin{split}
    \MoveEqLeft
    \frac{\partial}{\partial m_i} \MM_n(\bs m)=\int_{\RR} 2 \lp \HH_n(y) - n^{-1} \sum_{\alpha=1}^n \Phi_{\epsilon}(y- m_\alpha)  \rp n^{-1} f_{\epsilon}(y - m_i) dy \\
    & = 2n^{-2}
    %% \lp
    \sum_{\alpha=1}^n \int_{\RR} \left\{\one_{[Y_{\alpha}, \infty)}(y) f_\epsilon(y-m_i)
    - \Phi_\epsilon(y-m_\alpha) f_\epsilon(y-m_i)\right\} \; dy \\
    %% \rp
    & =2 n^{-2}    \sum_{\alpha=1}^n  \left\{\int_{Y_{\alpha}-m_i}^\infty f_\epsilon(y) \; dy 
    - \int_\RR \Phi_\epsilon(y-m_\alpha+m_i) f_\epsilon(y) \; dy \right\},
%: 2 n^{-2}    \sum_{\alpha=1}^n A(Y_\alpha - m_i)  - B(m_i-m_\alpha),
  \end{split}
\end{equation}
and  \eqref{eq: gradient} follows.

%\begin{equation}
%  \label{eq:6}
%  A(Y-m) := \int_{Y-m}^\infty \phi_\epsilon(y) dy,
%  \quad \text{and} \quad
%  B(m) := \int_{\RR} \Phi_\epsilon(y) \phi_\epsilon(y-m) dy.
%\end{equation}

\subsection{Curvature  Derivation}

It is convenient to re-parameterize the objective function as was done in the
algorithm development in the main text.  Let
$\wh{\HH}_{\bs m}(y) := \sum_{j=1}^p \pi_j \Phi_\epsilon(y-m_j)$, and then recall
\begin{equation*}
  \MM_{n,p}(\bs m) := \int ( \HH_n(y) - \wh{\HH}_n(y) )^2 dy
\end{equation*}
where $\pi_j \ge 0$ and $\sum_{j=1}^p \pi_j = 1$.
(Here $n = n_y$, and $n_x$ is defined implicitly in terms of the $\pi_j$ and $p$.)
Then, as derived above but in the new notation, we have
\begin{align*}
  \frac{\partial}{\partial m_i} \MM_{n,p}(\bs m)
  & = 2 \pi_i \int ( \HH_n(y) - \wh{\HH}_n(y) ) f_\epsilon (y-m_i) dy \\
  & = 2 \pi_i \lp n^{-1} \sum_{\alpha= 1}^n 1 - \Phi(Y_\alpha - m_i)
    - \sum_{j=1}^p \pi_j E \Phi_\epsilon(\epsilon + m_i - m_j) \rp \\
  & = 2 \pi_i \lp 1 - n^{-1} \sum_{\alpha=1}^n \Phi_{\epsilon}(Y_\alpha -m_i)
    - \sum_{j=1}^p \pi_j B(m_i - m_j) \rp
\end{align*}
where again $B(m) := E \Phi_{\epsilon}(\epsilon + m)$.
Then (assuming $i \ne j$)
\begin{align*}
  \frac{\partial^2}{\partial m_i^2} \MM_{n,p}(\bs m)
  &= 2 \pi_i ( \sum_{\alpha=1}^n n^{-1} f_\epsilon(Y_\alpha - m_i)
    - \sum_{j \ne i}^p \pi_j B'(m_i - m_j) \\
  \frac{\partial^2}{\partial m_i \partial m_j} \MM_{n,p}(\bs m)
  & = 2 \pi_i \pi_j B'(m_i-m_j).
\end{align*}
% We can also compute third derivatives, which would be of use if we considered a ``cubic regularization'' method \citep{Nesterov:2006cy}.
% We have (again with $i \ne j$)
% \begin{align*}
%   \frac{\partial^3}{\partial m_i^3} \MM_n(\bs m)
%   & = - 2 \pi_i (\sum_{\alpha=1}^n f_\epsilon'(Y_\alpha-m_i)
%     + \sum_{j\ne i}^p \pi_j B''(m_i-m_j) ) \\
%   \frac{\partial^3}{\partial m_i^2 \partial m_j} \MM_n(\bs m)
%   & = 2 \pi_i \pi_j B''(m_i-m_j).
% \end{align*}
% And if $i \ne j \ne k$ then the third partial derivative is zero.
Thus, in order to compute the curvature (i.e., $\partial^2  / \partial \theta^2$), it suffices to compute $B'$ (which we do below in the three cases we consider).

One more set of calculations that are useful are the following; we have
\begin{equation*}
  \frac{\partial}{\partial \theta} \MM_n(\bs m) =
  \frac{2}{\sqrt{2}}   \int_{\RR}
  \lp \HH_n(y) -  \sum_{\alpha=1}^p \pi_\alpha \Phi_{\epsilon}(y- m_\alpha) \rp
  ( \pi_i f_{\epsilon}(y - m_i) - \pi_j f_{\epsilon}(y - m_j) ) dy ,
\end{equation*}
\begin{equation}
  \label{eq:10002}
  \begin{split}
    \MoveEqLeft \frac{\partial^2}{\partial \theta^2} \MM_n(\bs m) =
    %% \\
    \int_{\RR}
    \lp \wh{\HH}_{\bs m}(y) - \HH_n(y) \rp \times
    \\
    &
    \qquad \qquad
    ( \pi_i f'_{\epsilon}(y - m_i) + \pi_j f'_{\epsilon}(y - m_j) ) +
    (\pi_i f_\epsilon(y-m_i) - \pi_j f_\epsilon(y-m_j))^2 dy .
  \end{split}
\end{equation}

%%% Local Variables:
%%% mode: latex
%%% TeX-master: "../unlinkedReg"
%%% End:

\subsection{Computations for  Laplace Distribution}
\label{sec:grad-comp-lapl}

Let $\lambda > 0$ and assume that
\begin{equation*}
  \Phi_{\epsilon}(z) :=
  \begin{cases}
    2^{-1} e^{-|z|/\lambda} & \text{ if } z \le 0 \\
    1 - 2^{-1} e^{-z/ \lambda} & \text{ if } z > 0
  \end{cases}
\end{equation*}
and let $f_{\epsilon}(y) := \Phi'_\epsilon(y) = e^{-|z|/\lambda} / 2 \lambda$ for $z \in \RR$.
%%  We will compute the gradient of $\MM_n(\bs m)$.
% First, note that for any $m \in \RR$, 
% \begin{align*}
%   A(y^*-m) = \int_{\RR} \one_{[y^*, \infty)}(y) \phi_\epsilon(y-m) dy
%   = P( Z \ge y^* - m) 
% \end{align*}
% where $Z \sim \Phi_\epsilon$. %and the probability is over $Z$ but not $Y$.
% Then
% $A(Y-m)$ is (considering $Y$ as fixed, with respect to the probability statement)
% \begin{align}
% %%   A(Y-m) = 
%   \begin{cases}
%     2^{-1} P( \text{Exp}(\lambda) \ge Y-m) = 2^{-1}  e^{-(Y-m)/\lambda} 
%     \quad \text{ if } Y - m \ge 0 \\
%     1 - P( Z \ge -(Y - m)) = 1 - 2^{-1} e^{(Y-m)/\lambda} \quad \text{ otherwise}.
%   \end{cases}
%   \label{eq:grad-basic-expression1}
% \end{align}
% Next for $m_\alpha$ and $m_i$, any indices $\alpha$ and $i$, 
% consider \comc{this is just $B$ right? so the next display can be removed}
% \begin{equation*}
%   \int_{\RR} \Phi_{\epsilon}(y - m_\alpha) \phi_\epsilon(y-m_i) dy
%   = \int \Phi_\epsilon(y) \phi(y - m) dy
% \end{equation*}
% where $m = m_i - m_\alpha$.  The previous display equals
We compute $B(m)$  for $m \in \RR$.  
We have
\begin{align*}
  B(m) =
  \int_{-\infty}^0 (4\lambda)^{-1} e^{y / \lambda} e^{ -|y- m|/ \lambda } dy
  + \int_0^\infty (1 - 2^{-1} e^{-y/\lambda}) e^{- |y - m|/ \lambda} (2\lambda)^{-1} dy
\end{align*}
which equals
\begin{align*}
  \MoveEqLeft
  \int_{-\infty}^{m \wedge 0} (4\lambda)^{-1} e^{(y - m/2) 2 / \lambda} dy
  + \int_{m \wedge 0}^0 (4 \lambda)^{-1} e^{m / \lambda} dy \\
  & + \int_0^{m \vee 0} (2\lambda)^{-1} ( e^{ (y-m)/\lambda} - 2^{-1} e^{-m /\lambda}) dy \\
  & + \int_{m \vee 0}^\infty \inv{2 \lambda} e^{ - (y-m)/\lambda}
    - \inv{4 \lambda} e^{- (y-m/2) 2 / \lambda} dy.
\end{align*}
If $m \le 0$, then
\begin{equation}
  \label{eq:7}
  B(m) =
  \inv{8} e^{m/\lambda} 
  + \frac{|m|}{4 \lambda} e^{m / \lambda}
  + 0
  + \frac{3}{8} e^{m / \lambda}.
\end{equation}
If $m \ge 0$, then 
\begin{equation}
  \label{eq:8}
  B(m) =
  \inv{8} e^{-m/ \lambda}
  + 0
  + \lp \inv{2} - \inv{4 \lambda} e^{-m / \lambda} (2 \lambda + m) \rp
  + \lp \inv{2} - \inv{8} e^{-m / \lambda} \rp.
\end{equation}
This gives an explicit formula for $B(m)$.

We can then compute that for any $m \in \RR$ (including $m =0$),
$B'(m) = e^{-|m|/\lambda} (\lambda + |m|) / (4 \lambda^2)$.
The calculations for $B'(m)$ are as follows.
For $m \le 0$, we have
\begin{align*}
  B'(m) = \inv{2 \lambda} e^{m/\lambda}
  = ((4 \lambda)^{-1} e^{m / \lambda} + m (4 \lambda^2)^{-1} e^{m/\lambda})
  & = e^{m / \lambda} \lp \frac{2 \lambda - \lambda - m }{4 \lambda^2} \rp \\
  & = e^{-|m|/\lambda} \frac{\lambda + |m|}{4 \lambda^2}.
\end{align*}
And for $m \ge 0$ we have
\begin{align}
  B'(m) = - \lp \inv{4 \lambda} e^{-m/\lambda} -
  \frac{2 \lambda + m}{4 \lambda^2} e^{-m / \lambda} \rp
  = e^{-m / \lambda} \lp \frac{2 \lambda + m}{4 \lambda^2}
  - \frac{ \lambda}{4 \lambda^2} \rp
  = e^{-m/\lambda} \frac{m + \lambda}{4 \lambda^2}.
\end{align}

%%% Local Variables:
%%% mode: latex
%%% TeX-master: "../unlinkedReg"
%%% End:
%% \input{_source-tex/laplace-grad.tex}

%% \input{_source-tex/norm-grad.tex}
%% Computations for normal gradient

\subsection{Computations for Gaussian Errors}\label{sec: Gaussian}

Now we consider the case where, for some $\sigma > 0$,  $\Phi_{\epsilon} = \Phi(\cdot / \sigma)$ is the cumulative distribution function of a $N(0, \sigma^2)$  random variable.
It turns out we can write  $B(\cdot)$ % from \eqref{eq:6}
in terms of  $\Phi$:
%% We have that $A(Y-m) = 1 - \Phi_\epsilon(Y-m) = 1 - \Phi((Y-m)/\sigma)$,
by Corollary 1
%% Theorem 2
of \cite{Ellison:1964je},
$B(m) = E \Phi_\epsilon(N(m, \sigma^2)/\sigma) =  \Phi(m / \sigma \sqrt{2})$.
%% where $\Phi$ is the $N(0,1)$ cumulative distribution function.
Thus we also have
\begin{align*}
  B'(m)
  & = N(0,\sigma^2) \text{ density}
  = \frac{1}{\sqrt{4 \pi \sigma^2}} e^{-m^2/ 4 \sigma^2} , \\
  B''(m)
  %% & =  \frac{1}{\sqrt{4 \pi \sigma^2}} e^{-m^2/ 4 \sigma^2} \frac{-2m}{4 \sigma^2}
     & =  -\frac{2m}{ (2 \sigma)^3 \sqrt{\pi }} e^{-m^2/ 4 \sigma^2} .
\end{align*}

%% see also
%% https://stats.stackexchange.com/questions/61080/how-can-i-calculate-int-infty-infty-phi-left-fracw-ab-right-phiw
%% https://math.stackexchange.com/questions/414355/integral-involving-the-cdf-of-normal-distribution

\subsection{Mixtures of Gaussian}

This again relies on Corollary 1 of  \cite{Ellison:1964je}.
Recall $\Phi$ is the CDF of a $N(0,1)$ variable.  Assume  for an integer $L \ge 1$, locations $\mu_i \in \RR$, and weights $\lambda_i$ summing to $1$ that
\begin{equation*}
  \Phi_\epsilon = \sum_{i=1}^L \lambda_i
  \Phi \lp \frac{ \cdot - \mu_i }{\sigma_i} \rp.
\end{equation*}
Then
\begin{equation}
  \label{eq:103}
  B(m) = E \Phi_{\epsilon}(\epsilon + m)
  = \sum_{i,j}^L \lambda_i \lambda_j
  E \Phi \lp \frac{Y - \mu_i + m}{\sigma_i} \rp
\end{equation}
where $Y \sim N(\mu_j, \sigma^2_j)$, so
$\frac{Y - \mu_i + m}{\sigma_i}  \sim N((\mu_j - \mu_i +m)/\sigma_i, \sigma^2_j / \sigma^2_i)$ and thus
\eqref{eq:103} equals
(by Corollary 1 of  \cite{Ellison:1964je})
\begin{equation*}
  \sum_{i,j}^L \lambda_i \lambda_j
  \Phi \lp \frac{(m-\mu_i + \mu_j)/\sigma_i}{\sqrt{1 + \sigma^2_j / \sigma^2_i}} \rp
  =   \sum_{i,j}^L \lambda_i \lambda_j
  \Phi \lp \frac{\mu_j-\mu_i + m}{\sqrt{\sigma_i^2 + \sigma^2_j }} \rp.
\end{equation*}
We then have that
\begin{equation*}
  B'(m) =
  \sum_{i,j}^L \lambda_i \lambda_j
  \phi \lp \frac{\mu_j-\mu_i + m}{\sqrt{\sigma_i^2 + \sigma^2_j }}\rp
  \frac{1}{\sqrt{\sigma_i^2 + \sigma^2_j }}.
\end{equation*}

%%% Local Variables:
%%% mode: latex
%%% TeX-master: "../unlinkedReg"
%%% End:

\subsection{On Implementation of Algorithm~\ref{alg:active-set}}
\label{sec:algorithm-notes}

%% \remCRD{ Double check this.}

A few remarks about the implementation of Algorithm~\ref{alg:active-set} are as follows. The entries of the initial counts vector should sum to $n_x$; this relationship is then preserved throughout the algorithm. 
Generally the initializer for $\bs m$ is taken to be a constant (e.g., the median of the responses with $p^{(0)}=1$).  
In the algorithm we did not take care to force the counts to be integers when we divided by $2$, but this can easily be done and should be done for easy interpretability.   As end criterion, one can 
iterate for a fixed number $K$ of steps, or one can iterate until a stopping rule (e.g., the objective function decrease is smaller than a fixed tolerance level) is satisfied.   A heuristic choice for the parameter eps is  %% $= (Y_{(n)} - Y_{(1)}) / (1000 n^{1/5} \sigma )$
$\text{eps}= (Y_{(n)} - Y_{(1)}) / ( n^{1/3} \sigma )$
where $\sigma^2$ is the variance of $\epsilon$, and 
$n^{1/3}$ is motivated by properties of classical isotonic regression.
% One can also take multiple gradient steps after each ``for $j$'' loop, which may be useful in practice.
% The usual initialization vector is the sorted response vector, $(Y_{(1)}, \ldots, Y_{(\ny)})$.

\section{Proofs}
\label{sec:proofs}
In this section we provide our proofs. %%"Section headers: There must be some text separating all section headers" jmlr
\subsection{A  Preparatory Lemma}
We begin with a lemma that will be used several times in the proofs of the main results.
  \begin{lemma}\label{lem: inv}
  Let $m \in \mc M$.  If $F_0$ is continuous, then
 \begin{eqnarray}
    \int_{ \RR}  \left(  \int\one_{\{m(x)\leq w\}}dF_0(x)  -  F_0 \circ m^{-1}(w) \right)^2  dw     
    =     0.  
  \end{eqnarray}
\end{lemma}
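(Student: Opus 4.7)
The plan is to argue that the integrand vanishes for Lebesgue-almost-every $w \in \RR$, so that the integral is trivially zero. Writing the first term as $F_0\bigl(\{x \in [0,1] : m(x) \le w\}\bigr)$, it suffices to establish the pointwise identity
\[
F_0\bigl(\{x \in [0,1] : m(x) \le w\}\bigr) = F_0\bigl(m^{-1}(w)\bigr)
\]
for all $w$ outside an at most countable set.

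First I would record the basic structure of the level sets of a monotone right-continuous $m$: the set $\{x \in [0,1] : m(x) \ge w\}$ is closed and upward-closed, hence equals $[m^{-1}(w), 1]$ whenever it is non-empty, and is empty when $w > m(1)$ (in which case the convention $m^{-1}(w) = 1$ applies). Equivalently, $\{x : m(x) < w\} = [0, m^{-1}(w))$, so that with $s(w) := \sup\{x \in [0,1] : m(x) \le w\}$ one has the sandwich $[0, m^{-1}(w)) \subseteq \{x : m(x) \le w\} \subseteq [0, s(w)]$.

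The next step is to identify the set $\mc W$ of $w$'s on which $m^{-1}(w) < s(w)$. By the sandwich, strict inequality forces $m$ to be identically equal to $w$ on the non-degenerate interval $[m^{-1}(w), s(w)]$; such $w$ are the ``flat levels'' of $m$. Since distinct flat levels of a non-decreasing function correspond to pairwise disjoint non-degenerate sub-intervals of $[0,1]$, $\mc W$ is at most countable and hence Lebesgue-negligible.

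For $w \notin \mc W$ one has $s(w) = m^{-1}(w)$, and the set $\{x : m(x) \le w\}$ differs from $[0, m^{-1}(w)]$ by at most the single point $\{m^{-1}(w)\}$, which has $F_0$-measure zero by continuity of $F_0$. This gives the desired pointwise identity, so the integrand is zero off $\mc W$ and the integral vanishes. The only routine verifications I would carry out are the edge cases $w > m(1)$ (both sides equal $F_0(1) = 1$, using $X \in [0,1]$ a.s.) and $w < m(0)$ (both sides equal $F_0(0) = 0$, using continuity of $F_0$ at $0$ together with $X \ge 0$ a.s.); neither presents any real obstacle, so there is no single ``hard part'' — the crux is simply the observation that flat levels of a monotone function are countable.
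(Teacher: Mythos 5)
Your proof is correct and follows essentially the same route as the paper: both arguments rest on the switching relation for the generalized inverse of a right-continuous monotone function, the observation that the levels at which $m$ is flat form an at most countable (hence Lebesgue-null) set, and continuity of $F_0$ to discard the single point $m^{-1}(w)$. The paper phrases this as reducing the integrand to $P(m(X)=w)^2$ and integrating over the countable set of flat levels, while you show directly that the integrand vanishes off that set; the content is the same.
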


\begin{proof}[Proof of Lemma \ref{lem: inv}]   Recall that $m^{-1}$ is defined by \eqref{eq: inverse}
where the infimum of an empty set is defined to be $1$. If the set in \eqref{eq: inverse} is non-empty, then the infimum is achieved by right-continuity of $m$. Hence, we have $m\circ m^{-1}(y)\geq y$ for all $y\leq m(1)$. Now, consider $x\in[0,1]$ and $y\leq m(1)$ such that $m(x)\geq y$. Since the infimum in \eqref{eq: inverse} is achieved this implies that $x\geq m^{-1}(y)$. Conversely, if we have $x\geq m^{-1}(y)$ then monotonicity of $m$ implies that $m(x)\geq m\circ m^{-1}(y)$ where as mentioned above, $m\circ m^{-1}(y)\geq y$.   It follows that
for all $x\in[0,1]$ and $y\leq m(1)$ we have the equivalence
\begin{equation}\label{eq: switch}
m(x)\geq y \Leftrightarrow x\geq m^{-1}(y).
\end{equation}
Now, consider  $y>m(1)$. The set in \eqref{eq: inverse} is empty and therefore,  $m^{-1}(y)=1$ by definition. The left-hand inequality in (\ref{eq: switch}) does not hold if $x\in[0,1]$, and the right-hand inequality does not hold neither if $x<1$ since $m^{-1}(y)=1$. This mean that the above equivalence holds for all $x\in[0,1)$ and $y\in\RR$.
Let $X$ be a random variable with distribution function $F_0$. Since $P(X=1)=0$ by assumption, it follows that for all $w\in\RR$, we have
\begin{equation}\label{eq: inv} 
 P(m(X)  <  w)  = P(X < m^{-1}(w))
\end{equation} and therefore,
$$ P(m(X)  \le  w)  -  P(X \le m^{-1}(w))=P(m(X)=w)-P(X=m^{-1}(w))$$
where the second probability on the right hand side equals zero since $X$ has a continuous distribution function. 
 It follows that

  \begin{eqnarray}\label{Equality}
    \int_{ \RR}  \left(  P(m(X)  \le  w)  -  F_0 \circ m^{-1}(w) \right)^2  dw   & = &\int_{\RR}  P(m(X)  = w)^2  dw \notag \\
                                                                                 & =   &  0.  
  \end{eqnarray}
  To see why the preceding equality holds true, note that since the distribution function of $X$, $F_0$, is assumed to be continuous, then it follows that
  \begin{eqnarray*}
    \int_{\RR}   P(m(X)  = w)^2  dw   =  \int_{\mathcal{W}} P\left (X \in [a(w), b(w)) \right)^2  dw 
  \end{eqnarray*}
 where $\mathcal{W}$ is the set of point $w\in\RR$ such that there exist $x\neq x'$ that satisfy $m(x)=m(x')=w$, and for $w\in\mathcal W$, 
$ a(w)  < b(w)$ are such that $m$ takes the constant value $w$ on $[a(w), b(w))$, and $a(w)  = m^{-1}(w)$.  Using the well-known fact that a monotone function admits at most countably many constant parts, the set $\mathcal W$ is at most countable and therefore,
  \begin{eqnarray*}
    \int_{\RR}   P(m(X)  = w)^2  dw \le   \int_{\mathcal{W}}  dw = \lambda(\mathcal{W}) =0
  \end{eqnarray*}
  where $\lambda$ denotes the Lebesgue measure on $\RR$. Lemma \ref{lem: inv} follows from \eqref{Equality} since
$$P(m(X)  \le  w) =  \int\one_{\{m(x)\leq w\}}dF_0(x).$$
  \end{proof}

\subsection{Proofs for Section \ref{sec: estim}}

\noindent \textbf{Proof of Proposition \ref{prop: identifiability}.}   \ 
Let $\overline F(u)=P(X\geq u)$ for all $u\in\R$. It follows from \eqref{eq: inv}, that holds for all $w\in\R$ an $m\in\cal M$,  that
\begin{eqnarray}
  \label{eq:1}
  P(m_1(X)\geq t)
  &=&P(X\geq m_1^{-1}(t))\\
  &=&\overline F\circ m_1^{-1}(t)
\end{eqnarray}
for all $t\in\R$. Since $m_1(X)$ has the same distribution as $m_2(X)$, this implies that
\begin{equation}
  \label{eq:2'}
  \overline F\circ m_1^{-1}
  =\overline F\circ m_2^{-1}.
\end{equation}
It follows from the definition \eqref{eq: inverse} of the inverse of a function $m\in\cal M$, where we recall that the infimum of an empty set is defined to be one, that $m_j^{-1}(y)=0$ for all $y\leq m_j(0)$ and $m_j^{-1}(y)=1$ for all $y>m_j(1)$. Hence, we define the inverse of 
the non-increasing left-continuous function $\overline F\circ m_j^{-1}$ as
$$(\overline F\circ m_j^{-1})^{-1}(t)=\sup\{y\in[m_j(0);m_j(1)],\ \overline F\circ m_j^{-1}(y)\geq t\}$$
for all $t\in\R$, with the convention that the supremum of an empty set is equal to $m_j(0)$. 
Our aim is to derive from \eqref{eq:2'} that the inverses of $\overline F\circ m_1^{-1}
  $ and $\overline F\circ m_2^{-1}$ are equal. This is not an immediate consequence of the equality in  \eqref{eq:2'} since the definition of the inverse function of $\overline F\circ m_j^{-1}$ involves the function $m_j$ in addition to the function $\overline F\circ m_j^{-1}$. However, we show below that the dependence on $m_j$ can be removed by restricting attention to a restricted support.

We define the generalized inverse of $\overline F$ by 
$$\overline F^{-1}(t)=\sup\{u\in[0,1],\ \overline F(u)\geq t\}$$
for all $t\in\R$, with the convention that the supremum of an empty set is equal to zero. Similar to the proof of Lemma \ref{lem: inv}, it can be proved using that  $\overline F$ is non-increasing and left-continuous that the equivalence
\begin{eqnarray}\label{eq: invc}
\overline F(u)\geq t \Longleftrightarrow \overline F^{-1}(t)\geq u
\end{eqnarray}
holds for all $u\in(0,1]$ and $t\in\R$.  
Combining this  with 
\eqref{eq: switch} (that holds for all $x\in[0,1)$ and $y\in\RR$), we obtain that the equivalence 
\begin{eqnarray}\label{eq: equiv2}
\overline F\circ m_j^{-1}(y)\geq t \Longleftrightarrow m_j\circ \overline F^{-1}(t)\geq y
\end{eqnarray}
holds for all $t>0$ and $y>m_j(0)$. For $y\leq m_j(0)$, the inequalities on both sides of the equivalence in the previous display hold true for all $t\leq 1$ since in that case, $\overline F\circ m_j^{-1}(y)=\overline F(0)=1$. This means that the equivalence in \eqref{eq: equiv2} holds for all $t\in(0,1]$ and $y\in\R$. Now, consider $t\in(0,1]$ such that $t<\overline F\circ m_j^{-1}(m_j(0))$. Since  $m_j^{-1}(m_j(0))=0$, this means that $t<\overline F(0)$ where $\overline F(0)=1$. Otherwise said, we consider $t\in(0,1)$. Because $t<\overline F\circ m_j^{-1}(m_j(0))$, we  have
\begin{eqnarray*}
(\overline F\circ m_j^{-1})^{-1}(t)
&=&\sup\{y\leq m_j(1),\ \overline F\circ m_j^{-1}(y)\geq t\}.
\end{eqnarray*}
Moreover, the inequality $\overline F\circ m_j^{-1}(y)\geq t$ cannot hold for $y> m_j(1)$ since $t>0$ and therefore,
\begin{eqnarray*}
(\overline F\circ m_j^{-1})^{-1}(t)
&=&\sup\{y\in\R,\ \overline F\circ m_j^{-1}(y)\geq t\}.
\end{eqnarray*}
Combining this with \eqref{eq:2'} proves that 
$$(\overline F\circ m_1^{-1})^{-1}(t)=(\overline F\circ m_2^{-1})^{-1}(t)$$
for all $t\in(0,1)$. Using the equivalence in \eqref{eq: equiv2} (that holds for all $t\in(0,1]$ and $y\in\R$), we also have
\begin{eqnarray*}
(\overline F\circ m_j^{-1})^{-1}(t)
&=&\sup\{y\in\R,\ m_j\circ \overline F^{-1}(t)\geq y\}\\
&=&m_j\circ \overline F^{-1}(t).
\end{eqnarray*}
Hence,
$$m_1\circ \overline F^{-1}(t)=m_2\circ\overline F^{-1}(t)$$
for all $t\in(0,1)$.
This in turn implies that $m_1=m_2$ on the support of $X$ since the range of $\overline F^{-1}$ is the support of $X$.   \hfill $\Box$

\medskip
\medskip

\par \noindent \textbf{Proof of Proposition \ref{prop: exist}.}   \  
  {\it Proof of Claim 1.}
Recall that any element $m \in \mathcal{M}$ is bounded, and hence there exists $K > 0$ such that $\Vert m \Vert_\infty \le K$. Denote by $Y_{(1)}\leq\dots\leq Y_{(n)}$ the order statistics corresponding to $Y_1,\dots,Y_n$.  We have for all   $y < Y_{(1)}$  that  $\mathbb{H}_n(y)  = 0$. Moreover, it follows from monotonicity of $\Phi_\epsilon$ and $m$ that $0\leq \Phi_\epsilon(y  -  m(X_i))\leq \Phi_\epsilon(y  -  m(X_{(1)}))\leq 1$ for all $i\in\{1,\dots,n\}$ and therefore, 
\begin{eqnarray*}
\Big \{ \mathbb{H}_n(y)  -  n^{-1}  \sum_{i=1}^n \Phi_\epsilon(y  -  m(X_i)) \Big \}^2  & = & n^{-2}  \left(\sum_{i=1}^n \Phi_\epsilon(y  -  m(X_i)) \right)^2 \\
& \le &  \Phi_\epsilon(y  -  m(X_{(1)}))^2  \\
& \le &  \Phi_\epsilon(y  -  m(X_{(1)})).
\end{eqnarray*}
Now, existence of expectation of $\epsilon$  implies that  
\begin{eqnarray}\label{Integrable}
\int_{-\infty}^c \Phi_\epsilon(t) dt < \infty, \  \ \textrm{and} \  \  \int_{c}^\infty (1-\Phi_\epsilon(t)) dt < \infty
\end{eqnarray}
for arbitrary $c\in\RR$ and therefore,
\begin{eqnarray}\label{intY1} \notag
\int_{-\infty}^{Y_{(1)}}\Big \{ \mathbb{H}_n (y) -  n^{-1}  \sum_{i=1}^n \Phi_\epsilon(y  -  m(X_i)) \Big \}^2 dy&\leq&  \int_{-\infty}^{Y_{(1)}}\Phi_\epsilon(y  -  m(X_{(1)})) dy\\ \notag
&=&  \int_{-\infty}^{Y_{(1)}-m(X_{(1)})}\Phi_\epsilon(y ) dy\\
&<&\infty.
\end{eqnarray}
Similarly, $\mathbb{H}_n(y)  = 1$ for  $y > Y_{(n)}$  and hence 
\begin{eqnarray*}
\Big \{ \mathbb{H}_n(y)  -  n^{-1}  \sum_{i=1}^n \Phi_\epsilon(y  -  m(X_i)) \Big \}^2  & =  &   \Big(1-  n^{-1}\sum_{i=1}^n \Phi_\epsilon(y  -  m(X_{i})\Big)^2  \\
& \le & \Big(1-\Phi_\epsilon(y  -  m(X_{(n)})\Big)^2 \\
& \le & 1-\Phi_\epsilon(y  -  m(X_{(n)}).
\end{eqnarray*}
Combined with \eqref{Integrable}, this proves that 
\begin{eqnarray}\label{intYn}
\int^{\infty}_{Y_{(n)}}\Big \{ \mathbb{H}_n (y) -  n^{-1}  \sum_{i=1}^n \Phi_\epsilon(y  -  m(X_i)) \Big \}^2 dy&<&\infty.
\end{eqnarray}
Since the integrand is bounded, \eqref{intY1} and \eqref{intYn} yield
$$\int_\RR\Big \{ \mathbb{H}_n (y) -  n^{-1}  \sum_{i=1}^n \Phi_\epsilon(y  -  m(X_i)) \Big \}^2 dy<\infty, $$ 
which proves that $\mathbb{M}_n(m)$ is finite.\\

{\it Proof of Claim 2.}  Assume that Assumption $A1$ holds and consider $m \in \mathcal{M}$. Then, we have that
\begin{align}
  % \MoveEqLeft
  \mathbb{M}(m) %  \notag \\
  & =  \int_{\mathbb{R}}  \left \{ \int_{\R} \Big(\Phi_\epsilon(y  -  m_0(x)) - \Phi_\epsilon(y  -  m(x)) \Big) dF_0(x)  \right \}^2  dy  \notag  \\
  & =  \int_{[0, \infty)}  \left \{ \int_{\R} \Big (1-  \Phi_\epsilon(y  -  m(x))  - (1-\Phi_\epsilon(y  -  m_0(x)))  \Big) dF_0(x)  \right \}^2  dy
    \label{Integ1} \\
    %% \notag \\
    %% &  \label{Integ1} \\ 
  & \quad +   \int_{(-\infty, 0]}  \left \{ \int_{\R} \Big (\Phi_\epsilon(y  -  m_0(x)) - \Phi_\epsilon(y  -  m(x)) \Big) dF_0(x)  \right \}^2  dy.  \label{Integ2}
\end{align}

We further bound above  the integral in (\ref{Integ1}) by
\begin{eqnarray}
&&  \le 2  \int_{[0, \infty)}   \left\{\int_{\R} \big(1-  \Phi_\epsilon(y  -  m(x))\big) dF_0(x) \right\}^2 dy  \notag \\
&&  +  \  2  \int_{[0, \infty)}  \left\{ \int_{\R} \big(1-\Phi_\epsilon(y  -  m_0(x))\big)   dF_0(x)  \right \}^2  dy  \notag  \\
&& \le   4  \int_{[0, \infty)}   \Big (1-  \Phi_\epsilon(y  -  \max(K_0, K))\Big)^2 dy      \label{IneqK} 
\end{eqnarray}
where we recall that $K \ge \Vert m \Vert_\infty$.  The latter integral is finite  using again (\ref{Integrable}).   Similarly, we argue that the integral in (\ref{Integ2}) can be also bounded above by 
\begin{eqnarray*}
4  \int_{(- \infty, 0]}  \Big ( \Phi_\epsilon(y  +  \max(K_0, K))\Big)^2 dy  < \infty.
\end{eqnarray*}
This completes the proof that $\mathbb{M}(m)$  is finite.  \\

{\it Proof of Claim 3.}  Using again that $\mathbb{H}_n(y)  = 0$ for all   $y < Y_{(1)}$, together with monotonicity of $\Phi_\epsilon$ and $m$,   we have that
\begin{eqnarray*}
\mathbb{M}_n(m)   & \ge   &  \int_{-\infty}^{Y_{(1)}} n^{-2}  \bigg(\sum_{i=1}^n \Phi_\epsilon(y- m(X_i))\bigg)^2 dy \\
& \ge &    n^{-2}  \int_{-\infty}^{Y_{(1)}} \Phi_\epsilon(y - m(X_{(1)}))^2 dy   \\
 & = &  n^{-2}    \int_{-\infty}^{Y_{(1)}  - m(X_{(1)})} \Phi_\epsilon(t)^2 dt  
\\
& \to &  \infty, \  \  \textrm{if $m(X_{(1)})  \to -\infty$}.
\end{eqnarray*}
Similarly, for $y \ge Y_{(n)}$, it holds that 

\begin{eqnarray*}
\mathbb{M}_n(m)  & \ge  &  n^{-2} \int_{Y_{(n)}}^\infty \big(1 - \Phi_\epsilon(y - m(X_{(n)}))\big)^2  dy  \\
& \ge &   n^{-2} \int_{Y_{(n)} - m(X_{(n)})}^\infty (1 - \Phi_\epsilon(t))^2  dt \\
& \to &  \infty, \  \  \textrm{if $m(X_{(n)})  \to \infty$}.
\end{eqnarray*}

Hence, there exists some $K > 0$ (which  may depend on $n$) such that any candidate $m \in \mathcal{M}$ for the minimization problem in (\ref{Minim2}) should satisfy $ - K \le m(X_{(1)}) \le \ldots  \le m(X_{(n)})  \le K$.  By identifying an element $m \in \mathcal{M}_K$ by the corresponding vector $\theta = (m(X_{(1)}), \ldots, m(X_{(n)}))^T$, it is easy to see that the original minimization problem is equivalent to minimizing 
\begin{eqnarray*}
\widetilde{\mathbb{M}}_n(\theta) =:   \int_{\RR}  \left \{ \mathbb{H}_n(y)  - n^{-1}  \sum_{i=1}^n \Phi_\epsilon(y- \theta_i) \right\}^2 dy
\end{eqnarray*}
on the compact finite dimensional subset  
\begin{eqnarray*}
\mathcal{S}_K =:  \left\{(\theta_1, \ldots, \theta_n)^T \in \RR^n:  -K \le \theta_1 \le \ldots \le \theta_n \le K \right \}.\end{eqnarray*}
Now, the function $ \widetilde{\mathbb{M}}_n$ is continuous on $\mathcal{S}_K$ since  for any sequence $(\theta_p)_{p \ge 0}$ in $\mathcal{S}^{\NN}_K$  converging (in any distance) to $\theta \in \mathcal{S}_K$, the sequence of functions
$$
y \mapsto \left(\mathbb{H}_n(y) - n^{-1}  \sum_{i=1}^n \Phi_\epsilon(y - \theta_{p,i}) \right)^2
$$ 
converges pointwise by continuity of $\Phi_\epsilon$ (see Assumption $A2$)  to the limit
\begin{equation}
  \label{eq:4}
  y \mapsto \left(\mathbb{H}_n(y) - n^{-1}  \sum_{i=1}^n \Phi_\epsilon(y - \theta_i) \right)^2.
\end{equation}
Also, for $y \in \RR$,  we have that  \eqref{eq:4} is no larger than

\begin{align*}
  \left \{
  \begin{array}{lll}
    \big(n^{-1} \sum_{i=1}^n \Phi_\epsilon(y+K) \big)^2, & \textrm{for $y < Y_{(1)}$} \\
    4,\   &  \textrm{for $Y_{(1)}  \leq y \le Y_{(n)}$}\\
    \bigg(1 - n^{-1}  \sum_{i=1}^n \Phi_\epsilon(y -K) \bigg)^2,  & \textrm{for $y > Y_{(n)}$} 
  \end{array}
  \right.
\end{align*}
where the function on the right side can be shown to be integrable using similar arguments as above. By the Lebesgue dominated convergence theorem, it follows that 
\begin{eqnarray*}
\lim_{p \to \infty}\widetilde{\mathbb{M}}_n(\theta_p) = \widetilde{\mathbb{M}}_n(\theta). 
\end{eqnarray*}
Thus,  $\widetilde{\mathbb{M}}_n $ admits at least a minimizer in $\mathcal{S}_K$, $\widehat{\theta}_n$ say. We conclude that   $\mathbb{M}_n$ admits at least a minimizer $\widehat m_n$ which is bounded by $K$, and such that $(\widehat m_n(X_{(1)}), \ldots, \widehat m_n(X_{(n)}))^T  =  \widehat{\theta}_n$. The values of the minimizer being given by $\hat\theta_n$ at the observed covariates $X_1,\dots,X_n$, any monotone interpolation of these values gives a solution to \eqref{Minim2}. In particular, there exists a solution $\widehat m_n$ that takes constant values between successive covariates and that is right continuous. 

{\it Proof of Claim 4.}    Without loss of generality, and possibly changing scale, we can assume that $\epsilon$ is supported on $[-1,1]$.  We show below that there exists at least a solution to \eqref{Minim2} that is bounded in sup-norm by $K_0+2$, where $K_0$ is taken from Assumption \ref{assm:A1:m0-sup}.   % It is clear that for this solution, \eqref{eq: AnBn} holds true with $a=0$ and $b=1$, which completes the proof. 
  For an arbitrary $m\in\mathcal M$, we define the truncated version  $\bar m$  by 
\begin{equation*}
\bar m(x)=\begin{cases}
K_0+2&\mbox{ if }m(x)\geq K_0+2\\
-K_0-2&\mbox{ if }m(x)\leq -K_0-2\\
m(x)&\mbox{ otherwise}
\end{cases}.
\end{equation*}

In the following, we place ourselves in the event $\Vert \epsilon \Vert_\infty \le 1$ which occurs with probability 1.  Consider $y>K_0+1$. Since $|Y_i|\leq K_0+1$ for all $i$ , we then have $\mathbb{H}_n(y) =1$, and 
$$\Phi_\epsilon(y-m(X_i))=\Phi_\epsilon(y-\bar
 m(X_i))=1$$
 for all $X_i$'s such that $m(X_i)\leq -K_0-2$.
 %% \comf{The latter equality holds true because in this case both $y - m(X_i)$ and $y - \bar{m}(X_i)$ are strictly larger than $ 2K_0 +3 > 1 $. }  
%% \remCRD{ I (still) think that `` $m(X_i)\leq -K_0-2$'' should be ``$m(X_i) \le K_0 + 2$''?  Otherwise, how does \eqref{eq:int+} follow?  You need a statement for all $X_i$ values. }
 Also, for all $X_i$'s such that $m(X_i)\geq K_0+2$ we have that 
$$ \Phi_\epsilon(y-m(X_i))\leq \Phi_\epsilon(y-\bar
 m(X_i))\leq 1. $$
This implies that
\begin{eqnarray}\label{eq:int+}\notag
&&\int_{K_0+1}^\infty  \Big \{ \mathbb{H}_n(y)  -  n^{-1}  \sum_{i=1}^n \Phi_\epsilon(y  -  \bar m(X_i)) \Big \}^2   dy \\
&&\qquad \leq \int_{K_0+1}^\infty  \Big \{ \mathbb{H}_n(y)  -  n^{-1}  \sum_{i=1}^n \Phi_\epsilon(y  -  m(X_i)) \Big \}^2   dy.
\end{eqnarray}
Similarly, it can be shown that
\begin{eqnarray}\label{eq:int-}\notag
&&\int_{-\infty}^{-K_0-1}  \Big \{ \mathbb{H}_n(y)  -  n^{-1}  \sum_{i=1}^n \Phi_\epsilon(y  -  \bar m(X_i)) \Big \}^2   dy \\
&&\qquad \leq \int_{-\infty}^{-K_0-1}  \Big \{ \mathbb{H}_n(y)  -  n^{-1}  \sum_{i=1}^n \Phi_\epsilon(y  -  m(X_i)) \Big \}^2   dy.
\end{eqnarray}
Now, consider $y$ such that $|y|\leq K_0+1$. If for some $i$ we have  $y>m(X_i)+1$ (or $y<m(X_i)-1$) then we have that
$$\Phi_\epsilon(y-m(X_i))=\Phi_\epsilon(y-\bar
 m(X_i)).$$
Indeed,  if $y > m(X_i) +1$, then  $m(X_i)  < K_0$. In case $m(X_i) > -K_0-2$ we have $m(X_i)  = \bar m(X_i)$ and  $\Phi_\epsilon(y-m(X_i))=\Phi_\epsilon(y-\bar  m(X_i)) =1$.   If $m(X_i) \le -K_0-2$, then  $\bar{m}(X_i) =  - K_0-2$ and hence $y - \bar{m}(X_i)  \ge 1$ implying again that $\Phi_\epsilon(y-m(X_i))=\Phi_\epsilon(y-\bar  m(X_i)) =1$. Similar arguments can be used in case $y < m(X_i) -1$.

Now, the equality in the above display holds also if $|y-m(X_i)|\leq 1$ since in that case,  $|m(X_i)|\leq K_0+2$, implying that  $\bar m
 (X_i)=m(X_i)$.   Combining this with \eqref{eq:int+} and \eqref{eq:int-} shows that
\begin{eqnarray*}
&&\int_{\RR}  \Big \{ \mathbb{H}_n(y)  -  n^{-1}  \sum_{i=1}^n \Phi_\epsilon(y  -  \bar m(X_i)) \Big \}^2   dy \\
&&\qquad \leq \int_{\RR}  \Big \{ \mathbb{H}_n(y)  -  n^{-1}  \sum_{i=1}^n \Phi_\epsilon(y  -  m(X_i)) \Big \}^2   dy.
\end{eqnarray*}
From Claim 3 in Proposition \ref{prop: exist}, there exists at least a solution to \eqref{Minim2} and from the arguments above, its truncated version also is a solution. Hence, there exists at least a solution that is bounded in the sup-norm by $K_0+2$ with probability $1$. This  completes the proof of the proposition.    \hfill  $\Box$

\subsection{Proofs for Section \ref{sec: rate}}\label{sec: proofs}
We first prove the propositions in Section \ref{sec: rate} and finish with the proof of Theorems \ref{theo: rate}, \ref{rem:1},  \ref{theo: nonoise}, \ref{theo: rateunequalsamplesizes} and \ref{theo: unifcons}.

\medskip
\medskip

\noindent \textbf{Proof of Proposition \ref{prop: DistanceG}.}    Let   $\widehat{\mathbb{H}}_n $ and $\mathbb{H}^0_n$ be the distribution functions defined as
\begin{eqnarray}\label{eq: Hn}
\widehat{\mathbb{H}}_n(y)  =  \frac{1}{n} \sum_{i=1}^n \Phi_\epsilon(y -  \widehat{m}_n(X_i)), \ \textrm{and} \ \ \mathbb{H}^0_n(y)  =  \frac{1}{n} \sum_{i=1}^n \Phi_\epsilon(y -  m_0(X_i)),
\end{eqnarray}
for $y \in \R$.   %For $T > T^*$ where $T^* > 0$ is from Assumption A2, the condition in (\ref{OrdSmooth}) implies that $t \mapsto \vert \comC{\phi_{f_\epsilon}} (t) \vert$ is decreasing on $(-\infty, -T\comC{]}  \cup [T, \infty)$. 
Recall the Plancherel's identity  for Fourier transforms: for a function $g  \in L_1(\RR)\cap L_2(\RR) $, where $L_1(\RR)$ and $L_2(\RR)$ denote respectively the set of integrable, and the set of square integrable functions from $\RR$ to $\RR$ with respect to the Lebesgue measure it holds that
\begin{eqnarray*}
\int_{\RR}  g(x)^2  dx =  \frac{1}{2\pi}  \int_{\RR}  \vert \phi_g(x)  \vert^2 dx  
\end{eqnarray*}
where $\phi_g$ is defined in (\ref{phig}).   If $F_1 $ and   $F_2 $ are two distribution functions  with finite expectations, it follows using integration by parts that
\begin{eqnarray*}
\psi_{F_2}(x)  -  \psi_{F_1}(x)  =  -i x  \int_{\RR}   (F_2(t)  - F_1(t))  e^{itx}  dt   
\end{eqnarray*}
implying that 
\begin{eqnarray}\label{phipsi}
\phi_{F_2 - F_1}(x)  = i \  \frac{\psi_{F_2}(x)  -  \psi_{F_1}(x)}{x}
\end{eqnarray}
for $x \ne 0$.    Moreover, if $F_1$ and $F_2$ have finite expectations then
$$\int_{-\infty}^0F_j(x)dx<\infty\mbox{ and }\int_0^\infty (1-F_j(x))dx<\infty,$$
for $j\in\{1,2\}$, implying that $F_1-F_2\in L_1(\RR)\cap L_2(\RR)$. Therefore, the Plancherel identity implies that
\begin{eqnarray*}\label{PlanchDF}
\int_{\RR}  (F_2(x)  -  F_1(x))^2  dx =  \frac{1}{2\pi}  \int_{\RR} \frac{1}{x^2} \vert \psi_{F_2}(x)  - \psi_{F_1}(x)\vert^2 dx.  
\end{eqnarray*}
We apply below this identity with $F_1$ and $F_2$ replaced respectively by $\widehat{\mathbb{L}}_n$ and  $\mathbb L_n$, defined in (\ref{DefLnhatLn}).  Note that the two corresponding distributions have finite expectations since they are supported on a finite set. Hence,
\begin{eqnarray*}
\int_{\R}  \left(\widehat{\mathbb{L}}_n(w)   -  \mathbb L_n(w)  \right)^2  dw  &=&  \frac{1}{2\pi} \int_{\R} \frac{1}{t^2}\left\vert \psi_{\widehat{\mathbb{L}}_n}(t)  -  \psi_{\mathbb{L}_n}(t)  \right\vert^2  dt.\\
\end{eqnarray*}
By Assumption \ref{assm:A5}, we can find $T^* > 0$ such that  $\vert \phi_{f_\epsilon}(t) \vert   \ge\vert \phi_{f_\epsilon}(T) \vert>0$ for all $T>T^*$ and $\vert t\vert \le T$.  Using that $\vert \psi_{F} \vert \le 1$ for any distribution function $F$, it follows from the previous display that for all $T> T^*$ we have
%\remCRD{ To come back to/ question to self: do we need ``all $T > T^*$'' or do we just need a sequence of $T$'s going to $\infty$?  The latter would allow us to include say the uniform distribution which is currently excluded by our assumptions.  }
%$a\in(0,2)$ and $T\geq T^*$, we have
\begin{eqnarray*}
\int_{\R}  \left(\widehat{\mathbb{L}}_n(w)   -  \mathbb L_n(w)  \right)^2  dw  
& \le  & \frac{1}{2 \pi \vert \phi_{f_\epsilon}(T) \vert^2} \int_{-T}^{T} \frac{\vert \phi_{f_\epsilon}(t) \vert^{2}}{t^2}  \left\vert \psi_{\widehat{\mathbb{L}}_n}(t)  -  \psi_{\mathbb{L}_n}(t)  \right\vert^2  dt       +  \frac{4}{\pi T}\\
& \le  & \frac{1}{2 \pi \vert \phi_{f_\epsilon}(T) \vert^2} \int_{\RR} \frac{\vert \phi_{f_\epsilon}(t) \vert^{2}}{t^2}  \left\vert \psi_{\widehat{\mathbb{L}}_n}(t)  -  \psi_{\mathbb{L}_n}(t)  \right\vert^2  dt       +  \frac{4}{\pi T}.
\end{eqnarray*}
Now, using again Plancherel's identity we have 
\begin{eqnarray*}
 \int_{\R} \frac{\vert \phi_{f_\epsilon}(t) \vert^2}{t^2}  \left\vert \psi_{\widehat{\mathbb{L}}_n}(t)  -  \psi_{\mathbb{L}_n}(t)  \right\vert^2  dt  & = & \int_{\R}  \vert \phi_{f_\epsilon}(t) \vert^2  \vert \phi_{\widehat{\mathbb{L}}_n - \mathbb{L}_n}(t) \vert^2 dt \\
& = & \int_{\R} \vert \phi_{f_\epsilon \star (\widehat{\mathbb{L}}_n - \mathbb{L}_n)}(t) \vert^2 dt\\
& = &   2\pi\int_{\R}  \left( \widehat{\mathbb{H}}_n(y)  -  \mathbb{H}^0_n(y) \right)^2  dy
\end{eqnarray*}
since $\widehat{\mathbb{H}}_n  =  f_\epsilon \star \widehat{\mathbb{L}}_n$ and $\mathbb{H}^0_n  = f_\epsilon \star  \mathbb{L}_n$.
Here, $(f \star g)(y) := \int_{\RR} f(z) g(y-z) dz.$
Hence, it follows from Assumption \ref{assm:A4:Phi-smoothness} that for sufficiently large $T$,
% \begin{eqnarray}\label{eq: Lnhat-Ln}
    %     &&\int_{\R}  \left(\widehat{\mathbb{L}}_n(w)   -  \mathbb L_n(w)  \right)^2  dw  \\
    %           &&\qquad\qquad \le   \frac{T^{2\beta}}{ d_0^2} \int_{\R}  \left( \widehat{\mathbb{H}}_n(y)  -  \mathbb{H}^0_n(y) \right)^2  dy      +  \frac{4}{\pi T}.
    %         \end{eqnarray}
\begin{equation}
  \label{eq: Lnhat-Ln}
  \int_{\R}  \left(\widehat{\mathbb{L}}_n(w)   -  \mathbb L_n(w)  \right)^2  dw  
   \le   \frac{T^{2\beta}}{ d_0^2} \int_{\R}  \left( \widehat{\mathbb{H}}_n(y)  -  \mathbb{H}^0_n(y) \right)^2  dy      +  \frac{4}{\pi T}.   
\end{equation}
Assuming that we have  
\begin{eqnarray}\label{ErrorH}
\int_{\R}  E\left( \widehat{\mathbb{H}}_n(y)  -  \mathbb{H}^0_n(y) \right)^2  dy  = O(n^{-1}),
\end{eqnarray}
%Recall that $\mathbb{H}_n$ is the empirical distribution based on $Y_1, \ldots, Y_n$.
it will follow that for all sufficiently large $T$,
\begin{eqnarray*}
\int_{\R} E \left(\widehat{\mathbb{L}}_n(w)   -  \mathbb L_n(w)  \right)^2  dw \le   O(T^{2\beta}    n^{-1})+  \frac{4}{\pi T}.
\end{eqnarray*}
For  $T = T_n  \sim n^{1/(2\beta +1)}$ we get
\begin{eqnarray*}
\int_{\R}  E\left(\widehat{\mathbb{L}}_n(w)   -  \mathbb L_n(w)  \right)^2  dw \le   O\left( \frac{1}{n^{1/(2\beta+1)}}  \right),
\end{eqnarray*}
which proves Proposition \ref{prop: DistanceG}.

Now, we will show (\ref{ErrorH}).   From the inequality  $(a + b)^2 \le 2 (a^2 + b^2)$, which holds for any $a$ and $b$ in $\RR$, and the definition of $\widehat{m}_n$ it follows that 
\begin{eqnarray*}
 \int_{\R}  \left( \widehat{\mathbb{H}}_n(y)  -  \mathbb{H}^0_n(y) \right)^2  dy &\le  & 2 \int_{\R}  \left( \widehat{\mathbb{H}}_n(y)  -  \mathbb{H}_n(y) \right)^2   dy +  2 \int_{\R}  \left(\mathbb{H}^0_n(y)  -  \mathbb{H}_n(y)  \right)^2   dy \\
& \le & 4 \int_{\R}  \left( \mathbb{H}^0_n(y)  -  \mathbb{H}_n(y) \right)^2  dy \\
& \le &  8   \int_{\R}  \left(\mathbb{H}^0_n(y)  - H_0(y) \right)^2 dy +  8    \int_{\R}\left( \mathbb{H}_n(y)- H_0(y) \right)^2  dy 
\end{eqnarray*}
where
\begin{eqnarray*}
E[\left( \mathbb{H}_n(y)- H_0(y) \right)^2]  = n^{-1}  H_0(y)  (1-H_0(y)),
\end{eqnarray*}
and
\begin{eqnarray*}
  E[\left(\mathbb{H}^0_n(y)  - H_0(y) \right)^2]  & =  &   \frac{1}{n} \Var \Phi_\epsilon( y - m_0(X))\\
  &=  &  \inv{n} \int_{\RR} \lp \Phi_\epsilon(y - m_0(x)) - H_0(y) \rp^2 dF_0(x).
\end{eqnarray*}
%% Consider both the integrals
Both the integrals
\begin{eqnarray}\label{I1}
I_1=  \int_{\R}  H_0(y)  (1-H_0(y)) dy
\end{eqnarray}
and 
\begin{eqnarray}\label{I2}
I_2=  \int_{\R}  \int_{\RR} \lp \Phi_\epsilon(y - m_0(x)) - H_0(y) \rp^2 dF_0(x)   dy
\end{eqnarray}
 are finite,  see Appendix \ref{appendix I1}.  This yields  the result.     \hfill $\Box$

\medskip
\medskip

\noindent \textbf{Proof of Proposition \ref{prop: E1andE3andE2}.}  \  In the sequel we denote by $\mathbb P^X_n$  and $P^X$  the empirical probability measure associated with $X_1, \ldots, X_n$ and the true corresponding probability measure. Then, the two integrals in Proposition \ref{prop: E1andE3andE2} are the integrated square of the empirical processes  
\begin{eqnarray*}
\widehat{\mathbb{L}}_n(w)    -\widehat{L}^0_n(w)  = (\mathbb P^X_n - P^X)  \one_{\{ \widehat m_n(\cdot) \le w \}}
\end{eqnarray*}
and 
\begin{eqnarray*}
\mathbb{L}_n(w) - L_0(w)  = (\mathbb P^X_n - P^X)   \one_{\{ m_0(\cdot) \le w\}}.
\end{eqnarray*}
In Appendix~\ref{sec:some-rudiments-EPT}, we recall some of the basic tools of empirical processes that we need in this proof. In what follows, the notation $\lesssim$ means smaller or equal modulo a universal positive multiplicative constant.  For all fixed $w\in\RR$ and $m \in \mathcal{M}$, let $k_{w,m}$ be the function defined by $k_{w,m}(x)=\one_{m(x) \le w}$ for all $x\in[0,1]$. Consider the set of functions
\begin{eqnarray*}
\mathcal{I} : =  \Big \{  k_{w,m},  \ \textrm{with}  \   m \in \mathcal{M} \ \textrm{and }  w \in [A,B]  \Big \}.   
\end{eqnarray*}
Using the same notation as in \cite{aadbook}  (for completeness, we provide definitions in Appendix~\ref{sec:some-rudiments-EPT}), let us write  $ \mathbb{G}_n k = \sqrt n (\mathbb P^X_n - P^X)  k$ for $k \in \mathcal{I}$.  Since $\mathcal{I}$  is a subset of the class of monotone non-increasing functions $f: \RR \mapsto [0, 1]$,   it follows from \citet[Theorem 2.7.5]{aadbook} that there exists a universal constant  $C > 0$, such that  for any $\delta > 0$ and any probability measure $Q$,
\begin{eqnarray*}
\log N_B\Big(\delta, \mathcal{I}, L_2(Q)\Big) \le \frac{C}{\delta}
\end{eqnarray*}
(where $N_B(\cdot, \cdot, \cdot)$ is defined in Appendix~\ref{sec:some-rudiments-EPT}).
Since $\mathcal{I}$ admits $F(t) =1$ as an envelope, this and the inequality in (\ref{RelNumb}) imply that
\begin{eqnarray*}
J(1, \mathcal{I}) & \le  &  \sup_{Q} \int_0^1 \sqrt{1 + \log N_B(2\delta, \mathcal{I}, L_2(Q))} d\delta  \\
                          & \le &  \int_0^1 \sqrt{1 +  \frac{C}{2 \delta}} d\delta \le 1 + \sqrt{2C}  < \infty,
  \end{eqnarray*} 
where $J(\delta, \mathcal{F})$ is defined in  (\ref{Jdef}). Since $X_1,\dots,X_n$ are i.i.d. it follows now from \citet[Theorem 2.14.1]{aadbook}  that 
\begin{eqnarray}\label{BoundGn}
\left(E\left[  \Vert \mathbb G_n  \Vert^2_{\mathcal{I}}\right]\right)^{1/2}  \lesssim J(1, \mathcal{I}). 
\end{eqnarray}
Let us denote  
\begin{eqnarray*}
M_n &=  \max \left(\int_{A}^B  \left(\widehat{\mathbb{L}}_n(w)    -\widehat{L}^0_n(w)   \right)^2  dw,  \int_{A}^B  \Big(\mathbb{L}_n(w) - L_0(w)\Big)^2  dw  \right).
\end{eqnarray*}
Then,
\begin{eqnarray*}
0 \le M_n \le  \frac{B-A}{n}   \Vert \mathbb{G}_n \Vert_{\mathcal{I}}^2.  
\end{eqnarray*}
The first two claims in the proposition now follow from (\ref{BoundGn}) combined to   the Markov's inequality. 

Now,
using the inequality $(a + b +c)^2 \le 3 (a^2 + b^2  + c^2)$ for any real numbers $a, b $ and $c$, we have
\begin{eqnarray}\label{Decomp}
\int_{A}^B  \left(\widehat{L}^0_n(w)   - L_0(w)  \right)^2  dw  & \le   & 3  \int_{A}^B  \left(\widehat{\mathbb{L}}_n(w)    -\widehat{L}^0_n(w)   \right)^2  dw   \notag \\
&&  \ + \  3 \int_{A}^B  \left(\widehat{\mathbb{L}}_n(w) - \mathbb{L}_n(w) \right)^2  dw \notag \\
&&  \ + \  3 \int_{A}^B  \left(\mathbb{L}_n(w) - L_0(w)\right)^2  dw. \notag
\end{eqnarray}
It thus follows from Proposition \ref{prop: DistanceG} that \eqref{E2} holds
provided that $B-A=O_P(n^{2\beta/(2\beta+1)}).$ \hfill $\Box$

\medskip
\medskip

\par \noindent \textbf{Proof of Proposition \ref{prop: DistanceG2}.}  \ The first equality in Proposition \ref{prop: DistanceG2} follows from Lemma \ref{lem: inv}  above combined with the definition of $\widehat L^0_n$ and $L^0_n$, while the second equality follows from  Proposition \ref{prop: E1andE3andE2}.   \hfill $\Box$

\medskip
\medskip

\par \noindent \textbf{Proof of Proposition \ref{prop: Conversion}.}  \ It follows from Lemma \ref{L1inv} below  that
\begin{eqnarray*}
\int_a^b \left \vert \widehat m_n(x)  -  m_0(x) \right \vert dF_0(x)  \le 
\int_{A_n}^{B_n}\vert F_0\circ \widehat m_n^{-1}(x)-F_0\circ m_0^{-1}(x)\vert dx.
\end{eqnarray*}
The proposition then follows from applying the Cauchy-Schwarz inequality.   \hfill $\Box$

\medskip
\medskip

\begin{lemma}\label{L1inv} 
  Let $f:[0,1]\to\RR$ and $g:[0,1]\to\RR$ be right-continuous non-decreasing functions. Let $f^{-1}$ and $g^{-1}$ be the corresponding generalized inverses, see \eqref{eq: inverse}   where the infimum of an empty set is defined to be one.
  Let $H:[0,1]\to[0,1]$ be a continuous non-decreasing function. Then, for all $a<b$ in $[0,1]$ we have
  $$\int_a^b\vert f(t)-g(t)\vert dH(t)\leq \int_{I(a)}^{S(b)}\vert H\circ g^{-1}(x)-H\circ f^{-1}(x)\vert dx $$
  where
  $$I(a)=f(a)\wedge g(a)\ ;\ S(b)=f(b)\vee g(b).$$
\end{lemma}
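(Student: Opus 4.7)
My plan is to exploit the classical ``area equivalence'' between vertical and horizontal slicing of the region bounded by two monotone curves, tied to the switch relation \eqref{eq: switch} already established in the paper.

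First I would apply the layer-cake identity
\[
|f(t)-g(t)| = \int_{\RR}\bigl|\one_{\{f(t)<x\}} - \one_{\{g(t)<x\}}\bigr|\,dx,
\]
which holds because, for $f(t)\le g(t)$, the integrand in $x$ equals $\one_{\{f(t)<x\le g(t)\}}$ and integrates to $g(t)-f(t)$ (and symmetrically when $g(t)\le f(t)$). Multiplying by $dH(t)$, integrating over $[a,b]$, and applying Fubini (all quantities are nonnegative), I get
\[
\int_a^b |f(t)-g(t)|\,dH(t)
=\int_{\RR}\!\int_a^b \bigl|\one_{\{f(t)<x\}}-\one_{\{g(t)<x\}}\bigr|\,dH(t)\,dx.
\]

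Next I would shrink the outer domain of integration to $[I(a),S(b)]$. Indeed, for $x\le I(a)$ monotonicity gives $f(t),g(t)\ge I(a)\ge x$ on $[a,b]$, so both indicators vanish; for $x>S(b)$ both indicators equal $1$ and cancel. Hence the outer integral runs only over $[I(a),S(b)]$. For each such $x$, I would bound the inner integral by extending the range of $t$ from $[a,b]$ to $[0,1]$ (using nonnegativity), and then invoke the switch relation \eqref{eq: switch} in the form $\one_{\{f(t)<x\}}=\one_{\{t<f^{-1}(x)\}}$ (valid for $t\in[0,1)$; the single point $t=1$ has $H$-measure zero by continuity of $H$). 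This yields
\[
\int_a^b \bigl|\one_{\{f(t)<x\}}-\one_{\{g(t)<x\}}\bigr|\,dH(t)
\le \int_0^1 \bigl|\one_{\{t<f^{-1}(x)\}}-\one_{\{t<g^{-1}(x)\}}\bigr|\,dH(t).
\]

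Finally I would compute the right-hand side explicitly: for $c,d\in[0,1]$ with (say) $c\le d$,
\[
\int_0^1 \bigl|\one_{\{t<c\}}-\one_{\{t<d\}}\bigr|\,dH(t)= H(d^{-})-H(c^{-})= H(d)-H(c),
\]
where the last equality uses continuity of $H$. Applied with $c=f^{-1}(x)\wedge g^{-1}(x)$ and $d=f^{-1}(x)\vee g^{-1}(x)$, this gives $\bigl|H\circ f^{-1}(x)-H\circ g^{-1}(x)\bigr|$, and integrating over $[I(a),S(b)]$ completes the proof. The only subtle point (the main ``obstacle,'' although minor) is handling the boundary carefully: the switch relation \eqref{eq: switch} is stated for $t\in[0,1)$ and $f^{-1}(x)$ may equal $1$ when $x>f(1)$, but continuity of $H$ ensures that both the endpoint $t=1$ and the distinction between $<$ and $\le$ contribute nothing, so no genuine difficulty arises.
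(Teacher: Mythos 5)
Your proof is correct and follows essentially the same route as the paper's: both arguments rest on a layer-cake representation of $|f(t)-g(t)|$, Fubini, the switch relation \eqref{eq: switch}, and continuity of $H$ to convert the vertical slicing into the horizontal one. The only differences are cosmetic — you handle the absolute value in one pass rather than splitting into the positive parts $I_1$ and $I_2$, and you enlarge the $t$-range to $[0,1]$ where the paper instead carries the truncations $g^{-1}(x)\wedge b$ and $f^{-1}(x)\vee a$ before relaxing them.
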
 

\medskip

\begin{proof}[Proof of Lemma \ref{L1inv}.]

  For all real numbers $u$, let $u_+=\max(u,0)$. We then have
  \begin{eqnarray}\label{L1invI1I2}
    \int_a^b\vert f(t)-g(t)\vert dH(t) = I_1+I_2
  \end{eqnarray}
  where
  $$I_1=\int_a^b( f(t)-g(t))_+ dH(t) \mbox{ and } I_2=\int_a^b(g(t)-f(t))_+ dH(t) .$$
  Let us deal first with $I_1$. We have
  % \begin{eqnarray*}
      %       I_1&=& \int_a^b\int_0^\infty\mathbb{I}_{\{x\le  f(t)-g(t)\}}dxdH(t)\\
      %                  &=& \int_a^b\int_{g(t)}^\infty\mathbb{I}_{\{x\le f(t)\}}dxdH(t)\\
      %                  &=& \int_a^b\int_{g(t)}^{f(b)}\mathbb{I}_{\{x\le f(t)\}}dxdH(t),
                             %     \end{eqnarray*}
  \begin{align*}
    I_1 = \int_a^b\int_0^\infty\mathbb{I}_{\{x\le  f(t)-g(t)\}}dxdH(t)
    & = \int_a^b\int_{g(t)}^\infty\mathbb{I}_{\{x\le f(t)\}}dxdH(t) \\
    & = \int_a^b\int_{g(t)}^{f(b)}\mathbb{I}_{\{x\le f(t)\}}dxdH(t),
  \end{align*}
  where we use a change of variable for the second equality and the monotonicity of $f$  for the third one. Similar to \eqref{eq: switch}, the equivalence
  $$t\ge f^{-1}(x)\Longleftrightarrow f(t)\ge x$$
  holds for all $t\in[0,1)$ and $x\in\RR$. Combining this with the Fubini theorem, we arrive at
  \begin{eqnarray*}
    I_1
    &=& \int_a^b\int_{g(t)}^{f(b)}\mathbb{I}_{\{t\geq f^{-1}(x)\}}dxdH(t)\\
    &=& \int_{g(a)}^{f(b)}\int_a^b\mathbb{I}_{\{t\ge f^{-1}(x)\}}\mathbb{I}_{\{t< g^{-1}(x)\}}dH(t)dx.
  \end{eqnarray*}
Hence, it follows from the continuity of $H$ that
\begin{eqnarray*}
I_1
&=& \int_{g(a)}^{f(b)}\left(H(g^{-1}(x)\wedge b)-H( f^{-1}(x)\vee a)\right)_+dx\\
&\le& \int_{g(a)}^{f(b)}\left(H(g^{-1}(x))-H( f^{-1}(x))\right)_+dx,
\end{eqnarray*}
since $H$ is non-decreasing. Since $I(a)\le g(a)$ and $S(b)\ge f(b)$, this implies that
\begin{eqnarray*}
I_1
&\le& \int_{I(a)}^{S(b)}\left(H\circ g^{-1}(x)-H\circ f^{-1}(x)\right)_+dx.
\end{eqnarray*} 
Interchanging the roles of $f$ and $g$, we obtain
\begin{eqnarray*}
I_2
&\leq& \int_{I(a)}^{S(b)}\left(H\circ f^{-1}(x)-H\circ g^{-1}(x)\right)_+dx
\end{eqnarray*}
and therefore,
\begin{eqnarray*}
I_1+I_2
&\leq& \int_{I(a)}^{S(b)}\left\vert H\circ f^{-1}(x)-H\circ g^{-1}(x)\right\vert dx.
\end{eqnarray*}
Lemma \ref{L1inv} then follows from \eqref{L1invI1I2}.
\end{proof}

\medskip
\medskip

\par \noindent \textbf{Proof of Theorem \ref{theo: rate}.}   \textit{ Proof of the first claim.} If we have \eqref{eq: AnBn}, then $A_n$ and $B_n$ from Proposition \ref{prop: Conversion} are both of the order $O_P(1)$. Hence, the first claim in Theorem \ref{theo: rate} is an immediate consequence of Proposition \ref{prop: Conversion} combined with Proposition \ref{prop: DistanceG2}. 

{\it Proof of the second claim.} Now, we show that  \eqref{eq: AnBn} holds  true for all $a$ and $b$ such that $0<F_0(a)\leq F_0(b)<1$.
 It follows from the definition of $\widehat{\mathbb{L}}_n$ and  $\mathbb L_n$  together with the H\"older inequality and Proposition \ref{prop: DistanceG} that
\begin{eqnarray*}
\int_{\Vert m_0\Vert_\infty}^{2\Vert m_0\Vert_\infty}  \left(1- \frac{1}{n} \sum_{i=1}^n
   \one_{ \{\widehat m_n(X_i)  \le w \}}\right)dw
&=&\int_{\Vert m_0\Vert_\infty}^{2\Vert m_0\Vert_\infty}  \left\vert\widehat{\mathbb{L}}_n(w)   -  \mathbb L_n(w)  \right\vert  dw\\
&\leq&\left(\Vert m_0\Vert_\infty \int_{\R}  \left(\widehat{\mathbb{L}}_n(w)   -  \mathbb L_n(w)  \right)^2  dw\right)^{1/2}\\
&=& o_P(1).
\end{eqnarray*}
On the other hand,
\begin{eqnarray*}
\int_{\Vert m_0\Vert_\infty}^{2\Vert m_0\Vert_\infty}  \left(1- \frac{1}{n} \sum_{i=1}^n
   \one_{ \{\widehat m_n(X_i)  \le w \}}\right)dw
&\geq&\Vert m_0\Vert_\infty n^{-1} \sum_{i=1}^n
   \one_{ \{\widehat m_n(X_i)  >2 \Vert m_0\Vert_\infty \}}
\end{eqnarray*}
and therefore,
\begin{eqnarray*}
 \sum_{i=1}^n
   \one_{ \{\widehat m_n(X_i)  >2 \Vert m_0\Vert_\infty \}}=o_P(n).
\end{eqnarray*}
By monotonicity of $ \widehat m_n$ this implies that
\begin{eqnarray*}
 \one_{\{ \widehat m_n(b)>2\Vert m_0\Vert_\infty \}}\sum_{i=1}^n
   \one_{ \{X_i> b \}}
&\leq& \sum_{i=1}^n
   \one_{ \{\widehat m_n(X_i)  >2 \Vert m_0\Vert_\infty \}}=o_P(n).
\end{eqnarray*}
By the law of large numbers, $n^{-1}\sum_{i=1}^n
   \one_{ \{X_i> b \}}$ converges in probablity to $1-F_0(b)>0$ and therefore, it follows from the previous display that
$$ \one_{\{ \widehat m_n(b)>2\Vert m_0\Vert_\infty \}}=o_P(1).$$
This implies that
$$\lim_{n\to\infty}P( \widehat m_n(b)>2\Vert m_0\Vert_\infty )=0.$$
One can prove similarly that  
$$\lim_{n\to\infty}P( \widehat m_n(a)<-2\Vert m_0\Vert_\infty )=0.$$
This implies  \eqref{eq: AnBn} by monotonicity of $\widehat m_n$,  which completes the proof of the second claim in Theorem \ref{theo: rate}.  \hfill $\Box$

\medskip
\medskip

\par \noindent \textbf{Proof of Theorem~\ref{rem:1}}   \   As the arguments are very similar to those used in the proof of Theorem~\ref{theo: rate},  we focus here on how the converge rate is obtained in the supersmooth case.  Under Assumptions  \ref{assm:A0:basic:m0-Xi}--\ref{assm:A3:f0-bounded},  Assumption~\ref{assm:A4'} and Assumption  \ref{assm:A5}, we can show that this rate of convergence is driven by  
  \begin{eqnarray*}
    O\left(\exp\left( \frac{2 T^{\beta}}{\gamma}\right) T^{-2\alpha}  n^{-1}\right)
    + \frac{4}{\pi T}
  \end{eqnarray*}
  for $T = T_n \to \infty$ as $n \to \infty$ which should be determined so that the above expression is smallest. This means that the first term should converge to $0$ or equivalently that  there exists a sequence $(K_n)_n$ such that $\log (T_n) = K_n \to \infty$ and $K_n \le \log n$  such that
  \begin{eqnarray*}
    \frac{2 T^{\beta}_n}{\gamma}    = \log n  + ( 2 \alpha -1) K_n
    %% = \log n + 
  \end{eqnarray*}
  or equivalently
  \begin{eqnarray*}
    T_n  = c \Big(\log n  + (2 \alpha -1)  K_n\Big)^{1/\beta}, \  \ \textrm{for $c =  (\gamma/2)^{1/\beta}$}.
  \end{eqnarray*}
  % Then, the above expression becomes
  % \begin{eqnarray*}
  %   O(e^{-K_n})  +  \frac{4}{c\pi}  \frac{1}{\Big(\log n  + 2 \alpha \log T_n- K_n\Big)^{1/\beta}}.
  % \end{eqnarray*}
  It is not difficult to see that the optimal choice of the sequence $(K_n)_n$ is $K_n =  (1-a) \log n $ for some $a \in (0,1)$  (the case $a=0$ is impossible because otherwise we would have $T_n = c (2\alpha)^{1/\beta} (\log(T_n))^{1/\beta}$).  This in turn yields
  \begin{eqnarray*}
    T_n = c \Big(a \log n + 2\alpha \log(T_n) \Big)^{1/\beta},
  \end{eqnarray*}  
  implying that  $T_n \sim c a^{1/\beta} (\log n)^{1/\beta}$ and that rate of convergence is  $(\log n)^{-1/(2\beta)}.$  \hfill $\Box$

\medskip
\medskip

\par \noindent \textbf{Proof of Theorem \ref{theo: nonoise}.}  \ 
%Under Assumption~\ref{assm:A1:m0-sup}, we have $|Y|\leq K_0$. Since $\epsilon\equiv 0$, this implies that for any $m\in\cal M$, $\mathbb{M}_n(m) $ remains unchanged if $m$ is replaced by its truncated version $\overline m$ defined by $\overline m(x)=K_0$ if $m(x)\geq K_0$, $\overline m(x)=-K_0$ if $m(x)\leq -K_0$, and $\overline m(x)=m(x)$ otherwise. Hence, under Assumptions~\ref{assm:A0:basic:m0-Xi} and~\ref{assm:A1:m0-sup}, it follows from Proposition  \ref{prop: exist} that there exists at least a solution to \eqref{Minim2} that is piecewise constant, right-continuous and bounded is sup-norm by $K_0$, for all $n \ge 1$, which proves the first claim in the theorem. In the sequel, we denote by $\widehat m_n$ such a solution.
%
%
%To compute the integral on the left-hand side of \eqref{eq: parametric2}, we write
%\begin{eqnarray*}
%&&\int\int _\RR  \left( \Phi_\epsilon(y-m_0(x))  - \Phi_\epsilon(y  -  \hat m_n(x))\right)dydF_0(x)  \\
%&&\qquad =\int\int _\RR  \left( \one_{\{y\geq m_0(x)\}} - \one_{\{y\geq \widehat m_n(x)\}}\right)dydF_0(x)  \\
%&&\qquad =\int\int _\RR  \left( \one_{\{\widehat m_n(x)>y\geq m_0(x)\}} - \one_{\{m_0(x)>y\geq \widehat m_n(x)\}}\right)dydF_0(x)  \\
%&&\qquad =\int \left( \left(\widehat m_n(x)-m_0(x)\right)_+-\left(m_0(x)-\widehat m_n(x)\right)_+\right)dF_0(x)  \\
%&&\qquad =\int \left|\widehat m_n(x)-m_0(x)\right|dF_0(x) 
%\end{eqnarray*}
%
%The result now follows from \eqref{eq: parametric2}.
Without loss of generality, we can assume that the support points of $\epsilon$ are all in $[-1,1]$.  From Proposition \ref{prop: exist}  we know that with probability $1$ there exists a solution to \eqref{Minim2} which is bounded in the sup-norm by $K_0+2$. In the sequel, we denote by $\widehat m_n$ such a solution. %, and by $V$ a positive real number such that the sup-norm of $\widehat m_n$ is bounded above by $V$. Possibly enlarging $V$, we assume that the sup-norm of $m_0$ is also bounded above by $V$. Moreover, $\epsilon$ is bounded so, possibly changing scale, we assume for ease of notation that $|\epsilon|\leq 1$.

Recall $\widehat{\mathbb{H}}_n$ and $ \mathbb{H}^0_n$ from \eqref{eq: Hn}. Using the Cauchy-Schwarz inequality it follows that
\begin{eqnarray*}
\int_{\R}  \left| \widehat{\mathbb{H}}_n(y)  -  \mathbb{H}^0_n(y) \right|  dy  &=&
\int_{-K_0-3}^{K_0+3} \left| \widehat{\mathbb{H}}_n(y)  -  \mathbb{H}^0_n(y) \right|  dy\\
&\leq&\left((2K_0+6)\int_{\R}  \left| \widehat{\mathbb{H}}_n(y)  -  \mathbb{H}^0_n(y) \right|^2  dy\right)^{1/2}.
\end{eqnarray*}
Since the equality in \eqref{ErrorH} holds under Assumptions \ref{assm:A0:basic:m0-Xi} to 
\ref{assm:A2:Phi-cts}  it follows that
\begin{eqnarray*}
\int_{\R}  \left| \widehat{\mathbb{H}}_n(y)  -  \mathbb{H}^0_n(y) \right|  dy  &=&O_P(n^{-1/2}).
\end{eqnarray*}
Now note that $\widehat{\mathbb{H}}_n$ and  $\mathbb{H}^0_n$ are distribution functions with bounded support, and hence they admit a finite first moment. Therefore, denoting by $W_1(F,G)$ the Wasserstein-distance of first order between two probability
distributions with respective distribution functions $F$ and $G$, it follows from %%Proposition 2 in \cite{Meis}
Proposition~\ref{prop:1} in Appendix~\ref{sec:wass-dist-lemm} %% 
that  
\begin{eqnarray*}
W_1\left( \widehat{\mathbb{H}}_n,\mathbb{H}^0_n \right)  &=&O_P(n^{-1/2}).
\end{eqnarray*}
With $ \widehat{\mathbb{L}}_n$ and $\mathbb{L}_n$ taken from \eqref{DefLnhatLn}, it follows from
%% Proposition 3 in \cite{Meis}
Proposition~\ref{prop:2} in Appendix~\ref{sec:wass-dist-lemm} %% 
that there exists some constant $C>0$ that depends only on the distribution of $\epsilon$ and $K_0$  such that
\begin{eqnarray*}
W_1\left( \widehat{\mathbb{L}}_n,\mathbb{L}_n \right) \leq C \ W_1\left( \widehat{\mathbb{H}}_n,\mathbb{H}^0_n \right).
\end{eqnarray*}
Thus, 
\begin{eqnarray*}
W_1\left( \widehat{\mathbb{L}}_n,\mathbb{L}_n \right) &=&O_P(n^{-1/2}).
\end{eqnarray*}
Using again %
Proposition~\ref{prop:1} in Appendix~\ref{sec:wass-dist-lemm} %%
%% Proposition 2 in \cite{Meis},
the latter rate yields
\begin{eqnarray*}
\int_{\R}  \left| \widehat{\mathbb{L}}_n(y)  -  \mathbb{L}_n(y) \right|  dy  &=&O_P(n^{-1/2}).
\end{eqnarray*}
Combining this with Proposition \ref{prop: E1andE3andE2} and the Cauchy-Schwarz inequality, we obtain 
\begin{eqnarray*}
\int_{\R}  \left| \widehat{L}^0_n(y)  - L_0(y) \right|  dy  &=&\int_{-K_0-2}^{K_0+2}  \left| \widehat{L}^0_n(y)  - L_0(y) \right|  dy \\
&=&O_P(n^{-1/2}).
\end{eqnarray*}
Using the result of Lemma \ref{L1inv}, it follows that
\begin{eqnarray*}
\int_0^1|\widehat m_n(x)-m_0(x)|dF_0(x) & \le  &   \int_{m_0(0) \wedge \widehat m_n(0)}^{m_0(1) \vee \widehat m_n(1)} |F_0\circ \widehat m_n^{-1}(y)-F_0\circ m_0^{-1}(y)|dy \\
& = & \int_{m_0(0) \wedge \widehat m_n(0)}^{m_0(1) \vee \widehat m_n(1)}   \left| \widehat{L}^0_n(y)  - L_0(y) \right|  dy  \\
&\le  &   \int_{-K_0-2}^{K_0+2} \left| \widehat{L}^0_n(y)  - L_0(y) \right|  dy 
\end{eqnarray*}
implying that$\int_0^1|\widehat m_n(x)-m_0(x)|dF_0(x)  = O_P(n^{-1/2})$.   \hfill $\Box$

\medskip
\medskip

\medskip
\par \noindent \textbf{Proof of Theorem \ref{theo: rateunequalsamplesizes}.}  \   
{\it Proof of Claim 1.}
We start with the case where the noise has an absolutely continuous density.  We restrict attention to the ordinary smooth case because the arguments are very similar in the supersmooth one.  Now, since the proof in the ordinary smooth case follows the same lines of the proof of Theorem \ref{theo: rate},  details are omitted and the reader is referred to the latter proof.  Here, we only point out the main existing differences.    Similar to \eqref{DefLnhatLn} and \eqref{eq: Hn},we define
$$
  \widehat{\mathbb{L}}_{\nx, \ny}(w) : =  \frac{1}{\nx} \sum_{i=1}^{\nx}
   \one_{ \{\widehat m_{\nx, \ny}(X_i)  \le w \}}
  \ \textrm{and} \ \
  \mathbb L_{\nx}(w)  :=   \frac{1}{\nx} \sum_{i=1}^{\nx}
  \one_{\{m_0(X_i)  \le w\}}  
$$
for all $w \in \R$, and
$$\widehat{\mathbb{H}}_{\nx, \ny}(y)  =  \frac{1}{\nx} \sum_{i=1}^{\nx} \Phi_\epsilon(y -  \widehat{m}_{\nx, \ny}(X_i)), \ \textrm{and} \ \ \mathbb{H}^0_{\nx}(y)  =  \frac{1}{\nx} \sum_{i=1}^{\nx} \Phi_\epsilon(y -  m_0(X_i))$$
for all $y\in\RR$.
With similar arguments as for the proof of \eqref{eq: Lnhat-Ln} we obtain that for all sufficiently large $T$,
\begin{equation}\label{eq: Lnhat-Lnx}
\int_{\R}  \left(\widehat{\mathbb{L}}_{\nx, \ny}(w)   -  \mathbb L_{\nx}(w)  \right)^2  dw   \le   \frac{T^{2\beta}}{ d_0^2} \int_{\R}  \left( \widehat{\mathbb{H}}_{\nx, \ny}(y)  -  \mathbb{H}^0_{\nx}(y) \right)^2  dy      +  \frac{4}{\pi T}.
\end{equation}
Moreover, it follows from the definition of $\hat m_{n_x, n_y}$ that
% \begin{eqnarray*}
%  \int_{\R}  \left( \widehat{\mathbb{H}}_{\nx, \ny}(y)  -  \mathbb{H}^0_{\nx}(y) \right)^2  dy &\le  & 2 \int_{\R}  \left( \widehat{\mathbb{H}}_{\nx, \ny}(y)  -  \mathbb{H}_{\ny}(y) \right)^2   dy +  2 \int_{\R}  \left(\mathbb{H}^0_{\nx}(y)  -  \mathbb{H}_{\ny}(y)  \right)^2   dy \\
% & \le & 4 \int_{\R}  \left( \mathbb{H}^0_{\nx}(y)  -  \mathbb{H}_{\ny}(y) \right)^2  dy \\
% & \le &  8   \int_{\R}  \left(\mathbb{H}^0_{\nx}(y)  - H_0(y) \right)^2 dy +  8    \int_{\R}\left( \mathbb{H}_{\ny}(y)- H_0(y) \right)^2  dy 
% \end{eqnarray*}
\begin{eqnarray*}
  \int_{\R}  \left( \widehat{\mathbb{H}}_{\nx, \ny}(y)  -  \mathbb{H}^0_{\nx}(y) \right)^2  dy
  &\le  & 2 \int_{\R}  \left( \widehat{\mathbb{H}}_{\nx, \ny}(y)  -  \mathbb{H}_{\ny}(y) \right)^2   dy  \\
  & & + \,2 \int_{\R}  \left(\mathbb{H}^0_{\nx}(y)  -  \mathbb{H}_{\ny}(y)  \right)^2   dy 
\end{eqnarray*}
which is less than or equal to
\begin{align*}
  & 4 \int_{\R}  \left( \mathbb{H}^0_{\nx}(y)  -  \mathbb{H}_{\ny}(y) \right)^2  dy
  %% \\  &
    \le  8   \int_{\R}  \left(\mathbb{H}^0_{\nx}(y)  - H_0(y) \right)^2 dy +  8    \int_{\R}\left( \mathbb{H}_{\ny}(y)- H_0(y) \right)^2  dy 
\end{align*}
where
\begin{eqnarray*}
E[\left( \mathbb{H}_{\ny}(y)- H_0(y) \right)^2]  = \ny^{-1}  H_0(y)  (1-H_0(y)),
\end{eqnarray*}
and
\begin{eqnarray*}
  E[\left(\mathbb{H}^0_{\nx}(y)  - H_0(y) \right)^2]  & =  &   \frac{1}{\nx} \Var \Phi_\epsilon( y - m_0(X))\\
  &=  &  \inv{\nx} \int_{\RR} \lp \Phi_\epsilon(y - m_0(x)) - H_0(y) \rp^2 dF_0(x).
\end{eqnarray*}
The integrals $I_1$ and $I_2$ defined in \eqref{I1} and \eqref{I2}
 are finite, since it can be shown that $ \int_{-\infty}^0 H_0(y) dy < \infty$ and $\int_0^\infty (1-H_0(y)) dy < \infty$;  see Appendix \ref{appendix I1}. Hence, we obtain that
\begin{eqnarray}\label{rateHnxny}
\int_{\R}  E\left( \widehat{\mathbb{H}}_{\nx, \ny}(y)  -  \mathbb{H}^0_{\nx}(y) \right)^2  dy  = O((\nx\wedge\ny)^{-1}).
\end{eqnarray}
Combining this with \eqref{eq: Lnhat-Lnx} proves that for all sufficiently large $T$,
$$\int_{\R}  E\left(\widehat{\mathbb{L}}_{\nx, \ny}(w)   -  \mathbb L_{\nx}(w)  \right)^2  dw   \le   T^{2\beta}O((\nx\wedge\ny)^{-1})     +  \frac{4}{\pi T}.$$
For  $T  \sim (\nx\wedge\ny)^{1/(2\beta +1)}$ we get
\begin{eqnarray}\label{eq: prop22unequal}
\int_{\R}  E\left(\widehat{\mathbb{L}}_{\nx, \ny}(w)   -  \mathbb L_{\nx}(w)  \right)^2  dw \le   O\left( \frac{1}{(\nx\wedge\ny)^{1/(2\beta+1)}}  \right),
\end{eqnarray}
which proves an analogue of Proposition \ref{prop: DistanceG} in the case of possibly unequal sample sizes.

Next, we consider an analogue of Proposition \ref{prop: E1andE3andE2}. For this task, we denote by $\mathbb P^X_{\nx}$  and $P^X$  the empirical probability measure associated with $X_1, \ldots, X_{\nx}$ and the true corresponding probability measure. We consider the empirical processes  
\begin{eqnarray*}
\widehat{\mathbb{L}}_{\nx, \ny}(w)    -\widehat{L}^0_{\nx, \ny}(w)  = (\mathbb P^X_{\nx} - P^X)  \one_{\{ \widehat m_{\nx, \ny}(\cdot) \le w \}}
\end{eqnarray*}
and 
\begin{eqnarray*}
\mathbb{L}_{\nx}(w) - L_0(w)  = (\mathbb P^X_{\nx} - P^X)   \one_{\{ m_0(\cdot) \le w\}};
\end{eqnarray*}  
where 
$$\widehat L^0_{n_x,n_y}(w)  = \int \one_{\{\widehat m_{n_x,n_y}(x) \le w\}}dF_0(x) $$
and $L_0$ is defined in \eqref{eq: Ln}. Then, with similar arguments as in the proof of Proposition \ref{prop: E1andE3andE2}, we obtain that for all random variables $A < B $ (that may depend on $n$) it holds that 
\begin{eqnarray*}\label{E1}
\int_{A}^B  \left(\widehat{\mathbb{L}}_{\nx, \ny}(w)   - \widehat{L}^0_{\nx, \ny}(w)   \right)^2  dw   \leq (B-A)  O_P( 1/{\nx})\leq (B-A) O_P(1/ {(\nx\wedge\ny)}),
\end{eqnarray*}
\begin{eqnarray*}\label{E3}
 \int_{A}^B  \Big(\mathbb{L}_{\nx}(w) - L_0(w)\Big)^2  dw  \leq (B-A) O_P(1/ {\nx}) \leq (B-A) O_P(1/ {(\nx\wedge\ny)}),
\end{eqnarray*}
where $O_P( 1/{\nx})$ is uniform in $A$ and $B$. Moreover, if $B-A=O_P((\nx\wedge\ny)^{2\beta/(2\beta+1)})$, then
\begin{eqnarray*}\label{E2uneq}
\int_{A}^B  \left(\widehat L^0_{\nx, \ny}(w)  -  L_0(w)  \right)^2  dw =  O_P((\nx\wedge\ny)^{-1/(2\beta+1)}).
\end{eqnarray*}

Next, similar to Proposition \ref{prop: DistanceG2} we obtain that
for all random variables $A < B $ such that $B-A=O_P(1)$ it holds that
\begin{eqnarray*}
\int_{A}^B  \left(  F_0 \circ \widehat m^{-1}_{\nx, \ny}(w)  -   F_0 \circ m^{-1}_0(w)   \right)^2  dw & =&  
\int_{A}^B   \left(\widehat{L}^0_n(w)  -   L_0(w) \right)^2  dw\\
&= &  O_P((\nx\wedge\ny)^{-1/(2\beta +1)} ).
\end{eqnarray*}

Proposition \ref{prop: Conversion} still holds in the case of possibly different sample sizes with $\wh m_n$ replaced by $\hat m_{n_x,n_y}$ If \eqref{eq: AnBn} also holds, then $A_n$ and $B_n$ from Proposition \ref{prop: Conversion} are both of the order $O_P(1)$. In that case, the second assertion in Theorem \ref{theo: rateunequalsamplesizes} is an immediate consequence of Proposition \ref{prop: Conversion} combined with the preceding display. Hence, it remains to prove that \eqref{eq: AnBn} holds.
It follows from the definition of $\widehat{\mathbb{L}}_{\nx, \ny}$ and  $\mathbb L_{\nx}$  together with the Cauchy-Schwarz inequality and \eqref{eq: prop22unequal} that
\begin{eqnarray*}
\int_{\Vert m_0\Vert_\infty}^{2\Vert m_0\Vert_\infty}  \left(1- \frac{1}{\nx} \sum_{i=1}^{\nx}
   \one_{ \{\widehat m_{\nx, \ny}(X_i)  \le w \}}\right)dw
  &=&\int_{\Vert m_0\Vert_\infty}^{2\Vert m_0\Vert_\infty}  \left\vert \widehat{\mathbb{L}}_{\nx, \ny}(w)   -  \mathbb L_{\nx}(w)  \right\vert  dw\\
  % &\leq&\left(\Vert m_0\Vert_\infty \int_{\R}  \left(\widehat{\mathbb{L}}_{\nx, \ny}(w)   -  \mathbb L_{\nx}(w)  \right)^2  dw\right)^{1/2}\\
%   &\leq&
%          \sqrt{\left(\Vert m_0\Vert_\infty \int_{\R}  \left(\widehat{\mathbb{L}}_{\nx, \ny}(w)   -  \mathbb L_{\nx}(w)  \right)^2  dw\right)}\\
% &=& o_P(1).
\end{eqnarray*}
which is bounded above by
\begin{equation*}
    %           &\leq&\left(\Vert m_0\Vert_\infty \int_{\R}  \left(\widehat{\mathbb{L}}_{\nx, \ny}(w)   -  \mathbb L_{\nx}(w)  \right)^2  dw\right)^{1/2}\\
  \sqrt{\left(\Vert m_0\Vert_\infty \int_{\R}  \left(\widehat{\mathbb{L}}_{\nx, \ny}(w)   -  \mathbb L_{\nx}(w)  \right)^2  dw\right)}
  =  o_P(1).
\end{equation*}
On the other hand,
\begin{eqnarray*}
\int_{\Vert m_0\Vert_\infty}^{2\Vert m_0\Vert_\infty}  \left(1- \frac{1}{\nx} \sum_{i=1}^{\nx}
   \one_{ \{\widehat m_{\nx, \ny}(X_i)  \le w \}}\right)dw
&\geq&\Vert m_0\Vert_\infty \nx^{-1} \sum_{i=1}^{\nx}
   \one_{ \{\widehat m_{\nx, \ny}(X_i)  >2 \Vert m_0\Vert_\infty \}}
\end{eqnarray*}
and therefore,
\begin{eqnarray*}
 \sum_{i=1}^{\nx}
   \one_{ \{\widehat m_{\nx, \ny}(X_i)  >2 \Vert m_0\Vert_\infty \}}=o_P(\nx).
\end{eqnarray*}
By monotonicity of $ \widehat m_{\nx, \ny}$ this implies that
\begin{eqnarray*}
 \one_{\{ \widehat m_{\nx, \ny}(b)>2\Vert m_0\Vert_\infty \}}\sum_{i=1}^{\nx}
   \one_{ \{X_i> b \}}
&\leq& \sum_{i=1}^{\nx}
   \one_{ \{\widehat m_{\nx, \ny}(X_i)  >2 \Vert m_0\Vert_\infty \}}=o_P(\nx).
\end{eqnarray*}
By the law of large numbers, $\nx^{-1}\sum_{i=1}^{\nx}
   \one_{ \{X_i> b \}}$ converges in probability to $1-F_0(b)>0$ and therefore, it follows from the previous display that
$$ \one_{\{ \widehat m_{\nx, \ny}(b)>2\Vert m_0\Vert_\infty \}}=o_P(1).$$
This implies that
$$\lim_{n\to\infty}P( \widehat m_{\nx, \ny}(b)>2\Vert m_0\Vert_\infty )=0.$$
One can prove similarly that  
$$\lim_{n\to\infty}P( \widehat m_{\nx, \ny}(a)<-2\Vert m_0\Vert_\infty )=0.$$
This implies  \eqref{eq: AnBn} by monotonicity of $\widehat m_n$.  \\

{\it Proof of Claim 2.}
Now, we turn to the case where $\epsilon$ is supported on a finite set of points.  Without loss of generality we assume that $\Vert \epsilon \Vert_\infty \le 1$ with probability 1.  The same proof of Claim 4 in Proposition \ref{prop: exist} can be again used to show existence with probability 1 of an estimator $\wh m_{n_x, n_y}$ which is bounded in the sup-norm by $K_0+2$.  Now, the rate obtained above in (\ref{rateHnxny}) and the Cauchy-Schwarz inequality allow us to write that 
\begin{eqnarray*}
\int_{\mathbb R} \left \vert \widehat{\mathbb{H}}_{n_x, n_y}(y)  -  \mathbb{H}^0_{n_x}(y)  \right \vert  dy  & = &   \int_{-K_0-3}^{K_0+3} \left \vert \widehat{\mathbb{H}}_{n_x, n_y}(y)  -  \mathbb{H}^0_{n_x}(y)  \right \vert  dy  \\
& = &  O_P( (n_x \wedge n_y)^{-1/2}).
\end{eqnarray*}
Using the same arguments as in the proof of Theorem \ref{theo: nonoise} %
(in particular Proposition~\ref{prop:1} in Appendix~\ref{sec:wass-dist-lemm}) %%
%% (in particular Proposition 2 of \cite{Meis})
this implies that $W_1(\widehat{\mathbb{H}}_{n_x, n_y}, \mathbb H^0_{n_x})  =  O_P((n_x \wedge n_y)^{-1/2})$.  Since $\widehat{\mathbb H}_{n_x, n_y}(y)  = \int_{\mathbb R} \widehat{\mathbb{L}}_{n_x, n_y}(y-t)  d\Phi_\epsilon(t)$ and $\mathbb H^0_{n_x}(y)  =  \int_{\mathbb R} \mathbb{L}_{n_x}(y-t)  d\Phi_\epsilon(t)$, we can use
again
%% Proposition 3 of \cite{Meis}
Proposition~\ref{prop:2} in Appendix~\ref{sec:wass-dist-lemm}
to find a constant $D > 0$ depending only on the distribution of $\epsilon$ and $K_0$ such that
\begin{eqnarray*}
W_1(\widehat{\mathbb{L}}_{n_x, n_y}, \mathbb{L}_{n_x})  \le D  \  W_1(\widehat{\mathbb{H}}_{n_x, n_y}, \mathbb H^0_{n_x}).
\end{eqnarray*} 
implying that 
\begin{eqnarray*}
\int_{\mathbb R}  \left \vert \widehat{\mathbb{L}}_{n_x, n_y}(w)  - \mathbb{L}_{n_x}(w) \right \vert dw =  \int_{-K_0-2}^{K_0+2}  \left \vert \widehat{\mathbb{L}}_{n_x, n_y}(w)  - \mathbb{L}_{n_x}(w) \right \vert dw  =  O_P( (n_x \wedge n_y)^{-1/2}) 
\end{eqnarray*}
using again Proposition 2 of \cite{Meis}.  Therefore, 
\begin{eqnarray*}
\int_{-K_0-2}^{K_0+2}  \vert \widehat L^0_{n_x, n_y}(w)  -  L_0(w)  \vert dw  &= & \int_{-K_0-2}^{K_0+2}  \vert F_0 \circ \widehat m^{-1}_{n_x, n_y}(w)  -   F_0 \circ m^{-1}_0(w) \vert dw   \\
& = &   O_P( (n_x \wedge n_y)^{-1/2}) 
\end{eqnarray*}
and hence
\begin{align*}
  \int_0^1 \vert \widehat m_{n_x, n_y}(x)  -  m_0(x) \vert dF_0(x)
  &  \le     \int_{\widehat m_{n_x, n_y}(0) \wedge m_0(0)}^{\widehat m_{n_x, n_y}(1) \vee m_0(1)}  \vert F_0 \circ \widehat m^{-1}_{n_x, n_y}(w)  -   F_0 \circ m^{-1}_0(w) \vert dw \\
  & \le   \int_{-K_0-2}^{K_0+2}  \vert F_0 \circ \widehat m^{-1}_{n_x, n_y}(w)  -   F_0 \circ m^{-1}_0(w) \vert dw  \\
  & =   O_P( (n_x \wedge n_y)^{-1/2}),  
\end{align*}
% \begin{eqnarray*}
% \int_0^1 \vert \widehat m_{n_x, n_y}(x)  -  m_0(x) \vert dF_0(x) &  \le   &  \int_{\widehat m_{n_x, n_y}(0) \wedge m_0(0)}^{\widehat m_{n_x, n_y}(1) \vee m_0(1)}  \vert F_0 \circ \widehat m^{-1}_{n_x, n_y}(w)  -   F_0 \circ m^{-1}_0(w) \vert dw \\
% & \le &  \int_{-K_0-2}^{K_0+2}  \vert F_0 \circ \widehat m^{-1}_{n_x, n_y}(w)  -   F_0 \circ m^{-1}_0(w) \vert dw  \\
% & =  & O_P( (n_x \wedge n_y)^{-1/2}),
% \end{eqnarray*}
which completes the proof of Theorem \ref{theo: rateunequalsamplesizes}.    \hfill $\Box$

\medskip
\medskip

\noindent \textbf{Proof of Theorem \ref{theo: unifcons}.}  Theorem \ref{theo: rate}, Theorem \ref{rem:1}, or Theorem \ref{theo: nonoise} imply that %%for any subsequence 
along any subsequence
of $\{ n \}_{n=1}^\infty$   we can find a further subsequence  $\{ n_i \}_{i=1}^\infty$
%%(which we call again $\{ n_i \}_{i=1}^\infty$ for convenience)
such that with probability $1$  
\begin{eqnarray*}
\lim_{i  \to \infty}  \int_{a}^{b}  \vert \widehat{m}_{n_i}(x) -  m_0(x)  \vert  dF_0(x)  =0.
\end{eqnarray*}
Call this event $E_{1}$.
For notational ease let $\wh{m}_{n_i} \equiv \wh{m}_i$.
Further, by Corollary~2.3 of
\cite{Stein:2005vn} %% Stein and Shakarchi
(stated for Lebesgue measure but the proof does not rely on the Lebesgue measure at all and the result holds for a general measure space),
there exists another subsubsequence
(which we call again $\{ n_i \}_{i=1}^\infty$ for convenience)
such that $\wh{m}_i$ converges $F_0$-a.e.\ to $m_0$.

  Recall $C$ is a compact in the interior of $ [a,b] \cap \mc{S}_0$.
  Then since $m_0$ is continuous on $C$, $\wh{m}_i$ converges on a dense subset of
  %% $C$
  $[a,b] \cap \mc{S}_0$
  to $m_0$ (for any points $\alpha,\beta \in [a,b] \cap \mc{S}_0$, the $F_0$ measure of $(\alpha,\beta)$ is given by $F_0(\beta) - F_0(\alpha)$, so if $F_0(\beta) - F_0(\alpha) > 0$ then  there must be a point of convergence in $(\alpha,\beta)$, since convergence is $F_0$-a.e.), and both $\wh{m}_i$ and $m_0$ are monotone, it follows that $\wh{m}_i$ converges pointwise on all of $C$ to $m_0$ (one can sandwich any point in $C$, including its boundary points, by sequences of points above and below at which $\wh{m}_i$ converges to $m_0$ and appeal to monotonicity).

  And, we can strengthen the convergence to uniform convergence on $C$, since $m_0$ and $\wh{m}_i$ are monotone and $m_0$ is continuous on $C$,
  again, for any
  %% $\omega \in E_{1} \cap E_{2}$.
  $\omega \in E_{1}$.
  The elementary proof is as follows.  Fix $\epsilon > 0$. By uniform continuity of $m_0$ on (the compact) $[a,b]$, there exists $\delta>0$ such that $|m_0(x)-m_0(y)|\leq\epsilon$ for all $x,y$ such that $|x-y|\leq\delta$.  Cover $C$ with the set of open
  (in $C$'s subspace topology)
  sets $A(x,\delta):= \{ y \in C: |x-y| < \delta/2\}$ for all $x \in C$, and extract by compactness a finite subcover of these open  sets, $A(x_i,\delta)$, for $i=1,\ldots, N$.
  Let $x_{i1} := \inf {A}(x_i,\delta)$ and $x_{i2}:= \sup {A}(x_i, \delta)$.  Since $C$ is closed, $x_{ij} \in C$, $j=1,2$.
  By pointwise convergence $\wh{m}_n(x_{ij})$ is within $\epsilon $ of $m_0(x_{ij})$, $j=1,2$, for all $i$ and $n$ large enough.  Now, take any $x \in C$; let $j$ be such that $x_{j1} \le x \le x_{j2}$.  Using monotonicity of $\wh{m}_n$ and of $m_0$, %and uniform continuity of $m_0$ on (the compact) $[a,b]$,
  we have for $n$ large enough that $\wh{m}_n(x) \le \wh{m}_n(x_{j2}) \le m_0(x_{j2}) + \epsilon \le m_0(x) + 2 \epsilon$.  Similarly, using $x_{j1}$, we have $\wh{m}_n(x) \ge m_0(x) - 2\epsilon$, which proves the uniform convergence on $C$.

  Hence, for all $\omega\in E_1$, for all subsequences of $\{ n_i \}_{i=1}^\infty$, we can find a further subsequence
  (depending on $\omega$)
  along which $\sup_{x \in C } \vert \widehat{m}_{i}(x) - m_0(x) \vert $ converges to zero. Hence, for all $\omega\in E_1$, this supremum distance converges to zero along the subsequence $\{ n_i \}_{i=1}^\infty$. Therefore,
  \begin{equation*}
    P\left(\lim_{i \to \infty} \sup_{x \in C }  \vert \widehat{m}_{i}(x) -  m_0(x)  \vert = 0\right) =   1.
  \end{equation*}
  Since for any subsequence of $\{n\}_{n=1}^\infty$,
  we have almost sure convergence along a subsubsequence, it follows that along the original sequence $\{ n\}$
\begin{eqnarray*}
\sup_{x \in C }  \vert \widehat{m}_n(x) -  m_0(x)  \vert \to 0, \ \textrm{in probability}
\end{eqnarray*}
which completes the proof.   \hfill $\Box$

\medskip

\subsection{Proofs for Section \ref{sec:morerates}}
%Below, expectations and probabilities are taken with respect to $(\epsilon,X)$, a pair of mutually independent random variables that are independent of the observations. 
We begin with  an auxiliary lemma. Below, we use the same notation $\widehat{\mathbb{H}}_n$ and $ \mathbb{H}^0_n$ as in \eqref{eq: Hn}. We recall that $H_0$ denotes the distribution function of $Y$. We use the same notation $P_n^X$ and $P^X$ as in the proof of Proposition \ref{prop: E1andE3andE2} and denote, moreover, by $E^X$ the expectation corresponding to $P^X$.

\begin{lemma}\label{lem: parametric} 
   Let Assumptions~\ref{assm:A0:basic:m0-Xi} and~\ref{assm:A1:m0-sup} hold. Assume that $\epsilon$ is supported on  $[-1;1]$ and independent of $X$. For any solution $\wh m_n\in\cal M$ to \eqref{Minim2} that is  bounded in the sup-norm by $K_0+2$, we then have
\begin{eqnarray}\label{eq: lemparametric}\notag
\int_\RR  \Big \{ H_0(y)  - E^X[\Phi_\epsilon (y  -  \widehat m_n(X))] \Big \}^2   dy
&\leq& 2\int_\RR \Big \{ \mathbb{H}_n(y)  -  \widehat{\mathbb{H}}_n(y) \Big \}^2   dy
+O_P(n^{-1})\\
&=&O_P(n^{-1}).
\end{eqnarray}
Moreover, 
\begin{eqnarray}\label{eq: parametric2}
\left\{\int\int _\RR  \left( \Phi_\epsilon(y-m_0(x))  - \Phi_\epsilon(y  -  \wh m_n(x))\right)dydF_0(x)   \right\}^2=O_P(n^{-1}).
\end{eqnarray}
\end{lemma} 

\medskip

\begin{proof}[Proof of Lemma \ref{lem: parametric}.]
We can write
\begin{eqnarray*}
n^{-1} \sum_{i=1}^n \Phi_\epsilon(y-\wh m_n(X)) - E^X\left[\Phi_\epsilon(y  -  \wh m_n(X)) \right] =  \left(\mathbb P^X_n   - P^X \right) \Phi_\epsilon(y-  \wh m_n(\cdot)).
\end{eqnarray*}
For a fixed $y$, consider the class of non-decreasing functions
\begin{eqnarray*}
\Big \{x \mapsto -\Phi(y  -  m(x)), x \in [0,1], m \ \textrm{monotone non-decreasing and $\Vert  m \Vert_\infty \le K_0+2$}\Big\}. 
\end{eqnarray*}
Using entropy arguments as in the proof of Proposition \ref{prop: E1andE3andE2} by replacing the class $\mathcal I$ with the one defined above we can show that for all random variables $A<B$ such that $B-A = O_P(1)$, it holds that
\begin{eqnarray*}
\int_A^B  \Big \{ n^{-1}  \sum_{i=1}^n \Phi_\epsilon(y  -  \wh m_n(X_i))   -  E^X\left[\Phi_\epsilon(y  -  \wh m_n(X)) \right]  \Big \}^2   dy=O_P(n^{-1}).
\end{eqnarray*}
Since $\epsilon$ is supported on $[-1,1]$, and $\wh m_n$ is assumed to be bounded in the sup-norm by $K_0+2$, the above integral over $[A,B]$ with $A=-K_0-3$ and $B=K_0+3$, is equal to the same integral over the whole real line $\mathbb{R}$. Hence, we get
\begin{eqnarray}\label{eq: parametric}
\int_\RR  \Big \{ n^{-1}  \sum_{i=1}^n \Phi_\epsilon(y  -  \wh m_n(X_i))   - E^X\left[\Phi_\epsilon(y  -  \wh m_n(X)) \right]  \Big \}^2   dy=O_P(n^{-1}).
\end{eqnarray}
On the other hand, with $H_0$ the distribution function of $Y$ we have
\begin{eqnarray*}
E\int_{\mathbb R}(\mathbb{H}_n(y)-H_0(y))^2 dy&=&\int_{\mathbb R}E(\mathbb{H}_n(y)-H_0(y))^2 dy\\
&=&n^{-1}\int_{\mathbb R}H_0(y)(1-H_0(y)) dy
\end{eqnarray*}
since $n\mathbb{H}_n(y)$ is a binomial random variable with parameter $n$ and probability of success $H_0(y)$. The integral on the right hand side is finite since $Y$ has bounded support (included in $[-K_0-1,K_0+1]$) and therefore,
\begin{eqnarray*}
\int_{\mathbb R}(\mathbb{H}_n(y)-H_0(y))^2 dy&=&O_P(n^{-1}).
\end{eqnarray*}
Combining this with \eqref{eq: parametric} together with the fact that for all real numbers $a$ and $b$, we have $(a+b)^2\leq 2a^2+2b^2$, we conclude that 
\begin{eqnarray*}
&&\int_\RR  \Big \{ H_0(y)  - E^X\left[\Phi_\epsilon(y  -  \wh m_n(X)) \right] \Big \}^2   dy\\
&&\qquad
\leq 2\int_\RR \Big \{ \mathbb{H}_n(y)  -  n^{-1}  \sum_{i=1}^n \Phi_\epsilon(y  -  \widehat m_n(X_i)) \Big \}^2   dy
+O_P(n^{-1}).
\end{eqnarray*}
The first inequality follows by definition of $\widehat{\mathbb{H}}_n$. 

Now, it follows from the definition of $\widehat m_n$ that 
\begin{eqnarray*}
 \int_{\mathbb{R}}  \Big \{ \mathbb{H}_n(y)  -  \widehat{\mathbb{H}}_n(y ) \Big \}^2   dy&\leq&  \int_{\mathbb{R}}  \Big \{ \mathbb{H}_n(y)  -  {\mathbb{H}}_n^0(y )
 \Big \}^2   dy\\
&\leq& 2 \int_{\mathbb{R}}  \Big \{ \mathbb{H}_n(y)  -   H_0(y) \Big \}^2   dy +2 \int_{\mathbb{R}}  \Big \{ \mathbb{H}_n^0(y)  -   H_0(y) \Big \}^2 dy. 
\end{eqnarray*}
Since $\epsilon$ is independent of $X$,  both $n \mathbb{H}_n(y)$ and $n {\mathbb{H}}_n^0(y )$ are the average of $n$  i.i.d. bounded random variables with mean $H_0(y)$ and therefore,
\begin{equation*}
E \int_{\mathbb{R}}  \Big \{ \mathbb{H}_n(y)  -  \widehat{\mathbb{H}}_n(y ) \Big \}^2   dy \leq  2n^{-1} \int_{\mathbb{R}} Var(1_{\{Y\leq y\}}) dy +2n^{-1} \int_{\mathbb{R}} Var(\Phi_\epsilon(y-m_0(X)) dy. 
\end{equation*}
Both integrals on the right-hand side are finite since the integrands are bounded and equal to zero for all $y\leq -K_0-1$ and all $y\geq K_0+1$. Hence,
\begin{eqnarray*}
\int_{\mathbb{R}}  \Big \{ \mathbb{H}_n(y)  -  \widehat{\mathbb{H}}_n(y ) \Big \}^2   dy= O_P(n^{-1}).
\end{eqnarray*}
This completes the proof of \eqref{eq: lemparametric}.

Now, with $F_0$ the distribution function of $X$, it follows from the assumption that $\widehat m_n$ (as well as $m_0$) is bounded in sup-norm by $K_0+2$ that
\begin{eqnarray*}
&&\int_\RR  \Big \{ H_0(y)  - E^X\left[\Phi_\epsilon(y  -  \wh m_n(X)) \right] \Big \}^2   dy\\
&&\qquad= \int_\RR  \Big \{ \int \left(\Phi_\epsilon(y-m_0(x))  - \Phi_\epsilon(y  -  \widehat m_n(x))\right) dF_0(x)\Big \}^2   dy\\
&&\qquad= \int_{-K_0-3}^{K_0+3}  \Big \{ \int \left(\Phi_\epsilon(y-m_0(x))  - \Phi_\epsilon(y  -  \widehat m_n(x))\right) dF_0(x)\Big \}^2   dy
\end{eqnarray*}
so it follows from the Jensen inequality and the Fubini theorem that
\begin{eqnarray*}
&&\int_\RR  \Big \{ H_0(y)  - E^X\left[\Phi_\epsilon(y  -  \wh m_n(X)) \right]  \Big \}^2   dy\\
&&\qquad\geq \frac{1}{2K_0+6}\left\{\int_\RR \int  \left( \Phi_\epsilon(y-m_0(x))  - \Phi_\epsilon(y  -  \widehat m_n(x))\right)dF_0(x)   dy\right\}^2\\
&&\qquad= \frac{1}{2K_0+6}\left\{\int\int _\RR  \left( \Phi_\epsilon(y-m_0(x))  - \Phi_\epsilon(y  -  \widehat m_n(x))\right)dydF_0(x)   \right\}^2.
\end{eqnarray*}
Combining this with \eqref{eq: lemparametric}  yields \eqref{eq: parametric2} and completes the proof of the lemma.
\end{proof}

\medskip

\par \noindent \textbf{Proof of  Theorem \ref{theo: parametricrate}.}  \
%Under Assumption~\ref{assm:A1:m0-sup}, we have $|Y|\leq K_0+1$. Since $\epsilon$ is supported on $[-1,1]$, this implies that for any $m\in\cal M$, $\mathbb{M}_n(m) $ remains unchanged if $m$ is replaced by its truncated version $\overline m$ defined by $\overline m(x)=K_0+2$ if $m(x)\geq K_0+2$, $\overline m(x)=-K_0-2$ if $m(x)\leq -K_0-2$, and $\overline m(x)=m(x)$ otherwise. Hence, under Assumptions~\ref{assm:A0:basic:m0-Xi} and~\ref{assm:A1:m0-sup}, it follows from Proposition  \ref{prop: exist} that there exists at least a solution to \eqref{Minim2} that is piecewise constant, right-continuous and bounded is sup-norm by $K_0+2$, for all $n \ge 1$, which proves the first claim in the theorem. In the sequel, we denote by $\widehat m_n$ such a solution.

  {\it Case of error with finite support:} Suppose that $\epsilon$ is supported on a finite set such that all the points of the support belong to $[-1,1]$ and let $k \ge 2$. For any solution $\wh m_n$ to (\ref{Minim2}) which is bounded by $K_0+2$ (which exists with probability $1$ in view of Proposition \ref{prop: exist}), it follows  from Theorem \ref{theo: nonoise} and the identity $a^k -  b^k = (a-b) (a^{k-1} + a^{k-2} b + \ldots +  b^{k-1})$  that
\begin{eqnarray*}
\left \vert \int_0^1 \left(\widehat m^k_n(x)  - m^k_0(x) \right) dF_0(x)  \right\vert & \le  & k (K_0 +2)^{k-1}\left(\int_0^1 \vert \wh m_n(x)  - m_0(x)  \vert dF_0(x)\right) \\
& = &   O_P(1/\sqrt n).
\end{eqnarray*}
To show \eqref{eq:moments}, it is enough to show that 
\begin{eqnarray}\label{emppmoment}
\int_0^1 \widehat m_n^k(x) d(\mathbb F_n(x)  - F_0(x))  \equiv   \frac{1}{\sqrt n} \ \mathbb G_n \widehat m^k_n 
\end{eqnarray}
is $O_P(1/\sqrt n)$. Define
\begin{eqnarray*}
\mathcal M_{c}  =   \{ m:  m \ \textrm{non-decreasing on $[0,1]$ and} \ \Vert m \Vert_\infty \le c\},
\end{eqnarray*}
and 
\begin{eqnarray*}
\mathcal G_{k, c}  =   \{ m^k:  m   \in \mathcal{M}_c \},
\end{eqnarray*}
for $c> 0$.   If $k=1$, then  $\widehat m_n \in \mathcal{M}_{K_0+2}$ and it follows from similar arguments as in the proof of Proposition \ref{prop: E1andE3andE2} that there exists $M > 0$ depending only on $K_0$ such that $\Vert \mathbb G_n \Vert_{\mathcal M_{K_0+2}}  = O_P(1) $ which implies from (\ref{emppmoment}) that  \eqref{eq:moments} holds true.  Now, suppose that $k \ge 2$. Using the decomposition $m^k  =  m^k \mathds{1}_{m \ge 0}  +   m^k \mathds{1}_{m < 0}$  we see that for any $m \in \mathcal{M}_{K_0+2}$,   $m^k$ is either the sum or the difference (depending on whether $k$ is odd or even) of two functions in $ \mathcal{M}_{(K_0+2)^k}$.
%=   f   -  g$  where $(f, g) \in \mathcal{M}_{(K_0+2)^k}$, while  $m^k =  f +  g$  with $(f, g)$ an element of the same class when $k$ is odd.  
Using Proposition \ref{prop: preserv-2}  with $(\lambda_1, \lambda_2) = (1,-1)$ or $(\lambda_1, \lambda_2)  = (1,1)$, it follows that for any discrete measure $Q$ and $\delta > 0$
\begin{eqnarray*}
\log N(\delta, \mathcal{G}_{k, K_0+2}, L_2(Q))   \le  2 \log N(\delta/2, \mathcal{M}_{(K_0+2)^k}, L_2(Q)).
\end{eqnarray*}
Using \eqref{RelNumb} and similar arguments as in the proof of Proposition \ref{prop: E1andE3andE2}, we conclude again that  $\Vert \mathbb G_n \Vert_{\mathcal G_{k, K_0+2}}  = O_P(1)$.
Together with (\ref{emppmoment}), it follows that  \eqref{eq:moments} holds true.

\medskip

 {\it Case of uniform error:}
Now, we turn to the case where $\epsilon \sim \mathcal{U}[-1,1]$.  To compute the integral on the left-hand side of \eqref{eq: parametric2}, we distinguish between several cases. We recall that, because $\Phi_\epsilon$ is the distribution function of a uniformly distributed random variable over $[-1,1]$, $\Phi_\epsilon(t)$ is equal to 0 if $t<-1$, to 1 if $t>1$, and to $(t+1)/2$ otherwise. For all $x$ such that $m_0(x)\leq \widehat m_n(x)\leq m_0(x)+2$ we have
\begin{eqnarray*}
&&\Phi_\epsilon(y-m_0(x))  - \Phi_\epsilon(y  -  \wh m_n(x))\\
&&\qquad =\begin{cases}
(y-m_0(x)+1)/2&\mbox{ if }m_0(x)-1\leq y\leq\widehat m_n(x)-1\\
(\widehat m_n(x)-m_0(x))/2&\mbox{ if }\widehat m_n(x)-1\leq y\leq m_0(x)+1\\
1-(y-\widehat m_n(x)+1)/2&\mbox{ if }m_0(x)+1\leq y\leq\widehat m_n(x)+1\\
0&\mbox{ otherwise }.\\
\end{cases}
\end{eqnarray*}
Hence, for all $x$ such that $m_0(x)\leq \widehat m_n(x)\leq m_0(x)+2$ we have
\begin{eqnarray*}
&&\int_\RR\left(\Phi_\epsilon(y-m_0(x))  - \Phi_\epsilon(y  -  \wh m_n(x))\right)dy\\
&&\qquad =\frac12(\widehat m_n(x)-m_0(x))^2+\frac 12(\widehat m_n(x)-m_0(x))(m_0(x)-\widehat m_n(x)+2)\\
&&\qquad =(\widehat m_n(x)-m_0(x)).
\end{eqnarray*}
Similarly, for all $x$ such that $m_0(x)+2< \widehat m_n(x)$ we have
\begin{eqnarray*}
&&\Phi_\epsilon(y-m_0(x))  - \Phi_\epsilon(y  -  \wh m_n(x))\\
&&\qquad =\begin{cases}
(y-m_0(x)+1)/2&\mbox{ if }m_0(x)-1\leq y\leq m_0(x)+1\\
1&\mbox{ if }m_0(x)+1\leq y\leq\widehat m_n(x)-1\\
1-(y-\widehat m_n(x)+1)/2&\mbox{ if }\widehat m_n(x)-1\leq y\leq\widehat m_n(x)+1\\
0&\mbox{ otherwise,}\\
\end{cases}
\end{eqnarray*}
which implies that
\begin{eqnarray*}
\int_\RR\left(\Phi_\epsilon(y-m_0(x))  - \Phi_\epsilon(y  -  \wh m_n(x))\right)dy =(\widehat m_n(x)-m_0(x)).
\end{eqnarray*}
Hence,
\begin{eqnarray*}
&&\int1_{\{m_0(x)\leq \widehat m_n(x)\}}\int _\RR  \left( \Phi_\epsilon(y-m_0(x))  - \Phi_\epsilon(y  -  \widehat m_n(x))\right)dydF_0(x) \\
&&\qquad=\int1_{\{m_0(x)\leq \widehat m_n(x)\}}\left(\widehat m_n(x)-m_0(x)\right)dF_0(x)
\end{eqnarray*}
Similarly,
\begin{eqnarray*}
&&\int1_{\{\widehat m_n(x)\leq m_0(x)\}}\int _\RR  \left( \Phi_\epsilon(y-m_0(x))  - \Phi_\epsilon(y  -  \widehat m_n(x))\right)dydF_0(x) \\
&&\qquad=\int1_{\{ \widehat m_n(x)\leq m_0(x)\}}\left(\widehat m_n(x)-m_0(x)\right)dF_0(x)
\end{eqnarray*}
Combining the two previous displays yields
\begin{eqnarray*}
&&\int\int _\RR  \left( \Phi_\epsilon(y-m_0(x))  - \Phi_\epsilon(y  -  \widehat m_n(x))\right)dydF_0(x) \\
&&\qquad=\int\left(\widehat m_n(x)-m_0(x)\right)dF_0(x).
\end{eqnarray*}
Now, from \eqref{eq: parametric2} it follows that
\begin{eqnarray*}
\left \vert \int_0^1 \left(\widehat m_n(x)-m_0(x)\right)dF_0(x) \right \vert = O_P(1/\sqrt n).
\end{eqnarray*}
As we already know that $\int_0^1 \widehat m_n(x)  d(\mathbb F_n(x) - F_0(x))  = O_P(1/\sqrt n)$, the second claim of the proposition now follows. \hfill $\Box$

\section{Proof  of Proposition \ref{Fenchel}}

%% \par \noindent \textbf{Proof of Proposition \ref{Fenchel}.}  \
To make the notation less cumbersome, we write in the following $Z_i$ for the $i$-th order statistic $X_{(i)}$.  Suppose that $\widehat m_n$ takes at least two distinct values and let $ 1\le j < j' \le n$  be such that $\widehat m_n$ is constant on $[Z_j , Z_{j'})$, where  $Z_j  < Z_{j'}$ are  two successive jump points of $\widehat m_n$. Consider the function $m_\delta$ which is right-continuous, constant between the order statistics $Z_1, \ldots, Z_n$, and  
\begin{eqnarray*}
m_\delta(Z_i) = \left \{
\begin{array}{ll} 
 \widehat  m_n(Z_i)   +\delta, \  \  i \in \{j, \ldots, j'-1 \}  \\
 \widehat  m_n(Z_i),\  \  \  \  \  \  \   \textrm{otherwise}.
\end{array}
\right.
\end{eqnarray*}
Then, the function $m_\delta$ as defined above belongs to $\mathcal{M}$, provided that $\vert \delta \vert $ is small enough. It follows from the definition \eqref{Minim2}  that $\mathbb{M}_n(m_\delta) \ge \mathbb{M}_n(\widehat m_n)$. Using Taylor expansion of $\Phi_\epsilon$  with the integral remainder term we can write that for $i \in \{j, \ldots, j'-1\}$
\begin{eqnarray*}
  \Phi_\epsilon(y - \widehat m_n(Z_i) -\delta)  =   \Phi_\epsilon(y - \widehat m_n(Z_i))  - \delta  f_\epsilon(y - \widehat m_n(Z_i)  ) +  R_{\delta, i}(y)
\end{eqnarray*}
where  the remainder term $R_{\delta, i}$  is given below. Hence, 
\begin{eqnarray}\label{eq: signMn}
  0  & \leq & \mathbb{M}_n(m_\delta) -  \mathbb{M}_n(\widehat m_n)    \\ \notag
     & = &   \int_{\RR}  \Big \{  \HH_n(y)  -\frac{1}{n} \sum_{i \notin \{j, \ldots, j'-1\}}  \Phi_\epsilon(y - \widehat m_n(Z_i)) - \frac{1}{n} \sum_{i=j}^{j'-1}  \Phi_\epsilon(y - \widehat m_n(Z_i) -\delta)   \Big \}^2  dy \\ \notag
     &&   - \int_{\RR}  \Big \{  \HH_n(y)  -\frac{1}{n} \sum_{i=1}^n \Phi_\epsilon(y - \widehat m_n(X_i) )  \Big \}^2 dy \\ \notag
     & = &   \int_{\RR}  \Big \{  \HH_n(y)  -\frac{1}{n} \sum_{i \notin \{j, \ldots, j'-1\}}  \Phi_\epsilon(y - \widehat m_n(Z_i)) \\ \notag
     &&  \hspace{1.4cm} - \frac{1}{n} \sum_{i=j}^{j'-1}  \Phi_\epsilon(y - \widehat m_n(Z_i) )  + \delta  \frac{1}{n}\sum_{i=j}^{j'-1}f_\epsilon(y - \widehat m_n(Z_i))  - \frac{1}{n} \sum_{i=j}^{j'-1} R_{\delta, i} (y) \Big \}^2  dy \\ \notag
     &&   - \int_{\RR}  \Big \{  \HH_n(y)  -\frac{1}{n} \sum_{i=1}^n \Phi_\epsilon(y - \widehat m_n(X_i) )  \Big \}^2 dy
     %% \\
     % & = &  \int_{\RR}  \Big \{  \HH_n(y)  -\frac{1}{n} \sum_{i =1}^n  \Phi_\epsilon(y - \widehat m_n(X_i))  + \delta  \frac{1}{n}\sum_{i=j}^{j'-1}f_\epsilon(y - \widehat m_n(Z_i))  - \frac{1}{n} \sum_{i=j}^{j'-1} R_{\delta, i} (y) \Big \}^2  dy \\
     % &&- \int_{\RR}  \Big \{  \HH_n(y)  -\frac{1}{n} \sum_{i=1}^n \Phi_\epsilon(y - \widehat m_n(X_i) )  \Big \}^2 dy \\
     % & = &  \frac{2}{n} \int_{\RR}  \Big (  \HH_n(y)  -\frac{1}{n} \sum_{i=1}^n \Phi_\epsilon(y - \widehat m_n(X_i) ) \Big ) \Big(\delta \sum_{i=j}^{j'-1}  f_\epsilon(y - \widehat m_n(Z_i))  - \sum_{i=j}^{j'-1}  R_{\delta, i}(y)  \Big)   dy \\
     % && +  \frac{1}{n^2}\int \Big(\delta \sum_{i=j}^{j'-1}  f_\epsilon(y - \widehat m_n(Z_i))  - \sum_{i=j}^{j'-1}  R_{\delta, i}(y)  \Big)^2  dy\\
     % & = &  \frac{2}{n}  \int_{\RR}  \Big (  \HH_n(y)  -\frac{1}{n} \sum_{i=1}^n \Phi_\epsilon(y - \widehat m_n(X_i) ) \Big ) \Big(\delta (j'-j) f_\epsilon(y - \widehat m_n(Z_j))  - \sum_{i=j}^{j'-1}  R_{\delta, i}(y)  \Big)   dy\\
     % &&  +  \frac{1}{n^2}\int \Big( \delta (j'-j) f_\epsilon(y - \widehat m_n(Z_j))  - \sum_{i=j}^{j'-1}  R_{\delta, i}(y)  \Big)^2  dy
\end{eqnarray}
which equals
\begin{align*}
  \MoveEqLeft \int_{\RR}  \Big \{  \HH_n(y)  -\frac{1}{n} \sum_{i =1}^n  \Phi_\epsilon(y - \widehat m_n(X_i))  + \delta  \frac{1}{n}\sum_{i=j}^{j'-1}f_\epsilon(y - \widehat m_n(Z_i))  - \frac{1}{n} \sum_{i=j}^{j'-1} R_{\delta, i} (y) \Big \}^2  dy \\
  & \quad - \int_{\RR}  \Big \{  \HH_n(y)  -\frac{1}{n} \sum_{i=1}^n \Phi_\epsilon(y - \widehat m_n(X_i) )  \Big \}^2 dy
  % \\
% & =
  % \frac{2}{n} \int_{\RR}  \Big (  \HH_n(y)  -\frac{1}{n} \sum_{i=1}^n \Phi_\epsilon(y - \widehat m_n(X_i) ) \Big ) \Big(\delta \sum_{i=j}^{j'-1}  f_\epsilon(y - \widehat m_n(Z_i))  - \sum_{i=j}^{j'-1}  R_{\delta, i}(y)  \Big)   dy \\
% & \quad +  \frac{1}{n^2}\int \Big(\delta \sum_{i=j}^{j'-1}  f_\epsilon(y - \widehat m_n(Z_i))  - \sum_{i=j}^{j'-1}  R_{\delta, i}(y)  \Big)^2  dy\\
% & =   \frac{2}{n}  \int_{\RR}  \Big (  \HH_n(y)  -\frac{1}{n} \sum_{i=1}^n \Phi_\epsilon(y - \widehat m_n(X_i) ) \Big ) \Big(\delta (j'-j) f_\epsilon(y - \widehat m_n(Z_j))  - \sum_{i=j}^{j'-1}  R_{\delta, i}(y)  \Big)   dy\\
% &\quad   +  \frac{1}{n^2}\int \Big( \delta (j'-j) f_\epsilon(y - \widehat m_n(Z_j))  - \sum_{i=j}^{j'-1}  R_{\delta, i}(y)  \Big)^2  dy  
\end{align*}
which equals
\begin{align*}
  \MoveEqLeft \frac{2}{n} \int_{\RR}  \Big (  \HH_n(y)  -\frac{1}{n} \sum_{i=1}^n \Phi_\epsilon(y - \widehat m_n(X_i) ) \Big ) \Big(\delta \sum_{i=j}^{j'-1}  f_\epsilon(y - \widehat m_n(Z_i))  - \sum_{i=j}^{j'-1}  R_{\delta, i}(y)  \Big)   dy \\
  & \quad +  \frac{1}{n^2}\int_{\RR} \Big(\delta \sum_{i=j}^{j'-1}  f_\epsilon(y - \widehat m_n(Z_i))  - \sum_{i=j}^{j'-1}  R_{\delta, i}(y)  \Big)^2  dy
  % \\
  % & =
  % \frac{2}{n}  \int_{\RR}  \Big (  \HH_n(y)  -\frac{1}{n} \sum_{i=1}^n \Phi_\epsilon(y - \widehat m_n(X_i) ) \Big ) \Big(\delta (j'-j) f_\epsilon(y - \widehat m_n(Z_j))  - \sum_{i=j}^{j'-1}  R_{\delta, i}(y)  \Big)   dy\\
  % &\quad   +  \frac{1}{n^2}\int \Big( \delta (j'-j) f_\epsilon(y - \widehat m_n(Z_j))  - \sum_{i=j}^{j'-1}  R_{\delta, i}(y)  \Big)^2  dy  
\end{align*}
which equals
\begin{align*}
\MoveEqLeft  \frac{2}{n}  \int_{\RR}  \Big (  \HH_n(y)  -\frac{1}{n} \sum_{i=1}^n \Phi_\epsilon(y - \widehat m_n(X_i) ) \Big ) \Big(\delta (j'-j) f_\epsilon(y - \widehat m_n(Z_j))  - \sum_{i=j}^{j'-1}  R_{\delta, i}(y)  \Big)   dy\\
  &\quad   +  \frac{1}{n^2}\int_{\RR} \Big( \delta (j'-j) f_\epsilon(y - \widehat m_n(Z_j))  - \sum_{i=j}^{j'-1}  R_{\delta, i}(y)  \Big)^2  dy  
\end{align*}
where for $i =j, \ldots, j'-1$
\begin{eqnarray*}
R_{\delta, i}(y) & = &  \int_{ y - \widehat m_n(Z_i)}^{ y - \widehat m_n(Z_i) - \delta}  f'_{\epsilon}(t) \cdot ( y - \widehat m_n(Z_i)  - \delta - t)  dt \\
& = & - \int_{0}^{\delta}  f'_{\epsilon}(y - \widehat m_n(Z_i) - u) \cdot (u - \delta) \ du, \  \text{letting $u = y - \widehat m_n(Z_i) -t $}   \\
                 &  = & - \delta^2  \int_{0}^1  f'_{\epsilon}( y - \widehat m_n(Z_i) -\delta v) \  (v-1) \ dv , \ \text{letting $v = u/\delta$}.
\end{eqnarray*}
Thus,
$(\mathbb{M}_n(m_\delta) -  \mathbb{M}_n(\widehat m_n))/\delta $ equals
\begin{align*}
  \MoveEqLeft
  \frac{2}{n}  (j'-j) \int_{\RR}  \Big (  \HH_n(y)  -\frac{1}{n} \sum_{i=1}^n \Phi_\epsilon(y - \widehat m_n(X_i) ) \Big )    f_\epsilon(y - \widehat m_n(Z_j)) dy \\   
  & \ -  \ \frac{2}{n}  \int_{\RR}  \Big (  \HH_n(y)  -\frac{1}{n} \sum_{i=1}^n \Phi_\epsilon(y - \widehat m_n(X_i) ) \Big )   \frac{1}{\delta} \sum_{i=j}^{j'-1}  R_{\delta, i}(y)   dy \\
  &  \ + \  \frac{1}{n^2}\int_\RR \bigg\{ \delta (j'-j)^2 f_\epsilon(y - \widehat m_n(Z_j))^2  - 2 (j'-j) f_\epsilon(y - \widehat m_n(Z_j) \sum_{i=j}^{j'-1}  R_{\delta, i}(y)  \\
  & \qquad \qquad \qquad   +    \frac{\Big(\sum_{i=j}^{j'-1}  R_{\delta, i}(y)\Big)^2}{\delta}  \bigg\}  dy 
\end{align*}
which equals
% \begin{align*}
%   \MoveEqLeft 
%   \frac{2}{n}  (j'-j)  \int_{\RR}  \Big (  \HH_n(y)  -\frac{1}{n} \sum_{i=1}^n \Phi_\epsilon(y - \widehat m_n(X_i) ) \Big )   f_\epsilon(y - \widehat m_n(Z_j)) dy   \\
%   & \  \  +  \ \frac{2 \delta}{n}  \int_{\RR}  \Big (  \HH_n(y)  -\frac{1}{n} \sum_{i=1}^n \Phi_\epsilon(y - \widehat m_n(X_i) ) \Big ) \cdot  \Big(  \sum_{i=j}^{j'-1}  \int_0^1   f'_{\epsilon}( y - \widehat m_n(Z_i) - \delta v)  \  (v-1) \   dv  \Big)  dy \\   
%   &  \  +  \  \frac{\delta (j'-j)^2}{n^2}   \int_{\RR}  f_\epsilon(y - \widehat m_n(Z_j))^2 dy  \\
%   & \ +  \ \frac{2 \delta  (j'-j)}{n^2}  \int_\RR f_\epsilon(y - \widehat m_n(Z_j) ) \sum_{i=j}^{j'-1} \int_{0}^1    f'_{\epsilon}( y - \widehat m_n(Z_i) -\delta v) \  (v-1) \ dv  dy  \\
%   & \ +   \ \frac{\delta^3}{n^2}  \int_{\RR} \bigg(  \int_0^1 \sum_{i=j}^{j'-1}   f'_{\epsilon}( y - \widehat m_n(Z_i) -\delta v) \  (v-1) \ dv \bigg)^2 dy.
% \end{align*}
\begin{align*}
  \MoveEqLeft 
  \frac{2}{n}  (j'-j)  \int_{\RR}  \Big (  \HH_n(y)  -\frac{1}{n} \sum_{i=1}^n \Phi_\epsilon(y - \widehat m_n(X_i) ) \Big )   f_\epsilon(y - \widehat m_n(Z_j)) dy   \\
  & \  \  +  \ \frac{2 \delta}{n}  \int_{\RR}  \Big (  \HH_n(y)  -\frac{1}{n} \sum_{i=1}^n \Phi_\epsilon(y - \widehat m_n(X_i) ) \Big ) \times \\
  & \qquad \qquad \Big(  \sum_{i=j}^{j'-1}  \int_0^1   f'_{\epsilon}( y - \widehat m_n(Z_i) - \delta v)  \  (v-1) \   dv  \Big)  dy \\   
  &  \  +  \  \frac{\delta (j'-j)^2}{n^2}   \int_{\RR}  f_\epsilon(y - \widehat m_n(Z_j))^2 dy  \\
  & \ +  \ \frac{2 \delta  (j'-j)}{n^2}  \int_\RR f_\epsilon(y - \widehat m_n(Z_j) ) \sum_{i=j}^{j'-1} \int_{0}^1    f'_{\epsilon}( y - \widehat m_n(Z_i) -\delta v) \  (v-1) \ dv  dy  \\
  & \ +   \ \frac{\delta^3}{n^2}  \int_{\RR} \bigg(  \int_0^1 \sum_{i=j}^{j'-1}   f'_{\epsilon}( y - \widehat m_n(Z_i) -\delta v) \  (v-1) \ dv \bigg)^2 dy.
\end{align*}
We show below that each term on the right hand side that depends on $\delta$ takes the form of $\delta^i, i =1,2,3$  times a finite integral, so that it tends to zero as $\delta\to 0$.
From Assumption \ref{assm:a6}, it follows that
\begin{eqnarray*}
&&\left \vert  \Big (  \HH_n(y)  -\frac{1}{n} \sum_{i=1}^n \Phi_\epsilon(y - \widehat m_n(X_i) ) \Big ) \cdot  \Big(  \sum_{i=j}^{j'-1}  \int_0^1   f'_{\epsilon}( y - \widehat m_n(Z_i) - \delta v)  \  (v-1) \   dv  \Big)  \right \vert \\
&&  \le   D  (j'-j) \ \Big \vert \HH_n(y)  -\frac{1}{n} \sum_{i=1}^n \Phi_\epsilon(y - \widehat m_n(X_i) ) \Big \vert   
\end{eqnarray*}
% all the integrands depending on $\delta$ are continuous functions of this variable. %Hence, the limit of the %\comf{second integral} \comc{I think the word "second integral" is amgiguous : do you mean the integral on the last line (which can be seen as the third one), or the integral that is split on lines -1 and -2 in the above display?} 
%as $\delta \to 0$ is equal to $0$.  
which can be shown to be integrable  on $\RR$ using the property of $\Phi_\epsilon$ in  (\ref{Integrable}).   Also, Assumption \ref{assm:a6} implies that there exists $D' > 0$ such that $\sup_{t \in \RR}  f_\epsilon(t)  \le D'$. Then,
\begin{eqnarray*}
\int_{\RR}  f_\epsilon(y - \widehat m_n(Z_j))^2 dy  \le D' \int_{\RR}  f_\epsilon(y - \widehat m_n(Z_j)) dy = D',
\end{eqnarray*}
and by Fubini's Theorem  
\begin{eqnarray*}
&& \int_{\RR} \sum_{i=j}^{j'-1} \int_{0}^1   f_\epsilon(y - \widehat m_n(Z_j) )  \vert f'_{\epsilon}( y - \widehat m_n(Z_i) -\delta v)\vert \  (v-1) \ dv  dy \\
 && =   \sum_{i=j}^{j'-1} \int_{0}^1  \left( \int_\RR f_\epsilon(y - \widehat m_n(Z_j) )  \vert f'_{\epsilon}( y - \widehat m_n(Z_i) -\delta v)\vert  dy  \right)  (v-1)  dv \\
& &\le D \sum_{i=j}^{j'-1} \int_{0}^1 (v-1)  dv  =  \frac{D  (j'-j)}{2}
\end{eqnarray*}
using Assumption \ref{assm:a6} and the fact that $f_\epsilon$ is a density. Finally, using again Assumption \ref{assm:a6} and Fubini's Theorem we have 
\begin{eqnarray*}
  && \int_{\RR}  \bigg(  \int_0^1 \sum_{i=j}^{j'-1}   f'_{\epsilon}( y - \widehat m_n(Z_i) -\delta v)  \  (v-1) \ dv \bigg)^2 dy \\
  && \le   \frac{D(j'- j)}{2} \int_{\RR}   \int_0^1 \sum_{i=j}^{j'-1}  \vert f'_{\epsilon}( y - \widehat m_n(Z_i) -\delta v)\vert  \  (1-v) \ dv  dy
  %% \\
  % && =   \frac{D(j'- j)}{2} \int_0^1   \left(\sum_{i=j}^{j'-1}  \int_{\RR} \vert f'_{\epsilon}( y - \widehat m_n(Z_i) -\delta v) \vert dy \right) (1-v) dv 
  % \\
  % && =  \frac{D(j'- j)^2}{4}  \int_{\RR} \vert f'_{\epsilon}( t) \vert dt < \infty, \ \ \textrm{by Assumption $A6$}.
\end{eqnarray*}
which equals
\begin{align*}
   \MoveEqLeft
  \frac{D(j'- j)}{2} \int_0^1   \left(\sum_{i=j}^{j'-1}  \int_{\RR} \vert f'_{\epsilon}( y - \widehat m_n(Z_i) -\delta v) \vert dy \right) (1-v) dv 
   \\
   &
      =  \frac{D(j'- j)^2}{4}  \int_{\RR} \vert f'_{\epsilon}( t) \vert dt < \infty,
      %% \ \ \textrm{by Assumption $A6$}.
\end{align*}
by Assumption \ref{assm:a6}.
By using \eqref{eq: signMn} and distinguishing between the cases of positive and negative  values of $\delta$ it follows that  
\begin{eqnarray*}
  0&=&\lim_{\delta \to 0} \frac{\mathbb{M}_n(m_\delta) -  \mathbb{M}_n(\widehat m_n)}{\delta} \\
   &=&   \frac 2n (j'-j) \int_{\RR}  \Big (  \HH_n(y)  -\frac{1}{n} \sum_{i=1}^n \Phi_\epsilon(y - \widehat m_n(X_i) ) \Big )   f_\epsilon(y - \widehat m_n(Z_j)) dy 
\end{eqnarray*}
and therefore,
\begin{eqnarray*}
0= \int_{\RR}  \Big( \HH_n(y)   - n^{-1}  \sum_{i=1}^n  \  \Phi_\epsilon(y - \widehat m_n(X_i))\Big)    f_\epsilon(y - \widehat m_k)   dy 
\end{eqnarray*}
where $\widehat m_k  =  \widehat m_n(Z_j) =  \ldots =  \widehat m_n(Z_{j'-1})$. This is precisely the condition given in (\ref{Cond1}). 

In the case $\widehat m_n$ takes a unique value, a similar reasoning give the same result, characterizing $\widehat m_k$ for $k=1$. 

%\comC{the penultimate equality above is useless to obtain the last equality, at least in my opinion. However, this is not precisely the equality in \eqref{AltCond2}, so how is obtained the final equality \eqref{AltCond2}? It seems to me that the sum in \eqref{AltCond2} should be over $i\in\{j,\dots,j'-1\}$. What happens if $\widehat m_n$ takes only a unique value?}

%\remCRD{I don't get the second of the above equalities. I get the following.

 % \begin{align}
    %0 & =  2^{-1} n \lim_{ \delta \to 0} \delta^{-1} ( \MM_n(m_{\delta}) - \MM_n(\wh m_n)) \\
      %& = \int_{\RR} \lp \HH_n(y) - n^{-1} \sum_{i=1}^n \Phi_\epsilon( y - \wh m_n(X_i)) \rp
        %\sum_{i=j}^{j'-1} f_\epsilon (y - \wh m_n(X_{(i)})) dy
  %\end{align}

  %Thus this is what I think should replace \eqref{Cond2}.  And, as per Cecile's comment, I am then unclear on \eqref{AltCond2}.
%
 % (Another question: are the $X_i$'s assumed to be ordered?  I don't think we made that assumption, so I guess we need to use order statistics here, right?)
  
%}
Now,  the alternative expression in (\ref{AltCond2}) follows from the fact that for any $a \in \RR$
\begin{equation*}
  \int_{\RR}  \HH_n(y)  f_\epsilon(y -  a) dy =    \frac{1}{n} \sum_{i=1}^n \int^{\infty}_{Y_i} f_\epsilon(y-a) dy
   =  1- \frac{1}{n} \sum_{i=1}^n   \Phi_\epsilon(Y_i-a)
\end{equation*}
% \begin{eqnarray*}
% \int_{\RR}  \HH_n(y)  f_\epsilon(y -  a) dy &=  &   \frac{1}{n} \sum_{i=1}^n \int^{\infty}_{Y_i} f_\epsilon(y-a) dy\\
% & = & 1- \frac{1}{n} \sum_{i=1}^n   \Phi_\epsilon(Y_i-a)
% \end{eqnarray*}
which completes the proof. \hfill $\Box$

\begin{figure}[h]
%%\begin{figure}
  \centering
  \includegraphics[scale=.5]{}   
  \caption[Gamma density plots.]{Plots of gamma densities with varying shape parameter.}
  \label{fig:gamma-density}
\end{figure}

%%% \bibliographystyle{ims} %% jmlr2e sets bibliographystyle to plainnat
%% \bibliographystyle{abbrvnat} %% jmlr2e sets bibliographystyle to plainnat
%%  \bibpunct{(}{)}{;}{a}{,}{,}
\bibliography{shuffreg2}

\end{document}